\documentclass{article}
\pdfoutput=1
\usepackage{amssymb}


\usepackage{etex}
\usepackage{amsmath}
\usepackage{tikz}
\usepackage{tkz-graph}
\usepackage{bussproofs}
\usepackage{graphicx}


\usetikzlibrary{arrows,automata}
\usetikzlibrary{arrows,shapes,positioning}
\usetikzlibrary{tikzmark,decorations.pathreplacing,calc}
\newtheorem{theorem}{Theorem}

\newtheorem{axiom}[theorem]{Axiom}

\newtheorem{claim}[theorem]{Claim}
\newtheorem{conclusion}[theorem]{Conclusion}

\newtheorem{conjecture}[theorem]{Conjecture}
\newtheorem{corollary}[theorem]{Corollary}

\newtheorem{definition}[theorem]{Definition}
\newtheorem{example}[theorem]{Example}

\newtheorem{exercise}[theorem]{Exercise}
\newtheorem{lemma}[theorem]{Lemma}

\newtheorem{proposition}[theorem]{Proposition}
\newtheorem{remark}[theorem]{Remark}

\newtheorem{partial solution}[theorem]{Partial Solution}

\newenvironment{proof}[1][Proof]{\textbf{#1.} }{\ \rule{0.5em}{0.5em}}
\newcommand{\imply}{\rightarrow}

\typeout{TCILATEX Macros for Scientific Word 5.0 <13 Feb 2003>.}
\typeout{NOTICE:  This macro file is NOT proprietary and may be 
freely copied and distributed.}
\makeatletter

\ifx\pdfoutput\relax\let\pdfoutput=\undefined\fi
\newcount\msipdfoutput
\ifx\pdfoutput\undefined
\else
 \ifcase\pdfoutput
 \else 
    \msipdfoutput=1
    \ifx\paperwidth\undefined
    \else
      \ifdim\paperheight=0pt\relax
      \else
        \pdfpageheight\paperheight
      \fi
      \ifdim\paperwidth=0pt\relax
      \else
        \pdfpagewidth\paperwidth
      \fi
    \fi
  \fi  
\fi

%

%
\newcount\@hour\newcount\@minute\chardef\@x10\chardef\@xv60
\def\tcitime{
\def\@time{%
  \@minute\time\@hour\@minute\divide\@hour\@xv
  \ifnum\@hour<\@x 0\fi\the\@hour:%
  \multiply\@hour\@xv\advance\@minute-\@hour
  \ifnum\@minute<\@x 0\fi\the\@minute
  }}%


\def\x@hyperref#1#2#3{%
   \catcode`\~ = 12
   \catcode`\$ = 12
   \catcode`\_ = 12
   \catcode`\# = 12
   \catcode`\& = 12
   \y@hyperref{#1}{#2}{#3}%
}

\def\y@hyperref#1#2#3#4{%
   #2\ref{#4}#3
   \catcode`\~ = 13
   \catcode`\$ = 3
   \catcode`\_ = 8
   \catcode`\# = 6
   \catcode`\& = 4
}

\@ifundefined{hyperref}{\let\hyperref\x@hyperref}{}
\@ifundefined{msihyperref}{\let\msihyperref\x@hyperref}{}

\@ifundefined{qExtProgCall}{\def\qExtProgCall#1#2#3#4#5#6{\relax}}{}
%
%
%
%
\def\QCTOpt[#1]#2{%
  \def\QCTOptB{#1}
  \def\QCTOptA{#2}
}
\def\QCTNOpt#1{%
  \def\QCTOptA{#1}
  \let\QCTOptB\empty
}
\def\Qct{%
  \@ifnextchar[{%
    \QCTOpt}{\QCTNOpt}
}
\def\QCBOpt[#1]#2{%
  \def\QCBOptB{#1}%
  \def\QCBOptA{#2}%
}
\def\QCBNOpt#1{%
  \def\QCBOptA{#1}%
  \let\QCBOptB\empty
}
\def\Qcb{%
  \@ifnextchar[{%
    \QCBOpt}{\QCBNOpt}%
}
\def\PrepCapArgs{%
  \ifx\QCBOptA\empty
    \ifx\QCTOptA\empty
      {}%
    \else
      \ifx\QCTOptB\empty
        {\QCTOptA}%
      \else
        [\QCTOptB]{\QCTOptA}%
      \fi
    \fi
  \else
    \ifx\QCBOptA\empty
      {}%
    \else
      \ifx\QCBOptB\empty
        {\QCBOptA}%
      \else
        [\QCBOptB]{\QCBOptA}%
      \fi
    \fi
  \fi
}
\newcount\GRAPHICSTYPE
\GRAPHICSTYPE=\z@
\def\GRAPHICSPS#1{%
 \ifcase\GRAPHICSTYPE
   \special{ps: #1}%
 \or
   \special{language "PS", include "#1"}%
 \fi
}%
%
%
%

\def\graffile#1#2#3#4{%
    \bgroup
	   \@inlabelfalse
       \leavevmode
       \@ifundefined{bbl@deactivate}{\def~{\string~}}{\activesoff}%
        \raise -#4 \BOXTHEFRAME{%
           \hbox to #2{\raise #3\hbox to #2{\null #1\hfil}}}%
    \egroup
}%
%
\def\draftbox#1#2#3#4{%
 \leavevmode\raise -#4 \hbox{%
  \frame{\rlap{\protect\tiny #1}\hbox to #2%
   {\vrule height#3 width\z@ depth\z@\hfil}%
  }%
 }%
}%
\newcount\@msidraft
\@msidraft=\z@
\let\nographics=\@msidraft
\newif\ifwasdraft
\wasdraftfalse

\def\GRAPHIC#1#2#3#4#5{%
   \ifnum\@msidraft=\@ne\draftbox{#2}{#3}{#4}{#5}%
   \else\graffile{#1}{#3}{#4}{#5}%
   \fi
}
\def\addtoLaTeXparams#1{%
    \edef\LaTeXparams{\LaTeXparams #1}}%
%

\newif\ifBoxFrame \BoxFramefalse
\newif\ifOverFrame \OverFramefalse
\newif\ifUnderFrame \UnderFramefalse

\def\BOXTHEFRAME#1{%
   \hbox{%
      \ifBoxFrame
         \frame{#1}%
      \else
         {#1}%
      \fi
   }%
}

\def\doFRAMEparams#1{\BoxFramefalse\OverFramefalse\UnderFramefalse\readFRAMEparams#1\end}%
\def\readFRAMEparams#1{%
 \ifx#1\end%
  \let\next=\relax
  \else
  \ifx#1i\dispkind=\z@\fi
  \ifx#1d\dispkind=\@ne\fi
  \ifx#1f\dispkind=\tw@\fi
  \ifx#1t\addtoLaTeXparams{t}\fi
  \ifx#1b\addtoLaTeXparams{b}\fi
  \ifx#1p\addtoLaTeXparams{p}\fi
  \ifx#1h\addtoLaTeXparams{h}\fi
  \ifx#1X\BoxFrametrue\fi
  \ifx#1O\OverFrametrue\fi
  \ifx#1U\UnderFrametrue\fi
  \ifx#1w
    \ifnum\@msidraft=1\wasdrafttrue\else\wasdraftfalse\fi
    \@msidraft=\@ne
  \fi
  \let\next=\readFRAMEparams
  \fi
 \next
 }%
%

\def\IFRAME#1#2#3#4#5#6{%
      \bgroup
      \let\QCTOptA\empty
      \let\QCTOptB\empty
      \let\QCBOptA\empty
      \let\QCBOptB\empty
      #6%
      \parindent=0pt
      \leftskip=0pt
      \rightskip=0pt
      \setbox0=\hbox{\QCBOptA}%
      \@tempdima=#1\relax
      \ifOverFrame
          \typeout{This is not implemented yet}%
          \show\HELP
      \else
         \ifdim\wd0>\@tempdima
            \advance\@tempdima by \@tempdima
            \ifdim\wd0 >\@tempdima
               \setbox1 =\vbox{%
                  \unskip\hbox to \@tempdima{\hfill\GRAPHIC{#5}{#4}{#1}{#2}{#3}\hfill}%
                  \unskip\hbox to \@tempdima{\parbox[b]{\@tempdima}{\QCBOptA}}%
               }%
               \wd1=\@tempdima
            \else
               \textwidth=\wd0
               \setbox1 =\vbox{%
                 \noindent\hbox to \wd0{\hfill\GRAPHIC{#5}{#4}{#1}{#2}{#3}\hfill}\\%
                 \noindent\hbox{\QCBOptA}%
               }%
               \wd1=\wd0
            \fi
         \else
            \ifdim\wd0>0pt
              \hsize=\@tempdima
              \setbox1=\vbox{%
                \unskip\GRAPHIC{#5}{#4}{#1}{#2}{0pt}%
                \break
                \unskip\hbox to \@tempdima{\hfill \QCBOptA\hfill}%
              }%
              \wd1=\@tempdima
           \else
              \hsize=\@tempdima
              \setbox1=\vbox{%
                \unskip\GRAPHIC{#5}{#4}{#1}{#2}{0pt}%
              }%
              \wd1=\@tempdima
           \fi
         \fi
         \@tempdimb=\ht1
         \advance\@tempdimb by -#2
         \advance\@tempdimb by #3
         \leavevmode
         \raise -\@tempdimb \hbox{\box1}%
      \fi
      \egroup%
}%
%
\def\DFRAME#1#2#3#4#5{%
  \vspace\topsep
  \hfil\break
  \bgroup
     \leftskip\@flushglue
	 \rightskip\@flushglue
	 \parindent\z@
	 \parfillskip\z@skip
     \let\QCTOptA\empty
     \let\QCTOptB\empty
     \let\QCBOptA\empty
     \let\QCBOptB\empty
	 \vbox\bgroup
        \ifOverFrame 
           #5\QCTOptA\par
        \fi
        \GRAPHIC{#4}{#3}{#1}{#2}{\z@}%
        \ifUnderFrame 
           \break#5\QCBOptA
        \fi
	 \egroup
  \egroup
  \vspace\topsep
  \break
}%
%
\def\FFRAME#1#2#3#4#5#6#7{%
  \@ifundefined{floatstyle}
    {
     \begin{figure}[#1]%
    }
    {
	 \ifx#1h
      \begin{figure}[H]%
	 \else
      \begin{figure}[#1]%
	 \fi
	}
  \let\QCTOptA\empty
  \let\QCTOptB\empty
  \let\QCBOptA\empty
  \let\QCBOptB\empty
  \ifOverFrame
    #4
    \ifx\QCTOptA\empty
    \else
      \ifx\QCTOptB\empty
        \caption{\QCTOptA}%
      \else
        \caption[\QCTOptB]{\QCTOptA}%
      \fi
    \fi
    \ifUnderFrame\else
      \label{#5}%
    \fi
  \else
    \UnderFrametrue%
  \fi
  \begin{center}\GRAPHIC{#7}{#6}{#2}{#3}{\z@}\end{center}%
  \ifUnderFrame
    #4
    \ifx\QCBOptA\empty
      \caption{}%
    \else
      \ifx\QCBOptB\empty
        \caption{\QCBOptA}%
      \else
        \caption[\QCBOptB]{\QCBOptA}%
      \fi
    \fi
    \label{#5}%
  \fi
  \end{figure}%
 }%
%
%
%
%
%
\newcount\dispkind%

\def\makeactives{
  \catcode`\"=\active
  \catcode`\;=\active
  \catcode`\:=\active
  \catcode`\'=\active
  \catcode`\~=\active
}
\bgroup
   \makeactives
   \gdef\activesoff{%
      \def"{\string"}%
      \def;{\string;}%
      \def:{\string:}%
      \def'{\string'}%
      \def~{\string~}%
    }
\egroup

\def\FRAME#1#2#3#4#5#6#7#8{%
 \bgroup
 \ifnum\@msidraft=\@ne
   \wasdrafttrue
 \else
   \wasdraftfalse%
 \fi
 \def\LaTeXparams{}%
 \dispkind=\z@
 \def\LaTeXparams{}%
 \doFRAMEparams{#1}%
 \ifnum\dispkind=\z@\IFRAME{#2}{#3}{#4}{#7}{#8}{#5}\else
  \ifnum\dispkind=\@ne\DFRAME{#2}{#3}{#7}{#8}{#5}\else
   \ifnum\dispkind=\tw@
    \edef\@tempa{\noexpand\FFRAME{\LaTeXparams}}%
    \@tempa{#2}{#3}{#5}{#6}{#7}{#8}%
    \fi
   \fi
  \fi
  \ifwasdraft\@msidraft=1\else\@msidraft=0\fi{}%
  \egroup
 }%
%

\def\TEXUX#1{"texux"}

%
%
%
%
%
%
%
%
%
%

%
\long\def\QQQ#1#2{%
     \long\expandafter\def\csname#1\endcsname{#2}}%
\@ifundefined{QTP}{\def\QTP#1{}}{}
\@ifundefined{QEXCLUDE}{\def\QEXCLUDE#1{}}{}
\@ifundefined{Qlb}{}{}
\@ifundefined{Qlt}{}{}
\long\def\QQA#1#2{}%
\def\QTR#1#2{{\csname#1\endcsname {#2}}}%
\def\EXPAND#1[#2]#3{}%
\def\NOEXPAND#1[#2]#3{}%
\def\LaTeXparent#1{}%
\def\ChildStyles#1{}%
\def\ChildDefaults#1{}%
\def\QTagDef#1#2#3{}%

\@ifundefined{correctchoice}{}{}
\@ifundefined{HTML}{\def\HTML#1{\relax}}{}
\@ifundefined{TCIIcon}{\def\TCIIcon#1#2#3#4{\relax}}{}
\if@compatibility
  \typeout{Not defining UNICODE  U or CustomNote commands for LaTeX 2.09.}
\else
  \providecommand{\UNICODE}[2][]{\protect\rule{.1in}{.1in}}
  \providecommand{\U}[1]{\protect\rule{.1in}{.1in}}
  
\fi

\@ifundefined{lambdabar}{
      
   }{}

%
\@ifundefined{StyleEditBeginDoc}{}{}
%
\def\QQfnmark#1{\footnotemark}

%
%
\@ifundefined{TCIMAKEINDEX}{}{\makeindex}%
%
\@ifundefined{abstract}{%
 \def\abstract{%
  \if@twocolumn
   \section*{Abstract (Not appropriate in this style!)}%
   \else \small 
   \begin{center}{\bf Abstract\vspace{-.5em}\vspace{\z@}}\end{center}%
   \quotation 
   \fi
  }%
 }{%
 }%
\@ifundefined{endabstract}{\def\endabstract
  {\if@twocolumn\else\endquotation\fi}}{}%
\@ifundefined{maketitle}{\def\maketitle#1{}}{}%
\@ifundefined{affiliation}{\def\affiliation#1{}}{}%
\@ifundefined{proof}{}{}%
\@ifundefined{endproof}{}{}%
\@ifundefined{newfield}{\def\newfield#1#2{}}{}%
\@ifundefined{chapter}{\def\chapter#1{\par(Chapter head:)#1\par }%
 \newcount\c@chapter}{}%
\@ifundefined{part}{\def\part#1{\par(Part head:)#1\par }}{}%
\@ifundefined{section}{\def\section#1{\par(Section head:)#1\par }}{}%
\@ifundefined{subsection}{\def\subsection#1%
 {\par(Subsection head:)#1\par }}{}%
\@ifundefined{subsubsection}{\def\subsubsection#1%
 {\par(Subsubsection head:)#1\par }}{}%
\@ifundefined{paragraph}{\def\paragraph#1%
 {\par(Subsubsubsection head:)#1\par }}{}%
\@ifundefined{subparagraph}{\def\subparagraph#1%
 {\par(Subsubsubsubsection head:)#1\par }}{}%
\@ifundefined{therefore}{}{}%
\@ifundefined{backepsilon}{}{}%
\@ifundefined{yen}{}{}%
\@ifundefined{registered}{%
   \def\registered{\relax\ifmmode{}\r@gistered
                    \else$\m@th\r@gistered$\fi}%
 \def\r@gistered{^{\ooalign
  {\hfil\raise.07ex\hbox{$\scriptstyle\rm\text{R}$}\hfil\crcr
  \mathhexbox20D}}}}{}%
\@ifundefined{Eth}{}{}%
\@ifundefined{eth}{}{}%
\@ifundefined{Thorn}{}{}%
\@ifundefined{thorn}{}{}%
%
\@ifundefined{degree}{}{}%
%
\newdimen\theight
\@ifundefined{Column}{\def\Column{%
 \vadjust{\setbox\z@=\hbox{\scriptsize\quad\quad tcol}%
  \theight=\ht\z@\advance\theight by \dp\z@\advance\theight by \lineskip
  \kern -\theight \vbox to \theight{%
   \rightline{\rlap{\box\z@}}%
   \vss
   }%
  }%
 }}{}%
\@ifundefined{qed}{\def\qed{%
 \ifhmode\unskip\nobreak\fi\ifmmode\ifinner\else\hskip5\p@\fi\fi
 \hbox{\hskip5\p@\vrule width4\p@ height6\p@ depth1.5\p@\hskip\p@}%
 }}{}%
\@ifundefined{cents}{}{}%
\@ifundefined{tciLaplace}{}{}%
\@ifundefined{tciFourier}{}{}%
\@ifundefined{textcurrency}{}{}%
\@ifundefined{texteuro}{}{}%
\@ifundefined{euro}{}{}%
\@ifundefined{textfranc}{}{}%
\@ifundefined{textlira}{}{}%
\@ifundefined{textpeseta}{}{}%
\@ifundefined{miss}{\def\miss{\hbox{\vrule height2\p@ width 2\p@ depth\z@}}}{}%
\@ifundefined{vvert}{}{}
\@ifundefined{tcol}{\def\tcol#1{{\baselineskip=6\p@ \vcenter{#1}} \Column}}{}%
\@ifundefined{dB}{}{}
\@ifundefined{mB}{}{}
\@ifundefined{nB}{}{}
\@ifundefined{note}{}{}%
\def\newfmtname{LaTeX2e}
%
\ifx\fmtname\newfmtname
  \DeclareOldFontCommand{\rm}{\normalfont\rmfamily}{\mathrm}
  \DeclareOldFontCommand{\sf}{\normalfont\sffamily}{\mathsf}
  \DeclareOldFontCommand{\tt}{\normalfont\ttfamily}{\mathtt}
  \DeclareOldFontCommand{\bf}{\normalfont\bfseries}{\mathbf}
  \DeclareOldFontCommand{\it}{\normalfont\itshape}{\mathit}
  \DeclareOldFontCommand{\sl}{\normalfont\slshape}{\@nomath\sl}
  \DeclareOldFontCommand{\sc}{\normalfont\scshape}{\@nomath\sc}
\fi

%

\def\alpha{{\Greekmath 010B}}%
\def\beta{{\Greekmath 010C}}%
\def\gamma{{\Greekmath 010D}}%
\def\delta{{\Greekmath 010E}}%
\def\epsilon{{\Greekmath 010F}}%
\def\zeta{{\Greekmath 0110}}%
\def\eta{{\Greekmath 0111}}%
\def\theta{{\Greekmath 0112}}%
\def\iota{{\Greekmath 0113}}%
\def\kappa{{\Greekmath 0114}}%
\def\lambda{{\Greekmath 0115}}%
\def\mu{{\Greekmath 0116}}%
\def\nu{{\Greekmath 0117}}%
\def\xi{{\Greekmath 0118}}%
\def\pi{{\Greekmath 0119}}%
\def\rho{{\Greekmath 011A}}%
\def\sigma{{\Greekmath 011B}}%
\def\tau{{\Greekmath 011C}}%
\def\upsilon{{\Greekmath 011D}}%
\def\phi{{\Greekmath 011E}}%
\def\chi{{\Greekmath 011F}}%
\def\psi{{\Greekmath 0120}}%
\def\omega{{\Greekmath 0121}}%
\def\varepsilon{{\Greekmath 0122}}%
\def\vartheta{{\Greekmath 0123}}%
\def\varpi{{\Greekmath 0124}}%
\def\varrho{{\Greekmath 0125}}%
\def\varsigma{{\Greekmath 0126}}%
\def\varphi{{\Greekmath 0127}}%

\def\nabla{{\Greekmath 0272}}
\def\FindBoldGroup{%
   {\setbox0=\hbox{$\mathbf{x\global\edef\theboldgroup{\the\mathgroup}}$}}%
}

\def\Greekmath#1#2#3#4{%
    \if@compatibility
        \ifnum\mathgroup=\symbold
           \mathchoice{\mbox{\boldmath$\displaystyle\mathchar"#1#2#3#4$}}%
                      {\mbox{\boldmath$\textstyle\mathchar"#1#2#3#4$}}%
                      {\mbox{\boldmath$\scriptstyle\mathchar"#1#2#3#4$}}%
                      {\mbox{\boldmath$\scriptscriptstyle\mathchar"#1#2#3#4$}}%
        \else
           \mathchar"#1#2#3#4%
        \fi 
    \else 
        \FindBoldGroup
        \ifnum\mathgroup=\theboldgroup 
           \mathchoice{\mbox{\boldmath$\displaystyle\mathchar"#1#2#3#4$}}%
                      {\mbox{\boldmath$\textstyle\mathchar"#1#2#3#4$}}%
                      {\mbox{\boldmath$\scriptstyle\mathchar"#1#2#3#4$}}%
                      {\mbox{\boldmath$\scriptscriptstyle\mathchar"#1#2#3#4$}}%
        \else
           \mathchar"#1#2#3#4%
        \fi     	    
	  \fi}

\newif\ifGreekBold  \GreekBoldfalse
\let\SAVEPBF=\pbf
\def\pbf{\GreekBoldtrue\SAVEPBF}%

\@ifundefined{theorem}{\newtheorem{theorem}{Theorem}}{}
\@ifundefined{lemma}{\newtheorem{lemma}[theorem]{Lemma}}{}
\@ifundefined{corollary}{\newtheorem{corollary}[theorem]{Corollary}}{}
\@ifundefined{conjecture}{}{}
\@ifundefined{proposition}{}{}
\@ifundefined{axiom}{}{}
\@ifundefined{remark}{}{}
\@ifundefined{example}{\newtheorem{example}{Example}}{}
\@ifundefined{exercise}{}{}
\@ifundefined{definition}{\newtheorem{definition}{Definition}}{}

\@ifundefined{mathletters}{%
  \newcounter{equationnumber}  
  \def\mathletters{%
     \addtocounter{equation}{1}
     \edef\@currentlabel{\theequation}%
     \setcounter{equationnumber}{\c@equation}
     \setcounter{equation}{0}%
     \edef\theequation{\@currentlabel\noexpand\alph{equation}}%
  }
  
}{}

\@ifundefined{BibTeX}{%
    \def\BibTeX{{\rm B\kern-.05em{\sc i\kern-.025em b}\kern-.08em
                 T\kern-.1667em\lower.7ex\hbox{E}\kern-.125emX}}}{}%
\@ifundefined{AmS}%
    {\def\AmS{{\protect\usefont{OMS}{cmsy}{m}{n}%
                A\kern-.1667em\lower.5ex\hbox{M}\kern-.125emS}}}{}%
\@ifundefined{AmSTeX}{}{}%
%

\def\@@eqncr{\let\@tempa\relax
    \ifcase\@eqcnt \def\@tempa{& & &}\or \def\@tempa{& &}%
      \else \def\@tempa{&}\fi
     \@tempa
     \if@eqnsw
        \iftag@
           \@taggnum
        \else
           \@eqnnum\stepcounter{equation}%
        \fi
     \fi
     \global\tag@false
     \global\@eqnswtrue
     \global\@eqcnt\z@\cr}

\def\TCItag{\@ifnextchar*{\@TCItagstar}{\@TCItag}}
\def\@TCItag#1{%
    \global\tag@true
    \global\def\@taggnum{(#1)}}
\def\@TCItagstar*#1{%
    \global\tag@true
    \global\def\@taggnum{#1}}
%
%
%
%
\def\QDATOP#1#2{{\displaystyle {#1 \atop #2}}}%
%
%
%
%
%
%
%
%
%
%
%
%
%
%
%
%
%
%
%
%
%
%
%
%
%
%
%
%
%
%
%
%
%
%
%
%
%
%
%
%
%
%
%
%
%
%
%
%
%
%
%
%
%

\if@compatibility\else
  \RequirePackage{amsmath}
\fi

\def\ExitTCILatex{\makeatother }

\bgroup
\ifx\ds@amstex\relax
   \message{amstex already loaded}\aftergroup\ExitTCILatex
\else
   \@ifpackageloaded{amsmath}%
      {\if@compatibility\message{amsmath already loaded}\fi\aftergroup\ExitTCILatex}
      {}
   \@ifpackageloaded{amstex}%
      {\if@compatibility\message{amstex already loaded}\fi\aftergroup\ExitTCILatex}
      {}
   \@ifpackageloaded{amsgen}%
      {\if@compatibility\message{amsgen already loaded}\fi\aftergroup\ExitTCILatex}
      {}
\fi
\egroup


\typeout{TCILATEX defining AMS-like constructs in LaTeX 2.09 COMPATIBILITY MODE}
%
%
\let\DOTSI\relax
\def\RIfM@{\relax\ifmmode}%
\def\FN@{\futurelet\next}%
\newcount\intno@
\def\iint{\DOTSI\intno@\tw@\FN@\ints@}%
\def\iiint{\DOTSI\intno@\thr@@\FN@\ints@}%
\def\iiiint{\DOTSI\intno@4 \FN@\ints@}%
\def\idotsint{\DOTSI\intno@\z@\FN@\ints@}%
\def\ints@{\findlimits@\ints@@}%
\newif\iflimtoken@
\newif\iflimits@
\def\findlimits@{\limtoken@true\ifx\next\limits\limits@true
 \else\ifx\next\nolimits\limits@false\else
 \limtoken@false\ifx\ilimits@\nolimits\limits@false\else
 \ifinner\limits@false\else\limits@true\fi\fi\fi\fi}%
\def\multint@{\int\ifnum\intno@=\z@\intdots@                          
 \else\intkern@\fi                                                    
 \ifnum\intno@>\tw@\int\intkern@\fi                                   
 \ifnum\intno@>\thr@@\int\intkern@\fi                                 
 \int}
\def\multintlimits@{\intop\ifnum\intno@=\z@\intdots@\else\intkern@\fi
 \ifnum\intno@>\tw@\intop\intkern@\fi
 \ifnum\intno@>\thr@@\intop\intkern@\fi\intop}%
\def\intic@{%
    \mathchoice{\hskip.5em}{\hskip.4em}{\hskip.4em}{\hskip.4em}}%
\def\negintic@{\mathchoice
 {\hskip-.5em}{\hskip-.4em}{\hskip-.4em}{\hskip-.4em}}%
\def\ints@@{\iflimtoken@                                              
 \def\ints@@@{\iflimits@\negintic@
   \mathop{\intic@\multintlimits@}\limits                             
  \else\multint@\nolimits\fi                                          
  \eat@}
 \else                                                                
 \def\ints@@@{\iflimits@\negintic@
  \mathop{\intic@\multintlimits@}\limits\else
  \multint@\nolimits\fi}\fi\ints@@@}%
\def\intkern@{\mathchoice{\!\!\!}{\!\!}{\!\!}{\!\!}}%
\def\plaincdots@{\mathinner{\cdotp\cdotp\cdotp}}%
\def\intdots@{\mathchoice{\plaincdots@}%
 {{\cdotp}\mkern1.5mu{\cdotp}\mkern1.5mu{\cdotp}}%
 {{\cdotp}\mkern1mu{\cdotp}\mkern1mu{\cdotp}}%
 {{\cdotp}\mkern1mu{\cdotp}\mkern1mu{\cdotp}}}%
%
%
%
\def\RIfM@{\relax\protect\ifmmode}
\def\text{\RIfM@\expandafter\text@\else\expandafter\mbox\fi}
\let\nfss@text\text
\def\text@#1{\mathchoice
   {\textdef@\displaystyle\f@size{#1}}%
   {\textdef@\textstyle\tf@size{\firstchoice@false #1}}%
   {\textdef@\textstyle\sf@size{\firstchoice@false #1}}%
   {\textdef@\textstyle \ssf@size{\firstchoice@false #1}}%
   \glb@settings}

\def\textdef@#1#2#3{\hbox{{%
                    \everymath{#1}%
                    \let\f@size#2\selectfont
                    #3}}}
\newif\iffirstchoice@
\firstchoice@true
%
%
\def\Let@{\relax\iffalse{\fi\let\\=\cr\iffalse}\fi}%
\def\vspace@{\def\vspace##1{\crcr\noalign{\vskip##1\relax}}}%
\def\multilimits@{\bgroup\vspace@\Let@
 \baselineskip\fontdimen10 \scriptfont\tw@
 \advance\baselineskip\fontdimen12 \scriptfont\tw@
 \lineskip\thr@@\fontdimen8 \scriptfont\thr@@
 \lineskiplimit\lineskip
 \vbox\bgroup\ialign\bgroup\hfil$\m@th\scriptstyle{##}$\hfil\crcr}%
\def\Sb{_\multilimits@}%
\def\endSb{\crcr\egroup\egroup\egroup}%
\def\Sp{^\multilimits@}%

%
%
%
\newdimen\ex@
\ex@.2326ex
\def\rightarrowfill@#1{$#1\m@th\mathord-\mkern-6mu\cleaders
 \hbox{$#1\mkern-2mu\mathord-\mkern-2mu$}\hfill
 \mkern-6mu\mathord\rightarrow$}%
\def\leftarrowfill@#1{$#1\m@th\mathord\leftarrow\mkern-6mu\cleaders
 \hbox{$#1\mkern-2mu\mathord-\mkern-2mu$}\hfill\mkern-6mu\mathord-$}%
\def\leftrightarrowfill@#1{$#1\m@th\mathord\leftarrow
\mkern-6mu\cleaders
 \hbox{$#1\mkern-2mu\mathord-\mkern-2mu$}\hfill
 \mkern-6mu\mathord\rightarrow$}%
\def\overrightarrow{\mathpalette\overrightarrow@}%
\def\overrightarrow@#1#2{\vbox{\ialign{##\crcr\rightarrowfill@#1\crcr
 \noalign{\kern-\ex@\nointerlineskip}$\m@th\hfil#1#2\hfil$\crcr}}}%

\def\overleftarrow{\mathpalette\overleftarrow@}%
\def\overleftarrow@#1#2{\vbox{\ialign{##\crcr\leftarrowfill@#1\crcr
 \noalign{\kern-\ex@\nointerlineskip}$\m@th\hfil#1#2\hfil$\crcr}}}%
\def\overleftrightarrow{\mathpalette\overleftrightarrow@}%
\def\overleftrightarrow@#1#2{\vbox{\ialign{##\crcr
   \leftrightarrowfill@#1\crcr
 \noalign{\kern-\ex@\nointerlineskip}$\m@th\hfil#1#2\hfil$\crcr}}}%
\def\underrightarrow{\mathpalette\underrightarrow@}%
\def\underrightarrow@#1#2{\vtop{\ialign{##\crcr$\m@th\hfil#1#2\hfil
  $\crcr\noalign{\nointerlineskip}\rightarrowfill@#1\crcr}}}%

\def\underleftarrow{\mathpalette\underleftarrow@}%
\def\underleftarrow@#1#2{\vtop{\ialign{##\crcr$\m@th\hfil#1#2\hfil
  $\crcr\noalign{\nointerlineskip}\leftarrowfill@#1\crcr}}}%
\def\underleftrightarrow{\mathpalette\underleftrightarrow@}%
\def\underleftrightarrow@#1#2{\vtop{\ialign{##\crcr$\m@th
  \hfil#1#2\hfil$\crcr
 \noalign{\nointerlineskip}\leftrightarrowfill@#1\crcr}}}%

\def\qopnamewl@#1{\mathop{\operator@font#1}\nlimits@}
\let\nlimits@\displaylimits
\def\setboxz@h{\setbox\z@\hbox}

\def\varlim@#1#2{\mathop{\vtop{\ialign{##\crcr
 \hfil$#1\m@th\operator@font lim$\hfil\crcr
 \noalign{\nointerlineskip}#2#1\crcr
 \noalign{\nointerlineskip\kern-\ex@}\crcr}}}}

 \def\rightarrowfill@#1{\m@th\setboxz@h{$#1-$}\ht\z@\z@
  $#1\copy\z@\mkern-6mu\cleaders
  \hbox{$#1\mkern-2mu\box\z@\mkern-2mu$}\hfill
  \mkern-6mu\mathord\rightarrow$}
\def\leftarrowfill@#1{\m@th\setboxz@h{$#1-$}\ht\z@\z@
  $#1\mathord\leftarrow\mkern-6mu\cleaders
  \hbox{$#1\mkern-2mu\copy\z@\mkern-2mu$}\hfill
  \mkern-6mu\box\z@$}

\def\projlim{\qopnamewl@{proj\,lim}}
\def\injlim{\qopnamewl@{inj\,lim}}
\def\varinjlim{\mathpalette\varlim@\rightarrowfill@}
\def\varprojlim{\mathpalette\varlim@\leftarrowfill@}
\def\varliminf{\mathpalette\varliminf@{}}
\def\varliminf@#1{\mathop{\underline{\vrule\@depth.2\ex@\@width\z@
   \hbox{$#1\m@th\operator@font lim$}}}}
\def\varlimsup{\mathpalette\varlimsup@{}}
\def\varlimsup@#1{\mathop{\overline
  {\hbox{$#1\m@th\operator@font lim$}}}}

%
%
%
%
%
%
\begingroup \catcode `|=0 \catcode `[= 1
\catcode`]=2 \catcode `\{=12 \catcode `\}=12
\catcode`\\=12 
|gdef|@alignverbatim#1\end{align}[#1|end[align]]
|gdef|@salignverbatim#1\end{align*}[#1|end[align*]]

|gdef|@alignatverbatim#1\end{alignat}[#1|end[alignat]]
|gdef|@salignatverbatim#1\end{alignat*}[#1|end[alignat*]]

|gdef|@xalignatverbatim#1\end{xalignat}[#1|end[xalignat]]
|gdef|@sxalignatverbatim#1\end{xalignat*}[#1|end[xalignat*]]

|gdef|@gatherverbatim#1\end{gather}[#1|end[gather]]
|gdef|@sgatherverbatim#1\end{gather*}[#1|end[gather*]]

|gdef|@gatherverbatim#1\end{gather}[#1|end[gather]]
|gdef|@sgatherverbatim#1\end{gather*}[#1|end[gather*]]

|gdef|@multilineverbatim#1\end{multiline}[#1|end[multiline]]
|gdef|@smultilineverbatim#1\end{multiline*}[#1|end[multiline*]]

|gdef|@arraxverbatim#1\end{arrax}[#1|end[arrax]]
|gdef|@sarraxverbatim#1\end{arrax*}[#1|end[arrax*]]

|gdef|@tabulaxverbatim#1\end{tabulax}[#1|end[tabulax]]
|gdef|@stabulaxverbatim#1\end{tabulax*}[#1|end[tabulax*]]

|endgroup

\def\align{\@verbatim \frenchspacing\@vobeyspaces \@alignverbatim
You are using the "align" environment in a style in which it is not defined.}

\@namedef{align*}{\@verbatim\@salignverbatim
You are using the "align*" environment in a style in which it is not defined.}
\expandafter\let\csname endalign*\endcsname =\endtrivlist

\def\alignat{\@verbatim \frenchspacing\@vobeyspaces \@alignatverbatim
You are using the "alignat" environment in a style in which it is not defined.}

\@namedef{alignat*}{\@verbatim\@salignatverbatim
You are using the "alignat*" environment in a style in which it is not defined.}
\expandafter\let\csname endalignat*\endcsname =\endtrivlist

\def\xalignat{\@verbatim \frenchspacing\@vobeyspaces \@xalignatverbatim
You are using the "xalignat" environment in a style in which it is not defined.}

\@namedef{xalignat*}{\@verbatim\@sxalignatverbatim
You are using the "xalignat*" environment in a style in which it is not defined.}
\expandafter\let\csname endxalignat*\endcsname =\endtrivlist

\def\gather{\@verbatim \frenchspacing\@vobeyspaces \@gatherverbatim
You are using the "gather" environment in a style in which it is not defined.}

\@namedef{gather*}{\@verbatim\@sgatherverbatim
You are using the "gather*" environment in a style in which it is not defined.}
\expandafter\let\csname endgather*\endcsname =\endtrivlist

\def\multiline{\@verbatim \frenchspacing\@vobeyspaces \@multilineverbatim
You are using the "multiline" environment in a style in which it is not defined.}

\@namedef{multiline*}{\@verbatim\@smultilineverbatim
You are using the "multiline*" environment in a style in which it is not defined.}
\expandafter\let\csname endmultiline*\endcsname =\endtrivlist

\def\arrax{\@verbatim \frenchspacing\@vobeyspaces \@arraxverbatim
You are using a type of "array" construct that is only allowed in AmS-LaTeX.}

\def\tabulax{\@verbatim \frenchspacing\@vobeyspaces \@tabulaxverbatim
You are using a type of "tabular" construct that is only allowed in AmS-LaTeX.}

\@namedef{arrax*}{\@verbatim\@sarraxverbatim
You are using a type of "array*" construct that is only allowed in AmS-LaTeX.}
\expandafter\let\csname endarrax*\endcsname =\endtrivlist

\@namedef{tabulax*}{\@verbatim\@stabulaxverbatim
You are using a type of "tabular*" construct that is only allowed in AmS-LaTeX.}
\expandafter\let\csname endtabulax*\endcsname =\endtrivlist


 \def\endequation{%
     \ifmmode\ifinner 
      \iftag@
        \addtocounter{equation}{-1} 
        $\hfil
           \displaywidth\linewidth\@taggnum\egroup \endtrivlist
        \global\tag@false
        \global\@ignoretrue   
      \else
        $\hfil
           \displaywidth\linewidth\@eqnnum\egroup \endtrivlist
        \global\tag@false
        \global\@ignoretrue 
      \fi
     \else   
      \iftag@
        \addtocounter{equation}{-1} 
        \eqno \hbox{\@taggnum}
        \global\tag@false%
        $$\global\@ignoretrue
      \else
        \eqno \hbox{\@eqnnum}
        $$\global\@ignoretrue
      \fi
     \fi\fi
 } 

 \newif\iftag@ \tag@false
 
 \def\TCItag{\@ifnextchar*{\@TCItagstar}{\@TCItag}}
 \def\@TCItag#1{%
     \global\tag@true
     \global\def\@taggnum{(#1)}}
 \def\@TCItagstar*#1{%
     \global\tag@true
     \global\def\@taggnum{#1}}

  \@ifundefined{tag}{
     \def\tag{\@ifnextchar*{\@tagstar}{\@tag}}
     \def\@tag#1{%
         \global\tag@true
         \global\def\@taggnum{(#1)}}
     \def\@tagstar*#1{%
         \global\tag@true
         \global\def\@taggnum{#1}}
  }{}

\def\dfrac#1#2{{\displaystyle {#1 \over #2}}}%
%
%
%

\makeatother

\begin{document}

\begin{center}
{\Large NP vs PSPACE\smallskip }

L. Gordeev, E. H. Haeusler

\textit{Universit\"{a}t T\"{u}bingen, Ghent University, PUC Rio de Janeiro}

l\texttt{ew.gordeew@uni-tuebingen.de}\textit{,}

\textit{PUC Rio de Janeiro}

\texttt{hermann@inf.puc-rio.br}
\end{center}

\subparagraph{Abstract.}

We present a proof of the conjecture $\mathcal{NP}$ = $\mathcal{PSPACE}$ by
showing that arbitrary tautologies of Johansson's minimal propositional
logic admit ``small'' polynomial-size dag-like natural deductions in
Prawitz's system for minimal propositional logic. These ``small'' deductions
arise from standard ``large''\ tree-like inputs by horizontal dag-like
compression that is obtained by merging distinct nodes labeled with
identical formulas occurring in horizontal sections of deductions involved.
The underlying ``geometric'' idea: if the height, $h\left( \partial \right) $%
, and the total number of distinct formulas, $\phi \left( \partial \right) $%
, of a given tree-like deduction $\partial $\ of a minimal tautology $\rho $
are both polynomial in the length of $\rho $, $\left| \rho \right| $, then
the size of the horizontal dag-like compression $\partial ^{\text{\textsc{c}}%
}$\ is at most $h\left( \partial \right) \times \phi \left( \partial \right) 
$, and hence polynomial in $\left| \rho \right| $. Moreover if maximal
formula length in $\partial $, $\mu \left( \partial \right) $, is also
polynomial in $\left| \rho \right| $ , then so is the weight of $\partial ^{%
\text{\textsc{c}}}$. That minimal tautologies $\rho $ are derivable by
natural deductions $\partial $ with $\left| \rho \right| $-polynomial $%
h\left( \partial \right) $, $\phi \left( \partial \right) $ and $\mu \left(
\partial \right) $ follows via embedding from the known result that there
are analogous sequent calculus deductions of sequent $\Rightarrow \rho $.
The attached proof is due to the first author, but it was the second author
who proposed an initial idea to attack a weaker conjecture $\mathcal{NP}=%
\mathcal{\mathit{co}NP}$ by reductions in diverse natural deduction
formalisms for propositional logic. That idea included interactive use of
minimal, intuitionistic and classical formalisms, so its practical
implementation was too involved. On the contrary, the attached proof of $%
\mathcal{NP}=\mathcal{PSPACE}$ runs inside the natural deduction
interpretation of Hudelmaier's cutfree sequent calculus for minimal logic.

\subparagraph{Keywords:}

Complexity theory, propositional complexity, proof theory, digraphs.

\textbf{Acknowledgments}

This work arose in the context of term- and proof-compression research
supported by the ANR/DFG projects \negthinspace \emph{HYPOTHESES} and \emph{%
BEYOND LOGIC} [DFG grants 275/16-1, 16-2, 17-1] and the CNPq project\emph{\
Proofs: Structure, Transformations and Semantics} [grant 402429/2012-5]. We
would like to thank L. C. Pereira and all colleagues in PUC-Rio for their
contribution as well as P. Schroeder-Heister (EKUT) and M. R. F. Benevides
(UFRJ) for their support of these projects. Special thanks goes to S. Buss,
R. Dyckhoff and F. Gilbert for their insightful comments.

\section{Introduction}

Recall standard definitions of the complexity classes $\mathcal{NP}$, 
\textit{co}$\mathcal{NP}$ and $\mathcal{PSPACE}$. $L\subseteq \left\{
0,1\right\} ^{\ast }$\ is in $\mathcal{NP}$, resp. \textit{co}$\mathcal{NP}$%
, if there exists a polynomial $p$ and a polytime TM $M$ such that 
\begin{equation*}
\begin{array}{c}
\quad \quad \ \fbox{$x\in L\Leftrightarrow \left( \exists u\in \left\{
0,1\right\} ^{p\left( \left| x\right| \right) }\right) \!M\left( x,u\right)
=1$}\text{,} \\ 
\text{resp. }\fbox{$x\in L\Leftrightarrow \left( \forall u\in \left\{
0,1\right\} ^{p\left( \left| x\right| \right) }\right) $\negthinspace $%
M\left( x,u\right) =1$}\text{,}
\end{array}
\end{equation*}
holds for every $x\in \left\{ 0,1\right\} ^{\ast }$. Now $L\subseteq \left\{
0,1\right\} ^{\ast }$\ is in $\mathcal{PSPACE}$ if there exists a polynomial 
$p$ and a TM $M$ such that for every input $x\in \left\{ 0,1\right\} ^{\ast
} $, the total number of non-blank locations that occur during $M$'s
execution on $x$ is at most $p\left( \left| x\right| \right) $, and $x\in
L\Leftrightarrow M\left( x\right) =1$. It is well-known that $\mathcal{NP}$ $%
\subseteq $ $\mathcal{PSPACE}$ and \textit{co}$\mathcal{NP}$ $\subseteq $ $%
\mathcal{PSPACE}$. Moreover, if $\mathcal{NP}$ $=$\ $\mathcal{PSPACE}$ then $%
\mathcal{NP}$ $=$\ \textit{co}$\mathcal{NP}$. The latter conjecture seems
more natural and/or plausible, as it reflects an idea of logical equivalence
between model theoretical (re: $\mathcal{NP}$) and proof theoretical (re: 
\textit{co}$\mathcal{NP}$) interpretations of non-deterministic polytime
computability. So according to familiar NP-(coNP)-completeness of boolean
satisfiability (resp. validity) problem, in order to prove $\mathcal{NP}$ $=$%
\ \textit{co}$\mathcal{NP}$ it will suffice to show that arbitrary
tautologies admit ``small''polynomial-size (abbr.: \emph{polysize})
deductions in a natural propositional proof system. The former (stronger)
conjecture\ $\mathcal{NP}$ $=$\ $\mathcal{PSPACE}$ is less intuitive than $%
\mathcal{NP}$ $=$\ \textit{co}$\mathcal{NP}$, but our proof thereof follows
the same pattern with respect to minimal logic, instead of classical one.
This is legitimate, since the validity of minimal propositional logic is
PSPACE-complete.

\section{Towards $\mathcal{NP}$ $=$\ $\mathcal{PSPACE}$}

\subsection{Proof theoretic background}

We consider two types of proof theoretic formalism: Gentzen-style Sequent
Calculus (abbr.: SC) and Prawitz's Natural Deduction (abbr.: ND). Both SC
and ND admit standard tree-like interpretation, as well as generalized
dag-like interpretation in which proofs (or deductions) are regarded as
labeled rooted monoedge dags. \footnote{{\footnotesize Recall that \emph{%
`dag'} stands for \emph{directed acyclic graph} (edges directed upwards).}}
Our desired ``small'' deductions will arise from ``large''\ standard
tree-like inputs by appropriate dag-like compressing techniques. The
compression in question is obtained by merging distinct nodes with identical
labels, i.e. sequents or single formulas in the corresponding case of SC or
ND, respectively.

In our earlier SC related proof-compression research \cite{JMS}, \cite{IGPL}%
, \cite{Gor} dealing with sequent calculi\emph{\ }\footnote{{\footnotesize %
Also note \cite{Marcela} that shows a mimp-like formalization of natural
deductions that admits ``explicit'' and size-preserving strong normalization
procedure.}} we obtained such basic result (et al):

\emph{Any tree-like deduction }$\partial $\emph{\ of any given sequent }$S$%
\emph{\ is constructively compressible to a dag-like deduction }$\partial ^{%
\text{\textsc{c}}}$\emph{\ of }$S$\emph{\ in which sequents occur at most
once. I.e., in }$\partial ^{\text{\textsc{c}}}$\emph{, distinct nodes are
supplied with distinct sequents (that occur in }$\partial $\emph{).}

However, even in the case of cutfree SC having good proof search and other
nice properties (like Gentzen's subformula property), this result still
gives us no polynomial control over the size of $\partial ^{\text{\textsc{c}}%
}$. The reason is that sequents occurring in $\partial ^{\text{\textsc{c}}}$
can be viewed as collections of subformulas of $S$, which allows their total
number to grow exponentially in the size of $S$, $\left| S\right| $. In
contrast, ND deductions consist of single formulas, which gives hope to
overcome this problem. On the other hand, in ND, full dag-like compression
merging arbitrary nodes supplied with identical formulas is problematic, as
there is a risk of confusion between deduced formulas and the same formulas
used above as discharged assumptions. But we can try \emph{horizontal
dag-like compression} that should merge only the nodes occurring in
horizontal sections of ND deductions involved. The underlying idea is
explained in the abstract. Namely, if a tree-like input deduction $\partial $%
\ of a given formula $\rho $ has $\left| \rho \right| $-polynomial \emph{%
height} (= maximal thread length), $h\left( \partial \right) $, and the 
\emph{foundation} (= the total number of distinct formulas occurring in $%
\partial $), $\phi \left( \partial \right) $, is also polynomial in $\left|
\rho \right| $, then the \emph{size} (= total number of formulas) of the
corresponding horizontal dag-like compression $\partial ^{\text{\textsc{c}}}$%
, $\left| \partial ^{\text{\textsc{c}}}\right| $, will be at most $h\left(
\partial \right) \times \phi \left( \partial \right) $. Moreover if maximal
formula length in $\partial $, $\mu \left( \partial \right) $, is also
polynomial in $\left| \rho \right| $, then the \emph{weight} (= total number
of characters occurring inside) of $\partial ^{\text{\textsc{c}}}$, $\left\|
\partial ^{\text{\textsc{c}}}\right\| $, is bounded by $h\left( \partial
\right) \times \phi \left( \partial \right) \times \mu \left( \partial
\right) $. It remains to show that every formula $\rho $ that is valid in
minimal logic admits a ND deduction $\partial $ with $\left| \rho \right| $%
-polynomial parameters $h\left( \partial \right) $, $\phi \left( \partial
\right) $ and $\mu \left( \partial \right) $. But this follows by a natural
SC $\hookrightarrow $ ND embedding from Hudelmaier's result saying that
there are analogous SC deductions of the corresponding sequent $\Rightarrow
\rho $.

\subsection{Overview of the proof}

We argue as follows along the lines 1--4:

\begin{enumerate}
\item  Formalize minimal propositional logic as fragment \textsc{LM}$%
_{\rightarrow }$ of Hudelmaier's tree-like cutfree intuitionistic sequent
calculus. For any \textsc{LM}$_{\rightarrow }$ proof $\partial $ of sequent $%
\Rightarrow \rho \ $:

\begin{enumerate}
\item  $h\left( \partial \right) $ (= the height) is polynomial (actually
linear) in $\left| \rho \right| $,

\item  $\phi \left( \partial \right) $ (= total number of formulas) and $\mu
\left( \partial \right) $ (= maximal formula length) are also polynomial in $%
\left| \rho \right| $.
\end{enumerate}

\item  Show that there exists a constructive (1)+(2) preserving embedding $%
\mathcal{F}$ of \textsc{LM}$_{\rightarrow }$ into Prawitz's tree-like
natural deduction formalism \textsc{NM}$_{\rightarrow }$ for minimal logic.

\item  Elaborate polytime verifiable dag-like deducibility in \textsc{NM}$%
_{\rightarrow }$.

\item  Elaborate and apply \emph{horizontal tree-to-dag proof compression}
in \textsc{NM}$_{\rightarrow }$. For any tree-like \textsc{NM}$_{\rightarrow
}$ input $\partial $, the weight of dag-like output $\partial ^{\text{%
\textsc{c}}}$ is bounded by $h\left( \partial \right) \times \phi \left(
\partial \right) \times \mu \left( \partial \right) $. Hence the weight of $%
\left( \mathcal{F}\left( \partial \right) \right) ^{\text{\textsc{c}}}$ for
any given tree-like \textsc{LM}$_{\rightarrow }$ proof $\partial $ of $\rho $
is polynomially bounded in $\left| \rho \right| $. Since minimal logic is
PSPACE-complete, conclude that $\mathcal{NP}$ $=$\ $\mathcal{PSPACE}$.
\end{enumerate}

\section{More detailed exposition}

In the sequel we consider standard language $\mathcal{L}_{\rightarrow }$ of
minimal logic whose formulas ($\alpha $, $\beta $, $\gamma $, $\rho $ etc.)
are built up from propositional variables ($p$, $q$, $r$, etc.) using one
propositional connective $\rightarrow $. The sequents are in the form $%
\Gamma \Rightarrow \alpha $\ whose antecedents, $\Gamma $,\ are viewed as
multisets of formulas; sequents $\Rightarrow \alpha $\ , i.e. $\emptyset
\Rightarrow \alpha $, are identified with formulas $\alpha $.

\subsection{Sequent calculus \textsc{LM}$_{\rightarrow }$}

\textsc{LM}$_{\rightarrow }$ includes the following axioms $\left( \text{%
\textsc{M}}A\right) $ and inference rules $\left( \text{\textsc{M}}%
I1\rightarrow \right) $, $\left( \text{\textsc{M}}I2\rightarrow \right) $, $%
\left( \text{\textsc{M}}E\rightarrow P\right) $, $\left( \text{\textsc{M}}%
E\rightarrow \rightarrow \right) $ in the language $\mathcal{L}_{\rightarrow
}$ (the constraints are shown in square brackets). \footnote{{\footnotesize %
This is a slightly modified, equivalent version of the corresponding \
purely implicational and }$\bot ${\footnotesize -free} {\footnotesize %
subsystem of Hudelmaier's intuitionistic calculus \textsc{LG}, cf. \cite
{Hudelmaier}. The constraints }$q\in VAR\left( \Gamma ,\gamma \right) $%
{\footnotesize \ are added just for the sake of transparency.}}\medskip

$\fbox{$\left( \text{\textsc{M}}A\right) :\ \ \ \Gamma ,p\Rightarrow p$}$

$\fbox{$\left( \text{\textsc{M}}I1\!\rightarrow \right) :\ \ \ \dfrac{\Gamma
,\alpha \Rightarrow \beta }{\Gamma \Rightarrow \alpha \rightarrow \beta }%
\smallskip \quad \left[ \left( \nexists \gamma \right) :\left( \alpha
\rightarrow \beta \right) \rightarrow \gamma \in \Gamma \right] $}$

$\fbox{$\left( \text{\textsc{M}}I2\!\rightarrow \right) :\ \ \ \dfrac{\Gamma
,\alpha ,\beta \rightarrow \gamma \Rightarrow \beta }{\Gamma ,\left( \alpha
\rightarrow \beta \right) \rightarrow \gamma \Rightarrow \alpha \rightarrow
\beta }$}$

$\fbox{$\left( \text{\textsc{M}}E\!\rightarrow \!P\right) :\ \ \ \dfrac{%
\Gamma ,p,\gamma \Rightarrow q}{\Gamma ,p,p\rightarrow \gamma \Rightarrow q}%
\quad \left[ q\in \mathrm{VAR}\left( \Gamma ,\gamma \right) ,p\neq q\right] $%
}$

$\fbox{$\left( \text{\textsc{M}}E\!\rightarrow \rightarrow \right) :\ \ \ 
\dfrac{\Gamma ,\alpha ,\beta \rightarrow \gamma \Rightarrow \beta \quad
\Gamma ,\gamma \Rightarrow q}{\Gamma ,\left( \alpha \rightarrow \beta
\right) \rightarrow \gamma \Rightarrow q}\quad \left[ q\in \mathrm{VAR}%
\left( \Gamma ,\gamma \right) \right] $}$

\begin{claim}
\textsc{LM}$_{\rightarrow }$ is sound and complete with respect to minimal
propositional logic \cite{Johansson} and tree-like deducibility. Thus any
given formula $\rho $ is valid in the minimal logic iff it (i.e. sequent $%
\Rightarrow \rho $) is tree-like deducible in \textsc{LM}$_{\rightarrow }$.
\end{claim}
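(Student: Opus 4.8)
The plan is to establish soundness and completeness separately, measuring \textsc{LM}$_{\rightarrow }$ against a reference calculus $\mathcal{R}$ that is already known to be sound and complete for Johansson's minimal logic \cite{Johansson} --- for instance Prawitz's natural deduction for the implicational fragment, or the implicational fragment of $\mathrm{LJ}$ with cut. A preliminary observation helps: since $\mathcal{L}_{\rightarrow }$ contains only $\rightarrow $ and no $\bot $, the implicational fragments of minimal and of intuitionistic logic coincide, so \textsc{LM}$_{\rightarrow }$ may be compared directly with Hudelmaier's intuitionistic calculus, of which (by the footnote) it is precisely the $\bot $-free implicational subsystem carrying the extra $\mathrm{VAR}$-side-conditions.

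\emph{Soundness.} I would argue by induction on the height of an \textsc{LM}$_{\rightarrow }$ derivation that every derivable sequent is derivable in $\mathcal{R}$. The axiom $\left( \text{\textsc{M}}A\right) $ is immediate, and the side conditions only restrict applicability, hence cannot threaten soundness. The rule $\left( \text{\textsc{M}}I1\rightarrow \right) $ is plain $\rightarrow $-introduction, while $\left( \text{\textsc{M}}E\rightarrow P\right) $ is a modus-ponens step applying $p\rightarrow \gamma $ to $p$, legitimate because antecedents are multisets and $\mathcal{R}$ admits contraction. The two rules acting on a principal formula $\left( \alpha \rightarrow \beta \right) \rightarrow \gamma $ are the only nontrivial cases: for $\left( \text{\textsc{M}}I2\rightarrow \right) $ one exhibits the minimal derivation that, from $\Gamma ,\alpha ,\beta \rightarrow \gamma \Rightarrow \beta $, yields $\alpha \rightarrow \beta $ --- assume $\alpha $, derive $\beta \rightarrow \gamma $ (from an assumed $\beta $ obtain $\alpha \rightarrow \beta $ by weakening, then $\gamma $ via the principal formula), and finally apply the premise to reach $\beta $; the rule $\left( \text{\textsc{M}}E\rightarrow \rightarrow \right) $ is the same pattern closed off with the right premise $\Gamma ,\gamma \Rightarrow q$.

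\emph{Completeness.} Conversely I would show that every $\mathcal{R}$-derivable sequent is tree-like derivable in \textsc{LM}$_{\rightarrow }$; this is the substantive half and amounts to Hudelmaier's completeness theorem for the contraction-free calculus \cite{Hudelmaier}. After cut-elimination, the split left rules $\left( \text{\textsc{M}}E\rightarrow P\right) $ and $\left( \text{\textsc{M}}E\rightarrow \rightarrow \right) $, together with the paired right rules $\left( \text{\textsc{M}}I1\rightarrow \right) $ and $\left( \text{\textsc{M}}I2\rightarrow \right) $, recover the effect of contraction with no explicit contraction rule. I would invoke that result and confine the new work to verifying that the two modifications preserve derivability: first, the right-hand case split is exhaustive, since whenever the constraint $\left( \nexists \gamma \right) :\left( \alpha \rightarrow \beta \right) \rightarrow \gamma \in \Gamma $ on $\left( \text{\textsc{M}}I1\rightarrow \right) $ fails, such a principal formula is present and $\left( \text{\textsc{M}}I2\rightarrow \right) $ applies instead; second, the constraints $q\in \mathrm{VAR}\left( \Gamma ,\gamma \right) $ and $p\neq q$ prune no derivable premise, because a purely implicational sequent whose succedent variable occurs nowhere in its antecedent is already underivable, and the residual case $p=q$ is subsumed by $\left( \text{\textsc{M}}A\right) $.

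\emph{Main obstacle.} The crux is the completeness direction and, inside it, the admissibility of contraction: the proof that the depth-bounded left rules are complete rests on the characteristic well-founded measure of the Dyckhoff--Hudelmaier calculi --- a multiset ordering on the sizes of antecedent formulas --- that drives termination and cut-elimination, and I would have to check that this measure still strictly decreases under the implication-only, $\mathrm{VAR}$-constrained rules. I therefore expect to lean on \cite{Hudelmaier} for the core termination and cut-elimination result, leaving for explicit verification only that the implicational restriction together with the added side conditions leaves the set of provable sequents unchanged.
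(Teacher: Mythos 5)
Your proposal is correct and follows essentially the same route as the paper, whose entire proof of this claim is the single line ``Easily follows from \cite{Hudelmaier}.'' You simply spell out what that reduction involves --- the coincidence of the implicational fragments of minimal and intuitionistic logic, the derivability of each \textsc{LM}$_{\rightarrow }$ rule in a reference calculus (your simulation of $\left( \text{\textsc{M}}I2\rightarrow \right) $ matches the natural-deduction translation the paper itself gives later in Theorem 4), and the harmlessness of the $\mathrm{VAR}$-constraints, which the paper's own footnote concedes are ``added just for the sake of transparency.''
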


\begin{proof}
Easily follows from \cite{Hudelmaier}.
\end{proof}

Recall that for any (tree-like or dag-like) deduction $\partial $ we denote
by $h\left( \partial \right) $ and $\phi \left( \partial \right) $ its
height and foundation, respectively. Furthermore for any sequent (in
particular, formula) $S$ we denote by $\left| S\right| $ the total number of
`$\rightarrow $'-occurrences in $S$ and following \cite{Hudelmaier} define
the complexity degree $\deg \left( S\right) $:

\begin{enumerate}
\item  $\deg \left( \Gamma ,\alpha \rightarrow \beta \Rightarrow \alpha
\right) :=\left| \alpha \rightarrow \beta \right| +\underset{\xi \in \Gamma 
}{\sum }\left| \xi \right| ,$

\item  $\deg \left( \Gamma \Rightarrow \alpha \right) :=\left| \alpha
\right| +\underset{\xi \in \Gamma }{\sum }\left| \xi \right| ,\ if\ \left(
\nexists \beta \right) :\alpha \rightarrow \beta \in \Gamma .$
\end{enumerate}

\begin{lemma}
\qquad

\begin{enumerate}
\item  Tree-like \textsc{LM}$_{\rightarrow }$ deductions share the
semi-subformula property, where semi-subformulas of $\left( \alpha
\rightarrow \beta \right) \rightarrow \gamma $ include $\beta \rightarrow
\gamma $ along with proper subformulas $\alpha \rightarrow \beta $, $\alpha $%
, $\beta $, $\gamma $. In particular, any $\alpha $ occurring in a \textsc{LM%
}$_{\rightarrow }$ deduction $\partial $ of $\Rightarrow \rho $ is a
semi-subformula of $\rho $, and hence $\left| \alpha \right| \leq \left|
\rho \right| $. Thus $\mu \left( \partial \right) \leq \left| \rho \right| $.

\item  If $S^{\prime }$ occurs strictly above $S$ in a given tree-like 
\textsc{LM}$_{\rightarrow }$ deduction $\partial $, then $\deg \left(
S^{\prime }\right) <\deg \left( S\right) $.

\item  The height of any tree-like \textsc{LM}$_{\rightarrow }$ deduction $%
\partial $ of $S$ is linear in $\left| S\right| $. In particular if $S$ is $%
\Rightarrow \rho $, then $h\left( \partial \right) \leq 3\left| \rho \right| 
$.

\item  The foundation of any tree-like \textsc{LM}$_{\rightarrow }$
deduction $\partial $ of $S$ is at most quadratic in $\left| S\right| $. In
particular if $S$ is $\Rightarrow \rho $, then $\phi \left( \partial \right)
\leq \left( \left| \rho \right| +1\right) ^{2}$.
\end{enumerate}
\end{lemma}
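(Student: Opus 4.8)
The plan is to establish the four clauses in the order 1, 2, 3, 4, using clauses 1 and 2 as the combinatorial engines for clauses 4 and 3 respectively. For clause 1 I would first fix the closure reading of the notion: let $\mathrm{SSF}(\chi)$ be the least set containing $\chi$, closed under proper subformulas, and such that $\left( \alpha \rightarrow \beta\right) \rightarrow \gamma \in \mathrm{SSF}(\chi)$ forces $\beta \rightarrow \gamma \in \mathrm{SSF}(\chi)$. The property to prove is that in any tree-like \textsc{LM}$_{\rightarrow}$ deduction $\partial$ every occurring formula is a semi-subformula of a formula in the endsequent. I would argue by induction on $\partial$, inspecting the four rules and checking that each formula newly appearing in a premise is a semi-subformula of a formula in the conclusion. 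The only step that is not a plain subformula step is the antecedent formula $\beta \rightarrow \gamma$ in $\left(\text{\textsc{M}}I2\rightarrow\right)$ and in the left premise of $\left(\text{\textsc{M}}E\!\rightarrow\rightarrow\right)$, and this is exactly the designated semi-subformula of the principal formula $\left(\alpha\rightarrow\beta\right)\rightarrow\gamma$; every other new formula is a genuine subformula. Since $|\beta\rightarrow\gamma| = |\beta|+|\gamma|+1 < |\alpha|+|\beta|+|\gamma|+2 = |\left(\alpha\rightarrow\beta\right)\rightarrow\gamma|$, passing to a semi-subformula never increases $|\cdot|$, so every formula of $\partial$ has length $\le|\rho|$, giving $\mu\left(\partial\right)\le|\rho|$.

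For clause 2 the useful reformulation of the complexity degree is: $\deg\left(S\right)$ is the total number of $\rightarrow$-occurrences in the antecedent of $S$ when the succedent coincides with the antecedent of some implication lying in the antecedent (case 1), and the total number of $\rightarrow$-occurrences in all of $S$ otherwise (case 2). Under this reading $\deg$ is manifestly well-defined, since in case 1 the value is independent of which antecedent-implication is designated, and the verification of strict decrease becomes a short four-rule computation: for $\left(\text{\textsc{M}}I2\rightarrow\right)$ both conclusion and premise fall under case 1 (via $\left(\alpha\rightarrow\beta\right)\rightarrow\gamma$ and $\beta\rightarrow\gamma$) and the degree drops by $1$; for $\left(\text{\textsc{M}}E\!\rightarrow\!P\right)$ it drops by $|p\rightarrow\gamma|-|\gamma| = 1$; and for $\left(\text{\textsc{M}}E\!\rightarrow\rightarrow\right)$ the left premise drops by $1$ and the right premise by $|\alpha|+|\beta|+2$. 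I expect this clause to be the main obstacle, precisely because the degree is piecewise-defined: for every premise and conclusion one must decide which case applies, and one must see that the side conditions deliver the intended case — in particular the side condition of $\left(\text{\textsc{M}}I1\rightarrow\right)$, forbidding $\left(\alpha\rightarrow\beta\right)\rightarrow\gamma\in\Gamma$, is exactly what forces the conclusion into case 2 so that its degree strictly exceeds the premise's. Clause 3 is then immediate: by clause 2 the degrees along any thread form a strictly decreasing sequence of non-negative integers, so a thread carries at most $\deg\left(S\right)+1$ sequents; since $\deg\left(S\right)$ is bounded by the number of $\rightarrow$-occurrences in $S$, i.e. $\deg\left(S\right)\le|S|$, we get $h\left(\partial\right)\le|S|+1$, which is linear, and for $S=\,\Rightarrow\rho$ we have $\deg\left(\Rightarrow\rho\right)=|\rho|$, whence $h\left(\partial\right)\le|\rho|+1\le 3|\rho|$ for $|\rho|\ge 1$ (the constant $3$ being a deliberately generous bound).

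Finally, clause 4 reduces by clause 1 to bounding $|\mathrm{SSF}(\rho)|$, since every formula of $\partial$ lies in $\mathrm{SSF}(\rho)$. I would prove $|\mathrm{SSF}(\rho)|\le\left(|\rho|+1\right)^{2}$ by induction on $\rho$ using the decomposition $\mathrm{SSF}(\alpha\rightarrow\beta)=\mathrm{SSF}(\alpha)\cup\mathrm{SSF}(\beta)\cup\{\delta\rightarrow\beta:\delta\in\mathrm{RS}(\alpha)\}$, where $\mathrm{RS}(\alpha)$ is the right spine of $\alpha$, namely $\alpha$ together with its iterated consequents. The point is that repeatedly applying the special rule to $\alpha\rightarrow\beta$ keeps the consequent $\beta$ fixed while peeling the antecedent down its right spine, and any further special-rule applications remain inside this chain or inside $\mathrm{SSF}(\alpha)$ and $\mathrm{SSF}(\beta)$, so no other formulas arise. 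Since $|\mathrm{RS}(\alpha)|\le|\alpha|+1$, the inductive estimate reads $|\mathrm{SSF}(\alpha\rightarrow\beta)|\le\left(|\alpha|+1\right)^{2}+\left(|\beta|+1\right)^{2}+\left(|\alpha|+1\right)$, and the elementary inequality $\left(|\alpha|+1\right)^{2}+\left(|\beta|+1\right)^{2}+\left(|\alpha|+1\right)\le\left(|\alpha|+|\beta|+2\right)^{2}=\left(|\alpha\rightarrow\beta|+1\right)^{2}$ closes the induction. Applying this to the endsequent formula $\rho$ yields $\phi\left(\partial\right)\le|\mathrm{SSF}(\rho)|\le\left(|\rho|+1\right)^{2}$, as required.
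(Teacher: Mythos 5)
Your proposal is correct, and for clauses 1--3 it essentially coincides with (or fills in) what the paper does: clause 1 is the paper's one-line observation that the only non-subformula passage is $\beta \rightarrow \gamma $ arising from $\left( \alpha \rightarrow \beta \right) \rightarrow \gamma $ in $\left( \text{\textsc{M}}I2\rightarrow \right) $ and $\left( \text{\textsc{M}}E\rightarrow \rightarrow \right) $, while clauses 2--3 are simply cited from Hudelmaier, so your explicit four-rule degree computation --- including the correct remark that the side condition of $\left( \text{\textsc{M}}I1\rightarrow \right) $ is what forces the conclusion into the second case of the definition of $\deg $ --- is a legitimate reconstruction rather than a divergence, and your resulting bound $h\left( \partial \right) \leq \left| \rho \right| +1$ is in fact sharper than the stated $3\left| \rho \right| $. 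The genuine difference is in clause 4. The paper counts semi-subformulas via the recursion $\mathrm{ssf}\left( \left( \alpha \rightarrow \beta \right) \rightarrow \gamma \right) =1+\mathrm{ssf}\left( \alpha \rightarrow \beta \right) +\mathrm{ssf}\left( \beta \rightarrow \gamma \right) -\mathrm{ssf}\left( \beta \right) $ and proves $\mathrm{ssf}\left( \xi \right) \leq \left( \left| \xi \right| +1\right) ^{2}$ in Appendix A by unwinding the antecedent down to $\alpha =\gamma _{1}\rightarrow \cdots \rightarrow \gamma _{n}\rightarrow p$; you instead give the closed-form description $\mathrm{SSF}\left( \alpha \rightarrow \beta \right) =\mathrm{SSF}\left( \alpha \right) \cup \mathrm{SSF}\left( \beta \right) \cup \left\{ \delta \rightarrow \beta :\delta \in \mathrm{RS}\left( \alpha \right) \right\} $ via the right spine, which reduces the quadratic bound to the single inequality $a^{2}+b^{2}+a\leq \left( a+b\right) ^{2}$ for $a,b\geq 1$. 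The two are equivalent in content --- your set for $\left( \alpha \rightarrow \beta \right) \rightarrow \gamma $ is exactly $\mathrm{SSF}\left( \alpha \rightarrow \beta \right) \cup \mathrm{SSF}\left( \beta \rightarrow \gamma \right) \cup \left\{ \left( \alpha \rightarrow \beta \right) \rightarrow \gamma \right\} $ with overlap $\mathrm{SSF}\left( \beta \right) $, which is what the paper's inclusion--exclusion clause records --- but your version makes the structure of the closure visible, avoids the multi-case induction of Appendix A, and incidentally sidesteps a small slip there, where $\left( \left| \alpha \right| +\left| \beta \right| +1\right) ^{2}$ is written for what should be $\left( \left| \alpha \right| +\left| \beta \right| +2\right) ^{2}=\left( \left| \xi \right| +1\right) ^{2}$.
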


\begin{proof}
1: Obvious. Note that $\beta \rightarrow \gamma $ occurring in premises of $%
\left( \text{\textsc{M}}I2\rightarrow \right) $ and $\left( \text{\textsc{M}}%
E\rightarrow \rightarrow \right) $ are semi-subformulas of $\left( \alpha
\rightarrow \beta \right) \rightarrow \gamma $ occurring in the\ conclusions.

2--3: See \cite{Hudelmaier}.

4: Let $\mathrm{ssf}\left( \alpha \right) $ be the total number of distinct
occurrences of semi-subformulas in a given formula $\alpha $. It is readily
seen that $\mathrm{ssf}\left( -\right) $ satisfies the following three
conditions.

\begin{enumerate}
\item  $\mathrm{ssf}\left( p\right) =1.$

\item  $\mathrm{ssf}\left( p\rightarrow \alpha \right) =2+\mathrm{ssf}\left(
\alpha \right) .$

\item  $\mathrm{ssf}\left( \left( \alpha \rightarrow \beta \right)
\rightarrow \gamma \right) =1+\mathrm{ssf}\left( \alpha \rightarrow \beta
\right) +\mathrm{ssf}\left( \beta \rightarrow \gamma \right) -\mathrm{ssf}%
\left( \beta \right) .$
\end{enumerate}

Moreover 1--3 can be viewed as recursive clauses defining $\mathrm{ssf}%
\left( \alpha \right) $, for any $\alpha $. Having this we easily arrive at $%
\mathrm{ssf}\left( \alpha \right) \leq \left( \left| \alpha \right|
+1\right) ^{2}$ (see Appendix A), which by the assertion 1 yields $\phi
\left( \partial \right) \leq \mathrm{ssf}\left( \rho \right) \leq \left(
\left| \rho \right| +1\right) ^{2}$, as required, provided that $\Rightarrow
\rho $ is the endsequent of $\partial $.
\end{proof}

\subsection{ND calculus \textsc{NM}$_{\rightarrow }$ and embedding of 
\textsc{LM}$_{\rightarrow }$}

Denote by \textsc{NM}$_{\rightarrow }$ a ND proof system for minimal logic
that contains just two rules $\left( \rightarrow I\right) $, $\left(
\rightarrow E\right) $ \cite{Prawitz} (we write `$\rightarrow $' instead of `%
$\supset $'). 
\begin{equation*}
\fbox{$\left( \rightarrow I\right) :\dfrac{\QDATOP{\QDATOP{\left[ \alpha %
\right] }{\vdots }}{\beta }}{\alpha \rightarrow \beta }$}\quad \fbox{$\left(
\rightarrow E\right) :\dfrac{\alpha \quad \alpha \rightarrow \beta }{\beta \ 
}$}
\end{equation*}

\begin{claim}[Prawitz]
\textsc{NM}$_{\rightarrow }$ is sound and complete with respect to minimal
propositional logic and tree-like deducibility.
\end{claim}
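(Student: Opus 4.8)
The plan is to reduce this to facts already in hand rather than to redevelop semantics from scratch; indeed the statement as a whole is Prawitz's \cite{Prawitz}. First observe that in the purely implicational language $\mathcal{L}_{\rightarrow}$ there is no distinction between minimal and intuitionistic consequence: since no $\bot$ occurs, the ex-falso rule is vacuous, so the minimal-logic relation coincides with the intuitionistic one. The latter is sound and complete for the usual Hilbert calculus built from the axiom schemes $\alpha\rightarrow(\beta\rightarrow\alpha)$ and $(\alpha\rightarrow(\beta\rightarrow\gamma))\rightarrow((\alpha\rightarrow\beta)\rightarrow(\alpha\rightarrow\gamma))$ together with modus ponens. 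Thus it suffices to match \textsc{NM}$_{\rightarrow}$ against this calculus (or, equally well, against the already-established \textsc{LM}$_{\rightarrow}$ of the previous Claim).

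For \emph{soundness} I would argue by induction on the height of a tree-like \textsc{NM}$_{\rightarrow}$ deduction, proving the stronger statement that whenever $\partial$ derives $\alpha$ with undischarged assumptions $\Gamma$, one has $\Gamma\models\alpha$ in every Kripke model for minimal logic. An assumption leaf gives $\alpha\in\Gamma$, hence $\Gamma\models\alpha$; the rule $(\rightarrow E)$ is exactly the soundness of modus ponens; and $(\rightarrow I)$ discharges an occurrence of $\alpha$, passing from $\Gamma,\alpha\models\beta$ to $\Gamma\models\alpha\rightarrow\beta$, which is valid precisely because forcing is persistent along the accessibility order. Specialising to a closed derivation of $\rho$ yields $\models\rho$, i.e. validity of every \textsc{NM}$_{\rightarrow}$ theorem.

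For \emph{completeness} I would show that \textsc{NM}$_{\rightarrow}$ derives each Hilbert axiom as a closed theorem and simulates modus ponens by $(\rightarrow E)$: the two schemes are obtained by short derivations that stack $(\rightarrow I)$ over nested discharged assumptions, the $\mathsf{S}$ scheme requiring three applications of $(\rightarrow E)$ under three open assumptions $\alpha\rightarrow(\beta\rightarrow\gamma)$, $\alpha\rightarrow\beta$, $\alpha$ before the triple introduction. Since every minimal tautology is Hilbert-derivable, it is then \textsc{NM}$_{\rightarrow}$-derivable in tree-like form. Alternatively, completeness follows at once by composing the completeness half of the previous Claim with the embedding $\mathcal{F}$ of \textsc{LM}$_{\rightarrow}$ into \textsc{NM}$_{\rightarrow}$ announced in step 2 of the overview, which sends any \textsc{LM}$_{\rightarrow}$ proof of $\Rightarrow\rho$ to an \textsc{NM}$_{\rightarrow}$ deduction of $\rho$.

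The one genuinely delicate point, and thus the main obstacle, is the treatment of assumption discharge in the $(\rightarrow I)$ case: the induction hypothesis must be phrased in terms of the multiset of \emph{currently undischarged} assumptions, and soundness of that step rests on persistence of forcing, while in the completeness direction one must verify that the bookkeeping of discharged occurrences in the explicit derivations of $\mathsf{K}$ and $\mathsf{S}$ is correct. Everything else is routine and, via the route through $\mathcal{F}$, reduces mechanically to the already-cited Hudelmaier completeness of \textsc{LM}$_{\rightarrow}$.
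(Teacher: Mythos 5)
Your proposal is correct, but note that the paper itself offers no argument here: its entire proof of this Claim is the single line ``See \cite{Prawitz}'', deferring to Prawitz's 1965 monograph. Your sketch therefore supplies what the paper omits, and it does so along standard lines: soundness by induction on the height of a tree-like deduction against Kripke semantics, with the induction hypothesis tracking the currently undischarged assumptions and persistence of forcing carrying the $\left( \rightarrow I\right) $ case; completeness by simulating the implicational Hilbert calculus (axioms $\mathsf{K}$ and $\mathsf{S}$, modus ponens) inside \textsc{NM}$_{\rightarrow }$. Your preliminary observation that the minimal and intuitionistic consequence relations coincide on the pure implicational language $\mathcal{L}_{\rightarrow }$ is correct and legitimizes borrowing intuitionistic Kripke completeness. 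Two caveats. First, you import rather than prove the completeness of the $\mathsf{K}$/$\mathsf{S}$ Hilbert system for the implicational fragment; that is a standard fact but should be cited explicitly, since it is where the real content of the completeness direction lives. Second, your alternative completeness route through the embedding $\mathcal{F}$ is logically sound (the construction of $\mathcal{F}$ nowhere depends on this Claim) but constitutes a forward reference to Theorem 4 in the paper's ordering; the direct Hilbert simulation is self-contained and preferable if the Claim is to stand on its own. What your approach buys over the paper's bare citation is a verifiable, elementary argument; what the citation buys is brevity and an appeal to the canonical source.
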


\begin{proof}
See \cite{Prawitz}.
\end{proof}

\begin{theorem}
There exists a recursive operator $\mathcal{F}$ that transforms any given
tree-like \textsc{LM}$_{\rightarrow }$ deduction $\partial $ of $\Gamma
\Rightarrow \rho $ into a tree-like \textsc{NM}$_{\rightarrow }$ deduction $%
\mathcal{F}\left( \partial \right) $ with root-formula $\rho $ and
assumptions occurring in $\Gamma $. Moreover $\partial $ and $\mathcal{F}%
\left( \partial \right) $ share the semi-subformula property and linear
(polynomial) upper bounds on the height (resp. foundation). If $\Gamma
=\emptyset $, then $\mathcal{F}\left( \partial \right) $ is a tree-like 
\textsc{NM}$_{\rightarrow }$ proof of $\rho $ such that $\fbox{$h\left( 
\mathcal{F}\left( \partial \right) \right) \leq 18\left| \rho \right| $ and $%
\phi \left( \mathcal{F}\left( \partial \right) \right) <\left( \left| \rho
\right| +1\right) ^{2}\left( \left| \rho \right| +2\right) $ and $\mu \left( 
\mathcal{F}\left( \partial \right) \right) \leq 2\left| \rho \right| $}.$
\end{theorem}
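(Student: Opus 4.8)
The plan is to define $\mathcal{F}$ by recursion on the structure of the tree-like \textsc{LM}$_{\rightarrow}$ deduction $\partial$, maintaining the invariant that each sequent $\Gamma \Rightarrow \sigma$ is sent to a tree-like \textsc{NM}$_{\rightarrow}$ deduction with root-formula $\sigma$ whose open assumptions are occurrences drawn from $\Gamma$, together with a fixed correspondence between open-assumption leaves and antecedent occurrences. The axiom $(\text{\textsc{M}}A)$, $\Gamma,p \Rightarrow p$, maps to the single assumption node $p$, and $(\text{\textsc{M}}I1\rightarrow)$ maps to one application of $(\rightarrow I)$ discharging $\alpha$. The left rules are handled by a uniform scheme: abstract the distinguished minor assumption of the premise by $(\rightarrow I)$, derive that minor formula once, and recombine by a single $(\rightarrow E)$. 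Concretely, for $(\text{\textsc{M}}E\rightarrow P)$ I would discharge $\gamma$ to get $\gamma\rightarrow q$ from $\Gamma,p$, derive $\gamma$ from $p$ and $p\rightarrow\gamma$ by $(\rightarrow E)$, and apply $(\rightarrow E)$ again. For the two rules whose principal formula is $(\alpha\rightarrow\beta)\rightarrow\gamma$ I would exploit the minimal-logic derivability of $\beta\rightarrow\gamma$ from $(\alpha\rightarrow\beta)\rightarrow\gamma$, via the gadget: assume $\beta$, form $\alpha\rightarrow\beta$ by a vacuous $(\rightarrow I)$ on $\alpha$, apply $(\rightarrow E)$ against $(\alpha\rightarrow\beta)\rightarrow\gamma$ to obtain $\gamma$, and discharge $\beta$ to obtain $\beta\rightarrow\gamma$. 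This lets me convert the premise's assumption $\beta\rightarrow\gamma$ into the conclusion's assumption $(\alpha\rightarrow\beta)\rightarrow\gamma$, which after a final $(\rightarrow I)$ on $\alpha$ handles $(\text{\textsc{M}}I2\rightarrow)$ and, combined with the $\gamma\rightarrow q$ abstraction of the right premise, handles $(\text{\textsc{M}}E\rightarrow\rightarrow)$.

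The decisive design point, on which the whole height estimate turns, is to share subderivations by $(\rightarrow I)$-abstraction rather than to duplicate them by grafting at assumption leaves. The naive alternative would substitute a copy of the derivation of the minor formula at each leaf where it is used; at the branching rule $(\text{\textsc{M}}E\rightarrow\rightarrow)$ this stacks the left-premise height on top of the right-premise height, so heights \emph{add} at every branch, and iterated over a deduction of height $3|\rho|$ this produces exponential height. By contrast, abstracting $\gamma$ in the right premise, deriving $\gamma$ a single time from the left premise, and joining them with one $(\rightarrow E)$ makes the output height equal to one plus the \emph{maximum} of the two immediate subheights. Hence each \textsc{LM}$_{\rightarrow}$ inference adds only a constant (at most six) to the height, and the linear height bound $h(\partial)\le 3|\rho|$ of the Lemma yields $h(\mathcal{F}(\partial)) \leq 6\,h(\partial) \leq 18|\rho|$.

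For the remaining parameters I would prove inductively that every formula occurring in $\mathcal{F}(\partial)$ is either a semi-subformula of $\rho$ or one of the auxiliary implications $\gamma\rightarrow q$ and $(\beta\rightarrow\gamma)\rightarrow\beta$ created by the abstraction step. Each auxiliary formula is an implication between two semi-subformulas, so a short computation from $|(\alpha\rightarrow\beta)\rightarrow\gamma|\le|\rho|$ gives length at most $2|\rho|$; this is exactly why the length bound doubles relative to the semi-subformula bound $\mu(\partial)\le|\rho|$ of the Lemma, so that $\mu(\mathcal{F}(\partial)) \leq 2|\rho|$. Counting distinct formulas, the semi-subformulas number at most $(|\rho|+1)^2$ by the foundation bound of the Lemma, and each spawns at most a linear number $(|\rho|+2)$ of auxiliary implications, whence $\phi(\mathcal{F}(\partial)) \leq \phi(\partial)\cdot(|\rho|+2) < (|\rho|+1)^2(|\rho|+2)$. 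The general case $\Gamma\neq\emptyset$ is identical, except that the corresponding undischarged assumptions persist at the root, which is precisely what the recursion invariant records.

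I expect the main obstacle to be the height control itself: faithfully simulating each \textsc{LM}$_{\rightarrow}$ rule while guaranteeing that every subderivation is shared by $(\rightarrow I)$-abstraction instead of duplicated, and then verifying carefully at $(\text{\textsc{M}}E\rightarrow\rightarrow)$ that the two subheights combine by maximum and not by sum. A secondary technical hazard is the discharge bookkeeping: the simultaneous discharge of the minor assumption must target exactly the leaves assigned to the rule's active occurrence and must not accidentally close an identical formula already present in $\Gamma$, so $\mathcal{F}$ has to track assumption \emph{occurrences} rather than merely formulas, and the vacuous $(\rightarrow I)$ on $\alpha$ inside the gadget must leave exactly the conclusion's antecedent open.
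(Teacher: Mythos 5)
Your proposal is correct and follows essentially the same route as the paper: the same clause-by-clause recursion, the same vacuous-$(\rightarrow I)$ gadget deriving $\beta\rightarrow\gamma$ from $(\alpha\rightarrow\beta)\rightarrow\gamma$, the same decisive move of abstracting the right premise of $(\text{\textsc{M}}E\rightarrow\rightarrow)$ into $\gamma\rightarrow q$ so that heights combine by maximum, and the same bounds $6h(\partial)\leq 18|\rho|$, $\phi(\partial)\cdot(|\rho|+1)$ new formulas, and $\mu\leq 2|\rho|$. The one deviation is your auxiliary formula $(\beta\rightarrow\gamma)\rightarrow\beta$: the paper never creates it (it grafts the constant-height gadget directly at the $\beta\rightarrow\gamma$ assumption leaves, so $\gamma\rightarrow q$ is the only genuinely new formula), and if you do introduce it uniformly you should note that it adds one more formula per semi-subformula to the foundation count, which your stated factor $(|\rho|+2)$ does not quite absorb, though the bound remains polynomial.
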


\begin{proof}
$\mathcal{F}\left( \partial \right) $ is defined by straightforward
recursion on $h\left( \partial \right) $ by standard pattern $sequent\
deduction\hookrightarrow $ $\ natural\ deduction$, where sequent deduction
of $\Gamma \Longrightarrow \alpha $ is interpreted as a ND deduction of $%
\alpha $ from open assumptions occurring in $\Gamma $. The recursive clauses
are as follows.\medskip

\begin{enumerate}
\item  
\begin{equation*}
\fbox{$\left( \text{\textsc{M}}A\right) :\Gamma ,p\Rightarrow p$}\overset{%
\mathcal{F}}{\hookrightarrow }\fbox{$p$}\medskip
\end{equation*}

\item  \medskip 
\begin{equation*}
\fbox{$\left. 
\begin{array}{c}
\left( \text{\textsc{M}}I1\rightarrow \right) \ :\dfrac{\Gamma ,\alpha
\Rightarrow \beta }{\Gamma \Rightarrow \alpha \rightarrow \beta } \\ 
\left[ \left( \nexists \gamma \right) :\left( \alpha \rightarrow \beta
\right) \rightarrow \gamma \in \Gamma \right]
\end{array}
\right. $}\overset{\mathcal{F}}{\hookrightarrow }\fbox{$\dfrac{\QDATOP{%
\QDATOP{\left[ \alpha \right] }{\Downarrow }}{\beta }}{\alpha \rightarrow
\beta }\left( \rightarrow I\right) $}
\end{equation*}

\item  \medskip 
\begin{equation*}
\begin{array}{c}
\fbox{$\left( \text{\textsc{M}}I2\rightarrow \right) :\ \dfrac{\Gamma
,\alpha ,\beta \rightarrow \gamma \Rightarrow \beta }{\Gamma ,\left( \alpha
\rightarrow \beta \right) \rightarrow \gamma \Rightarrow \alpha \rightarrow
\beta }$}\overset{\mathcal{F}}{\hookrightarrow }\medskip \\ 
\fbox{$\QDATOP{\QDATOP{\left[ \alpha \right] ^{1}}{\Downarrow }\!\!\!\QDATOP{%
\QDATOP{\dfrac{\dfrac{\left[ \beta \right] ^{2}}{\alpha \rightarrow \beta }%
\left( \rightarrow I\right) \quad \QDATOP{{}}{\left( \alpha \rightarrow
\beta \right) \rightarrow \gamma }}{\underset{\_\_\_\_\_\_\_\_\_\_\_\_\_}{\
\ \gamma }}\left( \rightarrow E\right) }{\beta \rightarrow \gamma ^{\,\left[
2\right] }\ \ \quad }\left( \rightarrow I\right) }{\Downarrow \ \quad \quad
\ \quad \quad }}{\QDATOP{\!\!\!\!\searrow \!\!\!\!\!\searrow \quad \ \ \
\quad \ \ \quad \swarrow \!\!\!\!\!\swarrow \quad \quad \quad \quad \quad }{%
\dfrac{\!\!\!\!\beta \ }{\alpha \rightarrow \beta ^{\,\left[ 1\right] }}%
\left( \rightarrow I\right) \quad \quad }\quad \qquad \qquad }$}
\end{array}
\end{equation*}

\item  \medskip 
\begin{equation*}
\begin{array}{c}
\fbox{$\left. 
\begin{array}{c}
\left( \text{\textsc{M}}E\rightarrow P\right) :\dfrac{\Gamma ,p,\gamma
\Rightarrow q}{\Gamma ,p,p\rightarrow \gamma \Rightarrow q} \\ 
\left[ q\in \mathrm{VAR}\left( \Gamma ,\gamma \right) ,p\neq q\right]
\end{array}
\right. $}\overset{\mathcal{F}}{\hookrightarrow }\medskip \\ 
\fbox{$\QDATOP{\QDATOP{\quad \quad \quad }{\QDATOP{\quad \quad \QDATOP{p}{%
\Downarrow }\ \ \QDATOP{\dfrac{p\quad p\rightarrow \gamma }{\gamma \ }\left(
\rightarrow E\right) }{\Downarrow \ \ \quad \quad \ \ }\quad }{\searrow
\!\!\!\!\!\searrow \ \ \ \swarrow \!\!\!\!\!\swarrow \quad \quad \quad \quad 
}}}{q\quad \quad \quad \quad }$}
\end{array}
\end{equation*}

\item  \medskip\ 
\begin{equation*}
\begin{array}{c}
\fbox{$\left( \text{\textsc{M}}E\rightarrow \rightarrow \right) :\dfrac{%
\Gamma ,\alpha ,\beta \rightarrow \gamma \Rightarrow \beta \qquad \Gamma
,\gamma \Rightarrow q}{\Gamma ,\left( \alpha \rightarrow \beta \right)
\rightarrow \gamma \Rightarrow q}\ \left[ q\in \mathrm{VAR}\left( \Gamma
,\gamma \right) \right] $}\overset{\mathcal{F}}{\hookrightarrow }\medskip \\ 
\fbox{$\dfrac{\dfrac{\QDATOP{\QDATOP{\QDATOP{\left[ \alpha \right] ^{1}}{%
\Downarrow }\ \QDATOP{\QDATOP{\dfrac{\dfrac{\left[ \beta \right] ^{2}}{%
\alpha \rightarrow \beta }\quad \QDATOP{{}}{\left( \alpha \rightarrow \beta
\right) \rightarrow \gamma }}{\underset{\_\_\_\_\_\_\_\_\_\_\_}{\gamma }}}{%
\beta \rightarrow \gamma ^{\,\left[ 2\right] }}}{\Downarrow \ \ }}{\QDATOP{%
\searrow \!\!\!\!\!\searrow \ \ \ \quad \ \ \quad \swarrow
\!\!\!\!\!\swarrow }{\underset{\_\_\_\_\_\_\_\_\_\_\_\_}{\beta }\ }\quad
\qquad \qquad }}{\alpha \rightarrow \beta ^{\,\left[ 1\right] }\quad \qquad
\quad \quad }\QDATOP{\QDATOP{\QDATOP{\QDATOP{\QDATOP{\QDATOP{{}}{{}}}{{}}}{{}%
}}{{}}}{\QDATOP{{}}{{}}}}{\left( \alpha \rightarrow \beta \right)
\rightarrow \gamma }}{\gamma }\ \dfrac{\QDATOP{\QDATOP{\left[ \gamma \right]
_{3}}{\Downarrow }}{q}}{\gamma \rightarrow q^{\,\left[ 3\right] }}}{\quad
\quad \quad \quad \quad \quad \quad q}$}
\end{array}
\end{equation*}
\medskip
\end{enumerate}

Note that each embedding clause increases the height at most by $6$ (just as
in the case $\left( \text{\textsc{M}}E\rightarrow \rightarrow \right) $),
which yields $h\left( \mathcal{F}\left( \partial \right) \right) \leq 6\cdot
h\left( \partial \right) \leq 18\left| \rho \right| $ according to Lemma 2
(3). By the same token, formulas occurring in $\mathcal{F}\left( \partial
\right) $ include the ones occurring in $\partial $ together with possibly
new formulas $\gamma \rightarrow q$ (with old $\gamma $ and $q$) shown on
the right-hand side in the case $\left( \text{\textsc{M}}E\rightarrow
\rightarrow \right) $. There are at most $\phi \left( \partial \right) $ and 
$\left| \rho \right| +1$\ such $\gamma $ and $q$, respectively. Hence by
Lemma 2 (1, 4) we arrive at $\phi \left( \mathcal{F}\left( \partial \right)
\right) <\left( \left| \rho \right| +1\right) ^{2}+\left( \left| \rho
\right| +1\right) ^{2}\left( \left| \rho \right| +1\right) =\left( \left|
\rho \right| +1\right) ^{2}\left( \left| \rho \right| +2\right) $ and $\mu
\left( \mathcal{F}\left( \partial \right) \right) \leq 2\left| \rho \right| $%
, as required.
\end{proof}

\subsection{Horizontal tree-to-dag compression in \textsc{NM}$_{\rightarrow
} $}

We claim that any given tree-like \textsc{NM}$_{\rightarrow }$ deduction $%
\partial $ with root formula $\rho $ can be compressed into a dag-like 
\textsc{NM}$_{\rightarrow }$ deduction $\partial ^{\text{\textsc{c}}}$ of
the same conclusion $\rho $\ such that the size of $\partial ^{\text{\textsc{%
c}}}$ is at most $h\left( \partial \right) \times \phi \left( \partial
\right) $. In particular, if $\partial =\mathcal{F}\left( \partial
_{0}\right) $ for $\partial _{0}$ being a tree-like \textsc{LM}$%
_{\rightarrow }$ deduction of $\Rightarrow \rho $ and $\mathcal{F}$ the
embedding of Theorem 4, then $\partial ^{\text{\textsc{c}}}$ will be a
desired dag-like $\left| \rho \right| $-polysize \textsc{NM}$_{\rightarrow }$
deduction of $\rho $\textsc{. }The operation $\partial \hookrightarrow
\partial ^{\text{\textsc{c}}}$ (that we call \emph{horizontal compression})
runs by bottom-up recursion on $h\left( \partial \right) $ such that for any 
$n\leq h\left( \partial \right) $, the $n^{th}$\ horizontal section of $%
\partial ^{\text{\textsc{c}}}$ is obtained by merging all nodes with
identical formulas occurring in the $n^{th}$\ horizontal section of $%
\partial $ (this operation we call \emph{horizontal collapsing}). Thus the
horizontal compression is obtained by bottom-up iteration of the horizontal
collapsing. $\left| \partial ^{\text{\textsc{c}}}\right| \leq h\left(
\partial \right) \times \phi \left( \partial \right) $ is obvious, as the
size of every (compressed) $n^{th}$\ horizontal section of $\partial ^{\text{%
\textsc{c}}}$ can't exceed $\phi \left( \partial \right) $. It remains to
show that horizontal compression preserves the discharged assumptions. This
requires a more insightful consideration of dag-like deducibility that we
elaborate below.

\subsection{Dag-like deducibility in \textsc{NM}$_{\rightarrow }$}

We wish to elaborate, and work in, the space of dag-like natural deductions.
To begin with we observe that horizontal collapsing may extend the premises
of the underlying inferences. So let us denote by \textsc{NM}$_{\rightarrow
}^{\ast }$ a tree-like extension of \textsc{NM}$_{\rightarrow }$ that
contains multipremise rules of inference of the form

\begin{equation*}
\fbox{$\left( M\right) :\dfrac{\Gamma }{\gamma \ }$}
\end{equation*}
instead of original\ \textsc{NM}$_{\rightarrow }$\ rules $\left( \rightarrow
I\right) $, $\left( \rightarrow E\right) $. Here $\Gamma $ is a multiset
containing $\gamma $, and/or $\beta $, if $\gamma =\alpha \rightarrow \beta $%
, and/or arbitrary $\delta _{i}$ together with $\delta _{i}\rightarrow
\gamma $ $\left( i\in \left[ m\right] \right) $. Thus in particular, $\left(
M\right) $ includes repetition rules 
\begin{equation*}
\fbox{$\left( R\right) :\dfrac{\gamma }{\gamma \ }$}\ \fbox{$\left( R\right)
^{\ast }:\dfrac{\gamma \ \ \cdots \ \ \gamma }{\gamma \ }$}
\end{equation*}
as well as following inferences 
\begin{eqnarray*}
&&\fbox{$\left( \rightarrow I\right) ^{\ast }:\dfrac{\QDATOP{\QDATOP{\left[
\alpha \right] }{\vdots }}{\beta }\ \QDATOP{\QDATOP{{}}{{}}}{\cdots }\QDATOP{%
\QDATOP{\left[ \alpha \right] }{\vdots }}{\beta }}{\alpha \rightarrow \beta }
$}\ \fbox{$\left( \rightarrow I,R\right) ^{\ast }:\dfrac{\QDATOP{\QDATOP{%
\left[ \alpha \right] }{\vdots }}{\beta }\ \QDATOP{\QDATOP{{}}{{}}}{\cdots }%
\QDATOP{\QDATOP{\left[ \alpha \right] }{\vdots }}{\beta }\QDATOP{\QDATOP{{}}{%
{}}}{\gamma }\ \QDATOP{\QDATOP{{}}{{}}}{\cdots }\QDATOP{\QDATOP{{}}{{}}}{%
\gamma }}{\alpha \rightarrow \beta }$} \\
&&\fbox{$\left( \rightarrow E\right) ^{\ast }:\dfrac{\delta _{1}\ \ \delta
_{1}\rightarrow \gamma \ \cdots \ \delta _{m}\ \ \delta _{m}\rightarrow
\gamma }{\gamma \ }$} \\
&&\fbox{$\left( \rightarrow E,R\right) ^{\ast }:\dfrac{\delta _{1}\ \ \delta
_{1}\rightarrow \gamma \ \ \cdots \ \ \delta _{m}\ \ \delta _{m}\rightarrow
\gamma \ \ \gamma \ \cdots \ \gamma }{\gamma \ }$} \\
&&\fbox{$\left( \rightarrow I,E\right) :\dfrac{\QDATOP{\QDATOP{\left[ \alpha %
\right] }{\vdots }}{\beta }\quad \QDATOP{\QDATOP{{}}{{}}}{\delta }\quad 
\QDATOP{\QDATOP{{}}{{}}}{\delta \rightarrow \left( \alpha \rightarrow \beta
\right) }}{\alpha \rightarrow \beta \ }$} \\
&&\fbox{$\left( \rightarrow I,E,R\right) :\dfrac{\QDATOP{\QDATOP{\left[
\alpha \right] }{\vdots }}{\beta }\quad \QDATOP{\QDATOP{{}}{{}}}{\delta }%
\quad \QDATOP{\QDATOP{{}}{{}}}{\delta \rightarrow \left( \alpha \rightarrow
\beta \right) }\quad \QDATOP{\QDATOP{{}}{{}}}{\alpha \rightarrow \beta }}{%
\alpha \rightarrow \beta \ }$}
\end{eqnarray*}
Discharging in \textsc{NM}$_{\rightarrow }^{\ast }$ is inherited from 
\textsc{NM}$_{\rightarrow }$ via sub-occurrences of $\left( \rightarrow
I\right) $.

\begin{lemma}
Tree-like provability in \textsc{NM}$_{\rightarrow }^{\ast }$ is sound and
complete with respect to minimal propositional logic.
\end{lemma}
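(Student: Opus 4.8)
The plan is to derive both halves of the lemma from the corresponding properties of the base system \textsc{NM}$_{\rightarrow }$ established in the Prawitz Claim above, treating \textsc{NM}$_{\rightarrow }^{\ast }$ as a \emph{redundant} reformulation of \textsc{NM}$_{\rightarrow }$ in which each multipremise rule $(M)$ packages exactly one genuine Prawitz inference together with a (possibly empty) bundle of superfluous side premises. Thus neither the set of derivable end-formulas nor the associated open assumptions should change.

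\textbf{Completeness.} First I would observe that \textsc{NM}$_{\rightarrow }\subseteq $\textsc{NM}$_{\rightarrow }^{\ast }$ as proof systems: the Prawitz rule $(\rightarrow I)$ is precisely the instance of $(\rightarrow I)^{\ast }$ having a single subderivation of $\beta $ from $[\alpha ]$ and no repetition premises, while $(\rightarrow E)$ is the instance of $(\rightarrow E)^{\ast }$ with $m=1$. Consequently every tree-like \textsc{NM}$_{\rightarrow }$ deduction is, verbatim, a tree-like \textsc{NM}$_{\rightarrow }^{\ast }$ deduction with the same end-formula and the same open assumptions, so completeness of \textsc{NM}$_{\rightarrow }^{\ast }$ follows at once from completeness of \textsc{NM}$_{\rightarrow }$.

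\textbf{Soundness.} For the converse I would argue by induction on the height of a tree-like \textsc{NM}$_{\rightarrow }^{\ast }$ deduction $\partial $, pruning it to an \textsc{NM}$_{\rightarrow }$ deduction $\partial ^{\prime }$ of the same end-formula whose open assumptions form a subset of those of $\partial $. At the bottom inference $(M)$ with conclusion $\gamma $ and premise multiset $\Gamma $, I would single out one \emph{justifying} component of $\Gamma $: if $\Gamma $ contains a repetition premise $\gamma $, take the retained subderivation of that premise as $\partial ^{\prime }$, with no extra inference; if $\gamma =\alpha \rightarrow \beta $ and $\Gamma $ contains a subderivation of $\beta $ from $[\alpha ]$, keep one such and apply $(\rightarrow I)$; if $\Gamma $ contains a pair $\delta _{i},\ \delta _{i}\rightarrow \gamma $, keep it and apply $(\rightarrow E)$. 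Inspection of the listed rules $(R)$, $(R)^{\ast }$, $(\rightarrow I)^{\ast }$, $(\rightarrow I,R)^{\ast }$, $(\rightarrow E)^{\ast }$, $(\rightarrow E,R)^{\ast }$, $(\rightarrow I,E)$, $(\rightarrow I,E,R)$ shows that at least one such component is always present, so the choice is well-defined. Deleting the remaining premise subderivations and applying the induction hypothesis to the retained one yields a genuine \textsc{NM}$_{\rightarrow }$ inference. Soundness of \textsc{NM}$_{\rightarrow }$, together with monotonicity of minimal-logic consequence (enlarging the assumption set preserves consequence) and the fact that the open assumptions of $\partial ^{\prime }$ are contained in those of $\partial $, then completes the argument.

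\textbf{The main obstacle} is the discharge bookkeeping in the introduction-flavoured rules $(\rightarrow I)^{\ast }$, $(\rightarrow I,R)^{\ast }$, $(\rightarrow I,E)$ and $(\rightarrow I,E,R)$, especially the last two, where the conclusion $\alpha \rightarrow \beta $ may be justified either by an introduction that discharges $[\alpha ]$ or by a modus-ponens pair $\delta ,\ \delta \rightarrow (\alpha \rightarrow \beta )$. I would fix a single route per such inference so that $\partial ^{\prime }$ remains a well-formed \textsc{NM}$_{\rightarrow }$ tree, and verify that deleting the unused subderivations never strands an open discharge marker. This is exactly what the stipulation that discharging in \textsc{NM}$_{\rightarrow }^{\ast }$ is inherited from \textsc{NM}$_{\rightarrow }$ via sub-occurrences of $(\rightarrow I)$ guarantees: each binding of $[\alpha ]$ is local to a single subderivation, so pruning removes whole self-contained bindings and breaks none. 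Once this is checked, the induction closes and the lemma follows.
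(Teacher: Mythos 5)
Your proposal is correct. The completeness half coincides exactly with the paper's: \textsc{NM}$_{\rightarrow }$ is literally contained in \textsc{NM}$_{\rightarrow }^{\ast }$, so Prawitz completeness transfers verbatim. For soundness you take a genuinely different route. The paper disposes of it in one line by observing that every instance of $\left( M\right) $ is a \emph{strengthening} of a valid rule --- it merely adds superfluous premises to an instance of $\left( R\right) $, $\left( \rightarrow I\right) $ or $\left( \rightarrow E\right) $ --- hence is itself semantically valid, and soundness of the whole calculus follows by the standard induction without ever leaving the semantic level. You instead construct a syntactic pruning $\partial \hookrightarrow \partial ^{\prime }$ that extracts a genuine \textsc{NM}$_{\rightarrow }$ deduction by selecting one justifying component per $\left( M\right) $ inference and deleting the rest, and then invoke soundness of \textsc{NM}$_{\rightarrow }$ together with monotonicity of consequence. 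Both arguments work; yours is heavier because it obliges you to check the discharge bookkeeping (which you handle correctly: every potential discharge point of a retained assumption lies on the retained root path, so pruning strands no binding), but in exchange it yields a constructive translation of \textsc{NM}$_{\rightarrow }^{\ast }$ into \textsc{NM}$_{\rightarrow }$, which is strictly more information than the lemma asks for. The paper's semantic shortcut is the more economical route to the stated claim.
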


\begin{proof}
Completeness follows from Claim 3, as \textsc{NM}$_{\rightarrow }$ is
contained in \textsc{NM}$_{\rightarrow }^{\ast }$. Soundness is obvious, as
each $\left( M\right) $ strengthens valid rules $\left( R\right) $, $\left(
\rightarrow I\right) $ and/or $\left( \rightarrow E\right) $.
\end{proof}

Further on we upgrade \textsc{NM}$_{\rightarrow }^{\ast }$ to a desired
dag-like extension, \textsc{NM}$_{\rightarrow }^{\star }$. Let us start with
informal description (cf. formal definitions below). We'll consider only 
\emph{regular dags} (abbr.:\emph{\ redags}), which are specified as rooted
monoedge dags $\partial $ (the roots being the lowest vertices) whose
vertices (also called nodes) admit universal (i.e. path-invariant) height
assignment such that all leaves $x$ have the same height $h\left( x\right)
=h\left( \partial \right) $. We regard \textsc{NM}$_{\rightarrow }^{\star }$
deductions as labeled redags $\partial $ whose nodes can have arbitrary many
children and parents (as usual the roots, $\varrho \left( \partial \right) $%
, have no parents and the leaves have no children). Distinct children are
either singletons or conjugate pairs (mutually separated by fixed partitions 
\textsc{s}). Moreover, all nodes of $\partial $ are labeled with formulas by
a fixed assignment $\ell ^{\text{\textsc{f}}}$. The inferences $\left(
M\right) $ associated with $\partial $ are determined by standard local
correctness conditions on $\ell ^{\text{\textsc{f}}}$ and \textsc{s}, such
that $\ell ^{\text{\textsc{f}}}\left( \varrho \left( \partial \right)
\right) =\rho $, while children's $\ell ^{\text{\textsc{f}}}$-formulas
either coincide with the conclusion's ones or are premises $\beta $ of the
conclusion's $\ell ^{\text{\textsc{f}}}$-formulas $\alpha \rightarrow \beta $%
, or else are conjugate premises $\delta _{i}$, $\delta _{i}\rightarrow
\gamma $ of the conclusion's $\ell ^{\text{\textsc{f}}}$-formulas $\gamma $.
Besides, there is a fixed assignment $\ell ^{\text{\textsc{g}}}$ that is
defined for any edge $e=\left\langle u,v\right\rangle $ ($u$ being a parent
of $v$) that admits inverse branching below $u$. To put it more precisely we
consider descending chains $K\left( u\right) =\left[ u=x_{0},\cdots ,x_{k}%
\right] $ ($k>0$) in $\partial $ such that for all $0<i<k$, $x_{i}$ has
exactly one parent $x_{i+1}$,whereas $x_{k}$ has at least two parents (such $%
K\left( u\right) $ is uniquely determined by $u$). Having this we regard $%
\ell ^{\text{\textsc{g}}}\left( e\right) $ as a chosen nonempty set of
parents of $x_{k}$, called $\ell ^{\text{\textsc{g}}}$\emph{-grandparents}
of $v$ with respect to $u$. It is assumed that $\ell ^{\text{\textsc{g}}%
}\left( e\right) \subseteq \ell ^{\text{\textsc{g}}}\left( \left\langle
x_{i},x_{i-1}\right\rangle \right) $ holds for all $1<i\leq k$, while all
parents of $x_{k}$\ are $\ell ^{\text{\textsc{g}}}$-grandparents of some $%
x_{k-1}$'s children (with respect to $x_{k}$). Descending \emph{deduction
threads} connecting leaves with the root are naturally determined by $\ell ^{%
\text{\textsc{g}}}$-grandparents that are regarded as ``road signs'' showing
allowed ways from the leaves down to the root, when passing from $v$ to $%
x_{k}$ through $u$, as specified above. These parameters determine `global' 
\emph{discharging function} on the set of top formulas (also called
assumptions).

\subsubsection{Formal definitions}

\begin{definition}
Consider a rooted monoedge redag $D=\left\langle \text{\textsc{v}}\left(
D\right) ,\text{\textsc{e}}\left( D\right) \right\rangle $, \textsc{e}$%
\left( D\right) \subset \,$\textsc{v}$\left( D\right) ^{2}$. \textsc{v}$%
\left( D\right) $ and \textsc{e}$\left( D\right) $ are called the \emph{%
vertices} (or \emph{nodes}) and the \emph{edges} (ordered), respectively; if 
$\left\langle u,v\right\rangle \in \,$\textsc{e}$\left( D\right) $, then $u$
and $v$ are called \emph{parents} and \emph{children} of each other,
respectively. For any $u\in \,$\textsc{v}$\left( D\right) $ denote by $%
h\left( u,D\right) \geq 0$\ the \emph{height} of $u$ and let $h\!\left(
D\right) :=\max \left\{ h\left( u,D\right) :u\in \text{\textsc{v}}\left(
D\right) \right\} $ (the \emph{height }of $D$). Any $u\in \,$\textsc{v}$%
\left( D\right) $ has $\overrightarrow{\deg }\left( u,D\right) \geq 0$
children \textsc{c}$\left( u,D\right) :=\!\left\{ u^{\left( 1\right)
},\cdots ,u^{\left( \overrightarrow{\deg }\left( u,D\right) \right)
}\right\} $ and $\overleftarrow{\deg }\left( u,D\right) \geq 0$ parents 
\textsc{p}$\left( u,D\right) :=\!\left\{ u_{\left( 1\right) },\cdots
,u_{\left( \overleftarrow{\deg }\left( u,D\right) \right) }\right\} $ (both
ordered). \footnote{{\footnotesize That is, }$\overrightarrow{\deg }\left(
u,D\right) ${\footnotesize \ (resp. }$\overleftarrow{\deg }\left( u,D\right) 
${\footnotesize ) is the total \ number of targets with source }$u$%
{\footnotesize \ (resp. total number of sources with target }$u$%
{\footnotesize ), in }$D$.} Let \textsc{l}$\left( D\right) \!:=\left\{ u\in 
\text{\textsc{v}}\left( D\right) :\overrightarrow{\deg }\left( u,D\right)
=0\right\} $ (\emph{leaves}), and $\varrho \left( D\right) :=$ the root of $%
D $; thus \textsc{p}$\left( u,D\right) =\emptyset \Leftrightarrow u=\varrho
\!\left( D\right) \Leftrightarrow h\left( u,D\right) =0$ and \textsc{c}$%
\left( u,D\right) =\emptyset \Leftrightarrow u\in \text{\textsc{l}}\left(
D\right) \Leftrightarrow h\left( u,D\right) =h\left( D\right) $. With every $%
u\in \,$\textsc{v}$\left( D\right) \setminus $\textsc{l}$\left( D\right) $
we associate a fixed partition \footnote{{\footnotesize not necessarily
disjoint.}} \textsc{s}$\left( u,D\right) \subset \text{\textsc{c}}\left(
u,D\right) \cup \,$\textsc{c}$\left( u,D\right) ^{2}$ such that \textsc{c}$%
\left( u,D\right) \!=\!\left( \text{\textsc{s\negthinspace }}\left(
u,\!D\right) \cap \text{\textsc{v\negthinspace }}\left( D\right) \right)
\,\!\cup \!\,\left\{ x,y:\!\left\langle x,y\right\rangle \in \!\text{\textsc{%
s\negthinspace }}\left( u,D\right) \right\} $. Set \textsc{s}$\left(
D\right) :=\!\underset{u\in \text{\textsc{v}}\left( D\right) \setminus \text{%
\textsc{l}}\left( D\right) }{\bigcup }$\textsc{s}$\left( u,\!D\right) $, to
be abbreviated by \textsc{s}. By the same token, we'll often drop `$D$' in $%
h\left( u,D\right) $, $\overrightarrow{\deg }\left( u,D\right) $, $%
\overleftarrow{\deg }\left( u,D\right) $, \textsc{c}$\left( u,D\right) $, 
\textsc{p}$\left( u,D\right) $, $\varrho \left( D\right) $, $K\!\left(
u,\!D\right) $, $U\!\left( u,\!D\right) $ (see below), if $D$\ is clear from
the context. We let \thinspace \textsc{e}$_{0}\!\left( D\right) :=\!\left\{
\!\left\langle u,v\right\rangle \in \!\text{\textsc{e}}\left( D\right) :v\in 
\text{\textsc{l}}\left( D\right) \!\right\} $ (\emph{top edges}) and use
abbreviations $x\prec _{D}y:\Leftrightarrow `x$ \emph{occurs strictly below }%
$y$\emph{, in }$D$' and $x\preceq _{D}y:\Leftrightarrow x\preceq _{D}y\vee
x=y$. For any $u\!\in \,$\textsc{v}$\left( D\right) \!$ we let $K\!\left(
u,\!D\right) \!=\!\left[ u\!=\!x_{0},\cdots ,x_{k}=:U\!\left( u,\!D\right) 
\right] $ be the uniquely determined descending chain of maximal length such
that either $\overleftarrow{\deg }\left( u\right) \neq 1$ and $k=0$ or else $%
\left\langle x_{i+1},x_{i}\right\rangle \in \,$\textsc{e}$\left( D\right) $
and $\overleftarrow{\deg }\left( x_{i}\right) =1$, for all $i<k$. If Let 
\textsc{e}$_{\triangle }\!\left( D\right) :=\left\{ e=\left\langle
u,v\right\rangle \in \,\text{\textsc{e}}\left( D\right) :U\!\left(
u,D\right) \neq \varrho \right\} $. Thus $\overleftarrow{\deg }\left(
U\!\left( u\right) \right) >1$ holds for every $e=\left\langle
u,v\right\rangle \in \,$\textsc{e}$_{\triangle }\!\left( D\right) $. Note
that \textsc{e}$_{\triangle }\!\left( D\right) =\emptyset $, if $D$ is a
tree.

Let $\partial =\left\langle D,\text{\textsc{s}},\ell ^{\text{\textsc{f}}%
},\ell ^{\text{\textsc{g}}}\right\rangle $ extend $\left\langle D,\text{%
\textsc{s}}\right\rangle $ by labeling functions $\ell ^{\text{\textsc{f}}}:$%
\textsc{\thinspace \thinspace v}$\left( D\right) \!\rightarrow \ $\textsc{%
\negthinspace f}$\left( \mathcal{L}_{\rightarrow }\right) $\ and $\ell ^{%
\text{\textsc{g}}}:$\textsc{\thinspace \thinspace e}$_{\triangle }\!\left(
D\right) \rightarrow \!\,\wp \left( \text{\textsc{v\negthinspace }}\left(
D\right) \right) $, where \textsc{\negthinspace f}$\left( \mathcal{L}%
_{\rightarrow }\right) $ is the set of $\mathcal{L}_{\rightarrow }$
formulas. $\partial $ is called a \emph{plain\ }(or \emph{unencoded}) \emph{%
dag-like }\textsc{NM}$_{\rightarrow }^{\star }\!$\emph{\ deduction} iff the
following local correctness conditions hold (along with standard ones with
regard to $\left\langle D,\text{\textsc{s}}\right\rangle $).

\begin{enumerate}
\item  For any $u\in \,$\textsc{v}$\left( D\right) $\ and $x,y\in \,$\textsc{%
c}$\left( u\right) $ it holds:

\begin{enumerate}
\item  $h\left( x\right) =h\left( y\right) =h\left( u\right) +1$,

\item  if $x\in \,$\textsc{s}$\left( u\right) $ then either $\ell ^{\text{%
\textsc{f}}}\left( u\right) =\ell ^{\text{\textsc{f}}}\left( x\right) $

or $\ell ^{\text{\textsc{f}}}\left( u\right) =\alpha \rightarrow \ell ^{%
\text{\textsc{f}}}\left( x\right) $ [abbr.: $\left\langle u,x\right\rangle
\in \left( \rightarrow I\right) _{\alpha }$]

for a (uniquely determined) $\alpha \in \,$\textsc{f}$\left( \mathcal{L}%
_{\rightarrow }\right) $,

\item  $\left\langle x,y\right\rangle \in \,$\textsc{s}$\left( u\right) $
implies $\ell ^{\text{\textsc{f}}}\left( y\right) =\ell ^{\text{\textsc{f}}%
}\left( x\right) \rightarrow \ell ^{\text{\textsc{f}}}\left( u\right) $.
\end{enumerate}

\item  For any $e\!=\!\left\langle u,v\right\rangle \!\in \!\,\text{\textsc{e%
}}_{\triangle }\!\left( D\right) \!$ and $w\in \,$\textsc{c}$\left( u\right) 
$ it holds:

\begin{enumerate}
\item  $\emptyset \neq \ell ^{\text{\textsc{g}}}\!\left( e\right)
\!\subseteq \,$\textsc{p}$\left( U\!\left( u\right) \right) $,

\item  $\left\langle v,w\right\rangle \in \,$\textsc{s}$\left( u\right) $
implies $\ell ^{\text{\textsc{g}}}\!\left( e\right) =\ell ^{\text{\textsc{g}}%
}\!\left( \left\langle u,w\right\rangle \right) $,

\item  $\overleftarrow{\deg }\left( v\right) =1$ implies $\ell ^{\text{%
\textsc{g}}}\!\left( e\right) \!=\underset{z\in \text{\textsc{c}}\left(
v\right) }{\bigcup }\ell ^{\text{\textsc{g}}}\!\left( \left\langle
v,z\right\rangle \right) $.
\end{enumerate}

\item  For any $u\in \,$\textsc{v}$\left( D\right) \setminus $\textsc{l}$%
\left( D\right) $,

$\overleftarrow{\deg }\left( u\right) >1$ implies \textsc{p}$\left( u\right)
\subseteq \underset{v\in \text{\textsc{c}}\left( u\right) }{\bigcup }\ell ^{%
\text{\textsc{g}}}\!\left( \left\langle u,v\right\rangle \right) $.
\end{enumerate}

Denote by $\mathcal{D}^{\star }$ the set of plain dag-like \textsc{NM}$%
_{\rightarrow }^{\star }$ deductions.
\end{definition}

\begin{definition}
For any $\partial =\left\langle D,\text{\textsc{s}},\ell ^{\text{\textsc{f}}%
},\ell ^{\text{\textsc{g}}}\right\rangle \in \mathcal{D}^{\star }$, $%
e=\!\left\langle u,v\right\rangle \in \!\ $\textsc{e}$\left( D\right) $, $%
z\prec _{D}u$, let \textsc{th}$\left( e,z,\partial \right) $ be the set of 
\emph{deduction threads} $\Theta =\left[ v=x_{0},u=x_{1},\cdots ,x_{n}=z%
\right] $ connecting $e$ with $z$, where any $\Theta $ in question is a
descending chain such that for every $i<n$, $\left\langle
x_{i+1},x_{i}\right\rangle $ $\in \!\ $\textsc{e}$\left( D\right) $ and
either $\overleftarrow{\deg }\left( x_{i}\right) =1$ or else $\overleftarrow{%
\deg }\left( x_{i}\right) >1$ and $x_{i+1}\in \ell ^{\text{\textsc{g}}%
}\!\left( \left\langle x_{j+1},x_{j}\right\rangle \right) $, where $j:=\max
\left\{ k<i:k=0\vee \overleftarrow{\deg }\left( x_{k}\right) >1\right\} $.
Now $\alpha \in \text{\textsc{\negthinspace f}}\left( \mathcal{L}%
_{\rightarrow }\right) $\ is called an \emph{open} (or \emph{undischarged}) 
\emph{assumption} in $\partial $\ if there is a $\Theta \in \!\text{\textsc{%
th\negthinspace }}\left( e\!,\varrho ,\partial \right) $ \negthinspace for $%
e=\!\left\langle u,v\right\rangle \in \!\ $\textsc{e}$_{0}\left( D\right) $
and $\ell ^{\text{\textsc{f}}}\left( v\right) \!=\!\alpha $ $\!$that
\negthinspace contains \negthinspace no $\!\left\langle
x_{i+1},x_{i}\right\rangle \!\in \!\left( \rightarrow \!I\right) _{\alpha }$%
, $i<n$; such $\Theta $\ is called an \emph{open thread}, in $\partial $.
Denote by $\Gamma _{\partial }$\ the set of open assumptions in $\partial $.
Call $\partial $\ a \emph{dag-like }\textsc{NM}$_{\rightarrow }^{\star }$ 
\emph{deduction} of $\rho :=\ell ^{\text{\textsc{f}}}\left( \varrho \right) $%
\ from $\Gamma _{\partial }$. If $\Gamma _{\partial }=\emptyset $, then is
called a \emph{dag-like }\textsc{NM}$_{\rightarrow }^{\star }$ \emph{proof}
of $\rho $.
\end{definition}

In the sequel \textsc{NM}$_{\rightarrow }^{\star }$ deductions (proofs) are
also called \emph{plain dag-like} \textsc{NM}$_{\rightarrow }$ \emph{%
deductions} (\emph{proofs}). \footnote{{\footnotesize Here and below `plain'
means `unencoded' (see 3.6.1, below)}}\ Note that in the tree-like domain
such dag-like (actually redag-like) provability is equivalent to canonical
tree-like \textsc{NM}$_{\rightarrow }$ provability. Indeed, in any tree-like
deduction, every leaf has exactly one deduction thread, and hence $\ell ^{%
\text{\textsc{g}}}$ can be dropped entirely. Also note that \textsc{NM}$%
_{\rightarrow }^{\ast }$ (and hence also \textsc{NM}$_{\rightarrow }$) is
tree-like embeddable into \textsc{NM}$_{\rightarrow }^{\star }$ by iterating
the repetition rule $\left( R\right) $, if necessary, in order to fulfill
the redag height condition $h\left( x\right) =h\left( \partial \right) $,
for all leaves $x$. Obviously this operation preserves $h\left( \partial
\right) $, $\phi \left( \partial \right) $ and $\mu \left( \partial \right) $%
.

\subsection{Horizontal compression continued}

Let us go back to the horizontal compression $\partial \hookrightarrow
\partial ^{\text{\textsc{c}}}$, where without loss of generality we assume
that $\partial $ is an arbitrary tree-like \textsc{NM}$_{\rightarrow
}^{\star }$ deduction of $\rho $. \footnote{{\footnotesize That is, every
node }$x\neq \varrho \left( \partial \right) ${\footnotesize \ has exactly
one parent.}}\ To complete our recursive definition of $\partial ^{\text{%
\textsc{c}}}$\ via horizontal collapsing (see 3.3 above) it remains to
specify $\ell ^{\text{\textsc{g}}}$. So let us take a closer look at the
structure of $\partial ^{\text{\textsc{c}}}$. For any $n\leq h\left(
\partial \right) $, denote by $\partial _{n}^{\text{\textsc{c}}%
}=\left\langle D_{n},\text{\textsc{s}}_{n},\ell _{n}^{\text{\textsc{f}}%
},\ell _{n}^{\text{\textsc{g}}}\right\rangle $ a deduction that is obtained
after executing the $n^{th}$ recursive step in question. Note that $\partial
_{0}^{\text{\textsc{c}}}=\partial $ and $\partial _{h\left( \partial \right)
}^{\text{\textsc{c}}}=\partial ^{\text{\textsc{c}}}$. Moreover, for any $%
i\leq n<j$ we have $L_{i}\left( D_{n}\right) =L_{i}\left( D_{h\left(
\partial \right) }\right) $, $L_{j}\left( D_{n}\right) =L_{j}\left(
D_{0}\right) $ and $h\left( D_{n}\right) =h\left( D_{0}\right) =h\left(
D_{h\left( \partial \right) }\right) $, where $L_{k}\left( D_{m}\right)
:=\left\{ x\in \text{\textsc{v}}\left( D_{m}\right) :h\left( x\right)
=k\right\} $ (= the $k^{th}$ section of $\partial _{m}^{\text{\textsc{c}}}$%
). Besides, if $n<h\left( \partial \right) $, then all $x\in $ $%
L_{n+1}\left( D_{n}\right) $ are the roots of the corresponding (maximal)
tree-like subgraphs of $\partial $, while $\partial _{n+1}^{\text{\textsc{c}}%
}$ arises from $\partial _{n}^{\text{\textsc{c}}}$\ by merging distinct $%
x\in $ $L_{n+1}\left( D_{n}\right) $ labeled with identical formulas, $\ell
^{\text{\textsc{f}}}\left( x\right) $, and defining edges by the
corresponding homomorphism. Thus $L_{n+1}\left( D_{n+1}\right) \subseteq
L_{n+1}\left( D_{n}\right) $, while $x\neq y\in $ $L_{n+1}\left(
D_{n+1}\right) $ implies $\ell ^{\text{\textsc{f}}}\left( x\right) \neq \ell
^{\text{\textsc{f}}}\left( y\right) $. (If $L_{n+1}\left( D_{n+1}\right)
=L_{n+1}\left( D_{n}\right) $, then $\partial _{n+1}^{\text{\textsc{c}}%
}=\partial _{n}^{\text{\textsc{c}}}$ and $\ell _{n+1}^{\text{\textsc{g}}%
}=\ell _{n}^{\text{\textsc{g}}}$.) Now suppose $L_{n+1}\left( D_{n+1}\right)
\neq L_{n+1}\left( D_{n}\right) $, $n<h\left( D_{0}\right) $, and let $%
M_{n+1}\subseteq L_{n+1}\left( D_{n+1}\right) $ be the set of all merge
points in $\partial _{n+1}^{\text{\textsc{c}}}$. The $\ell _{n+1}^{\text{%
\textsc{g}}}$-grandparents are defined as follows. For any $e=\left\langle
u,v\right\rangle \in $\thinspace \textsc{e}$_{\triangle }\!\left(
D_{n+1}\right) $, $u\in L_{j}\left( D_{n+1}\right) $, $v\in L_{j+1}\left(
D_{n+1}\right) $, $j<h\left( D_{0}\right) $, consider $K\left(
u,D_{n+1}\right) =\left[ u=x_{0},\cdots ,x_{k}\right] $. (Note that $%
x_{i}\in L_{j}\left( D_{n}\right) $ for all but at most one $x_{i}$, $i\leq
k $.) We let $\ell _{n+1}^{\text{\textsc{g}}}\left( e\right) :=\ell _{n}^{%
\text{\textsc{g}}}\left( e\right) $ except for the following two cases.

\begin{enumerate}
\item  Suppose $j=n+1$ and $v\in M_{n+1}$. We let $\ell _{n+1}^{\text{%
\textsc{g}}}\left( e\right) $\ be the union of all $\ell _{n}^{\text{\textsc{%
g}}}\left( \left\langle u,w\right\rangle \right) $ such that $w\in \text{%
\textsc{c}}\left( u,\!D_{n}\right) $ and $\ell ^{\text{\textsc{f}}}\left(
v\right) =\ell ^{\text{\textsc{f}}}\left( w\right) $.

\item  Suppose $h\left( \partial \right) -1\geq j>n+1$, $x_{k}\in M_{n+1}$
and \textsc{p}$\left( x_{k-1},\!D_{n}\right) =\left\{ y\right\} $, while 
\textsc{p}$\left( y,\!D_{n}\right) =\left\{ y_{\left( 1\right) }\right\} $,
i.e. $y_{\left( 1\right) }$ is the only parent of $y$ in $\partial _{n}^{%
\text{\textsc{c}}}$. Then we let $\ell _{n+1}^{\text{\textsc{g}}}\left(
e\right) :=\left\{ y_{\left( 1\right) }\right\} $.
\end{enumerate}

Having this we observe that $\partial _{n+1}^{\text{\textsc{c}}}$ preserves
the open (resp. closed) assumptions of $\partial _{n}^{\text{\textsc{c}}}$.
The same conclusion with regard to $\partial $ and $\partial ^{\text{\textsc{%
c}}}$ follows immediately by induction on $n\leq h\left( D_{0}\right)
=h\left( \partial \right) $. In particular, if $\partial $ is a tree-like 
\textsc{NM}$_{\rightarrow }$ proof of $\rho $, then $\partial ^{\text{%
\textsc{c}}}$ is a plain dag-like \textsc{NM}$_{\rightarrow }$ proof of $%
\rho $. This completes our informal description of the required tree-to-dag
horizontal compression\ $\partial \hookrightarrow \partial ^{\text{\textsc{c}%
}}$. Formal definitions are shown below.

\subsubsection{Horizontal collapsing}

Recall that horizontal compression $\partial \hookrightarrow \partial ^{%
\text{\textsc{c}}}$ is obtained by bottom-up iteration of the \emph{%
horizontal collapsing} that merges distinct nodes labeled with identical
formulas occurring in the same horizontal section of $\partial $. Our next
definition will formalize the latter operation. In the sequel for any $D$
and $x\in \,$\textsc{v}$\left( D\right) $ we let $\left( D\right)
_{x}:=\left\langle \text{\textsc{v}}\left( \text{\negthinspace }\left(
D\right) _{x}\text{\negthinspace }\right) ,\text{\textsc{e}}\left( \left(
D\right) _{x}\right) \right\rangle $ for \textsc{v}$\left( \left( D\right)
_{x}\right) =\left\{ y\in \text{\textsc{v}}\left( D\right) :x\preceq
_{D}y\right\} $ and \textsc{e}$\left( \left( D\right) _{x}\right) =\,$%
\textsc{e}$\left( D\right) \,\cap \,\text{\textsc{v\negthinspace }}\left(
\left( D\right) _{x}\right) ^{2}$. For any $n>0$ we let $L_{n}\left(
D\right) :=\left\{ x\in \text{\textsc{v}}\left( D\right) :h\left( x\right)
=n\right\} $ and denote by $\mathcal{D}_{n}^{\star }$\ the set of dag-like
deductions $\partial =\left\langle D,\text{\textsc{s}},\ell ^{\text{\textsc{f%
}}},\ell ^{\text{\textsc{g}}}\right\rangle \in \mathcal{D}^{\star }$ such
that $\left( D\right) _{x}$ are pairwise disjoint (sub)trees, for all $x\in
L_{n}\left( D_{n}\right) $. Note that $\mathcal{D}_{n}^{\star }=\mathcal{D}%
^{\star }$ for $n>h\!\left( D\right) $, while $\mathcal{D}_{1}^{\star }$
consists of all tree-like \textsc{NM}$_{\rightarrow }^{\ast }$ deductions
(see above). So in the sequel we'll rename $\mathcal{D}_{1}^{\star }$ to $%
\mathcal{T}^{\ast }$ and denote its elements by $\left\langle T,\text{%
\textsc{s}},\ell ^{\text{\textsc{f}}}\right\rangle $, rather than $%
\left\langle D,\text{\textsc{s}},\ell ^{\text{\textsc{f}}},\ell ^{\text{%
\textsc{g}}}\right\rangle $ (recall that $\ell ^{\text{\textsc{g}}}$ is
irrelevant in the tree-like case).

\begin{definition}[horizontal collapsing]
Suppose $\partial =\left\langle D,\text{\textsc{s}},\ell ^{\text{\textsc{f}}%
},\ell ^{\text{\textsc{g}}}\text{\textsc{\thinspace }}\right\rangle \in 
\mathcal{D}_{n}^{\star }$, $n\leq h\!\left( D\right) $, $\alpha \in \,$%
\textsc{f}$\left( \mathcal{L}_{\rightarrow }\right) $ and $S_{n,\alpha
}=\left\{ y\in L_{n}\left( D\right) :\ell ^{\text{\textsc{f}}}\left(
y\right) =\alpha \right\} $, $\left| S_{n,\alpha }\right| >1$. Moreover let $%
r\in S_{n,\alpha }$ be fixed. Let $C_{\alpha }=\underset{y\in S_{n,\alpha }}{%
\bigcup }$\textsc{c}$\left( y,D\right) $ and denote by $\left( D\right)
_{\alpha ,r}$ a tree extending upper subtrees $\underset{z\in C_{\alpha }}{%
\bigcup }\left( D\right) _{z}$ by a new root $r$. We construct a dag-like
deduction $\partial _{n,\alpha }^{\text{\textsc{c}}}=\left\langle
D_{n,\alpha },\text{\textsc{s}}_{n,\alpha },\ell _{n,\alpha }^{\text{\textsc{%
f}}},\ell _{n,\alpha }^{\text{\textsc{g}}}\right\rangle $ by collapsing $%
S_{n,\alpha }$ to $\left\{ r\right\} $. To put it more precisely, we
stipulate:

\begin{enumerate}
\item  $D_{n,\alpha }$\ arises from $D$ by substituting $\left( D\right)
_{\alpha ,r}$ for $\left( D\right) _{r}$ and deleting $\left( D\right) _{y}$
for all $r\neq y\in S_{n,\alpha }$. That is, in the formal terms, we have 
\begin{equation*}
\text{\textsc{v}}\left( D_{n,\alpha }\right) =\left( \text{\textsc{v}}\left(
D\right) \setminus \underset{y\in S_{n,\alpha }}{\bigcup }\text{\textsc{v}}%
\left( \left( D\right) _{y}\right) \right) \cup \,\text{\textsc{v}}\left(
\left( D\right) _{\alpha ,r}\right) \ and\ \text{\textsc{e\thinspace }}%
\left( D_{n,\alpha }\right) \!=
\end{equation*}
\begin{equation*}
\,\left( \!\text{\textsc{e\negthinspace \thinspace }}\left( D\right) \cap 
\text{\textsc{v\negthinspace }}\left( D_{n,\alpha }\right) ^{2}\right) \cup
\left\{ \!\!\left\langle r,v\right\rangle \!:v\!\in \!\underset{y\in
S_{n,\alpha }}{\bigcup }\!\!\text{\textsc{c}}\left( y,\!D\right) \!\right\}
\cup \left\{ \!\!\left\langle u,r\right\rangle \!:u\!\in \!\underset{y\in
S_{n,\alpha }}{\bigcup }\!\!\text{\textsc{p}}\left( y,\!D\right) \!\right\}
\!.
\end{equation*}

\item  For any $u\in \,$\textsc{v}$\left( D_{n,\alpha }\right) $ we define 
\textsc{s}$_{n,\alpha }\!\left( u,D_{n,\alpha }\right) $ by cases as follows.

\begin{enumerate}
\item  If $u\notin \left\{ r\right\} \cup \underset{y\in S_{n,\alpha }}{%
\bigcup }\!$\textsc{p}$\left( y,D\right) $, then \textsc{s}$_{n,\alpha
}\!\left( u,D_{n,\alpha }\right) :=\,$\textsc{s}$\left( u,D\right) $.

\item  \textsc{s}$_{n,\alpha }\!\left( u,D_{n,\alpha }\right) :=\underset{%
y\in S_{n,\alpha }}{\bigcup }$\textsc{s}$\left( y,D\right) $.

\item  Suppose $u\in \underset{y\in S_{n,\alpha }}{\bigcup }\!$\textsc{p}$%
\left( y,D\right) $. We let \textsc{s}$_{n,\alpha }\!\left( u,D_{n,\alpha
}\right) :=X\cup Y$, where

$X=\ \left( \text{\textsc{s\negthinspace }}\left( u,D\right) \cap
L_{n}\!\left( D_{n,\alpha }\right) \right) \cup \left\{ r\right\} $ and

$Y=\left\{ \!\left\langle y_{0},y_{1}\right\rangle \in L_{n}\!\left(
D_{n,\alpha }\right) ^{2}:\!\!\left. \! 
\begin{array}{c}
\left( \exists \left\langle x_{0},x_{1}\right\rangle \in \text{\textsc{s}}%
\left( u,D\right) \right) \left( \forall j\leq 1\right) \\ 
\left( x_{j}=y_{j}\vee \left( r\neq x_{j}\in S_{n,\alpha }\wedge
y_{j}=r\right) \right)
\end{array}
\!\!\right. \!\!\right\} $.
\end{enumerate}

\item  For any $u\in \,$\textsc{v}$\left( D_{n,\alpha }\right) $ we let $%
\ell _{n,\alpha }^{\text{\textsc{f}}}\!\left( u\right) :=\ell ^{\text{%
\textsc{f}}}\left( u\right) $.

\item  For any $e=\!\left\langle u,v\right\rangle \in \text{\textsc{e}}%
_{\triangle }\!\left( D_{n,\alpha }\right) $ \negthinspace and $K\left(
u,D_{n,\alpha }\right) =\left[ u=x_{0},\cdots ,x_{k}\right] $ we define $%
\ell _{n,\alpha }^{\text{\textsc{g}}}\!\left( e\right) $, where $u\in
L_{j}\!\left( D_{n,\alpha }\right) $, $v\in L_{j+1}\!\left( D_{n,\alpha
}\right) $ for $j<h\left( D\right) $. We can just as well assume that $v\in
\,$\textsc{l}$\left( D_{n,\alpha }\right) $ or $\overleftarrow{\deg }\left(
v,D_{n,\alpha }\right) >1$ and define the rest according to clause 2 (c) of
Definition 6 by induction on $h\left( D\right) -j$. So assuming $v\in \,$%
\textsc{l}$\left( D_{n,\alpha }\right) \vee \overleftarrow{\deg }\left(
v,D_{n,\alpha }\right) >1$ consider the following cases. (Note that (c) and
(e)$_{ii}$ are the only cases with $\ell _{n,\alpha }^{\text{\textsc{g}}%
}\!\left( e\right) \neq \ell ^{\text{\textsc{g}}}\!\left( e\right) $.)

\begin{enumerate}
\item  Suppose $j+1<n$. Then $\ell _{n,\alpha }^{\text{\textsc{g}}}\!\left(
e\right) :=\ell ^{\text{\textsc{g}}}\!\left( e\right) $.

\item  Suppose $j+1=n$ and $v\neq r$. Then $\ell _{n,\alpha }^{\text{\textsc{%
g}}}\!\left( e\right) :=\ell ^{\text{\textsc{g}}}\!\left( e\right) $.

\item  Suppose $j+1=n$ and $v=r$. Then $\ell _{n,\alpha }^{\text{\textsc{g}}%
}\!\left( e\right) :=\underset{w\in \text{\textsc{c}}\left( u,D\right) \cap
S_{n,\alpha }}{\bigcup }\ell ^{\text{\textsc{g}}}\left( \left\langle
u,w\right\rangle \right) $.

\item  Suppose $j+1>n$ (and hence $v\in \,$\textsc{l}$\left( D\right) $) and 
$r\notin K\left( u,D_{n,\alpha }\right) $. Then $\ell _{n,\alpha }^{\text{%
\textsc{g}}}\!\left( e\right) :=\ell ^{\text{\textsc{g}}}\!\left( e\right) $.

\item  Suppose $j+1>n$, $x_{k}=r$ and \textsc{p}$\left( x_{k-1},\!D\right)
=\left\{ y\right\} $. Then:

\begin{enumerate}
\item  if $\overleftarrow{\deg }\left( y,D\right) >1$, then $\ell _{n,\alpha
}^{\text{\textsc{g}}}\!\left( e\right) :=\ell ^{\text{\textsc{g}}}\!\left(
e\right) $,

\item  if \textsc{p}$\left( y,\!D\right) =\left\{ y_{\left( 1\right)
}\right\} $ (thus $\overleftarrow{\deg }\left( y,D\right) =1$), then $\ell
_{n,\alpha }^{\text{\textsc{g}}}\!\left( e\right) :=\left\{ y_{\left(
1\right) }\right\} $.
\end{enumerate}
\end{enumerate}
\end{enumerate}

To complete the $\left( n,\alpha \right) $\emph{-collapsing operation} $%
\partial \hookrightarrow \partial _{n,\alpha }^{\text{\textsc{c}}}$, let $%
\partial _{n,\alpha }^{\text{\textsc{c}}}:=\partial $ in the case $\left|
S_{n,\alpha }\right| =1$. Now let $\partial _{n}^{\text{\textsc{c}}}$ arise
from $\partial $ by applying $\left( n,\alpha \right) $-collapsing
successively to all $\alpha =\ell _{n}^{\text{\textsc{f}}}\left( x\right) $, 
$x\in L_{n}\!\left( D\right) $, and arbitrary $r\in S_{n,\alpha }$. Thus $%
\partial _{n}^{\text{\textsc{c}}}$ is the iteration of $\partial _{n,\alpha
}^{\text{\textsc{c}}}$\ with respect to all $\alpha $ occurring in the $%
n^{th}$\ section of $D$. The operation $\partial \hookrightarrow \partial
_{n}^{\text{\textsc{c}}}$ is called the \emph{horizontal collapsing on level}
$n$, in \textsc{NM}$_{\rightarrow }^{\star }$.
\end{definition}

\begin{lemma}
For any $\partial =\left\langle D,\text{\textsc{s}},\ell ^{\text{\textsc{f}}%
},\ell ^{\text{\textsc{g}}}\text{\textsc{\thinspace }}\right\rangle \in 
\mathcal{D}_{n}^{\star }$, $n\leq h\!\left( D\right) $, and $\partial _{n}^{%
\text{\textsc{c}}}=\left\langle D_{n},\text{\textsc{s}}_{n},\ell _{n}^{\text{%
\textsc{f}}},\ell _{n}^{\text{\textsc{g}}}\text{\textsc{\thinspace }}%
\right\rangle $, the following conditions 1--5 hold.

\begin{enumerate}
\item  $\partial _{n}^{\text{\textsc{c}}}\in \mathcal{D}_{n}^{\star }$.

\item  \textsc{v}$\left( D_{n}\right) \subseteq \text{\textsc{v}}\left(
D\right) $, $\varrho \!\left( D_{n}\right) =\varrho \!\left( D\right) $ and $%
h\!\left( D_{n}\right) =h\!\left( D\right) $.

\item  For any $n\neq i\leq h\!\left( D\right) $, $L_{i}\!\left(
D_{n}\right) =L_{i}\!\left( D\right) $, while $L_{n}\!\left( D_{n}\right)
\subseteq L_{n}\!\left( D\right) $ and $\left| L_{n}\!\left( D_{n}\right)
\right| \leq \phi \left( \partial \right) $.

\item  For any $i\leq h\!\left( D\right) $, $\ell ^{\text{\textsc{f}}}\left(
L_{i}\!\left( D_{n}\right) \right) =\ell ^{\text{\textsc{f}}}\left(
L_{i}\!\left( D\right) \right) $. Thus $\partial _{n}^{\text{\textsc{c}}}$
and $\partial $\ have the same formulas, and hence $\phi \left( \partial
_{n}^{\text{\textsc{c}}}\right) =\phi \left( \partial \right) $.

\item  \textsc{e}$_{0}\!\left( D_{n}\right) \subseteq $ \textsc{e}$%
_{0}\!\left( D\right) $\ and $\Gamma _{\partial _{n}^{\text{\textsc{c}}%
}}=\Gamma _{\partial }$.
\end{enumerate}
\end{lemma}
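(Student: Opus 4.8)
The plan is to reduce the statement to the single-formula case and then iterate, following the definition of $\partial_n^{\text{\textsc{c}}}$ in Definition 8 as the successive composition of the $\left( n,\alpha \right)$-collapses over all formulas $\alpha$ occurring in the $n^{th}$ section. The key organizing observation is that for distinct $\alpha ,\alpha ^{\prime }$ the sets $S_{n,\alpha }$ and $S_{n,\alpha ^{\prime }}$ are disjoint subsets of $L_{n}\!\left( D\right) $ (they carry different $\ell ^{\text{\textsc{f}}}$-labels), so the individual collapses act on disjoint strata of level $n$ and their effects on the rest of the deduction commute. Hence I would first establish conditions 1--5 for one $\left( n,\alpha \right)$-collapse $\partial \hookrightarrow \partial _{n,\alpha }^{\text{\textsc{c}}}$, and then lift them to $\partial _{n}^{\text{\textsc{c}}}$ by induction on the number of distinct formulas at level $n$, checking at each step that $S_{n,\alpha ^{\prime }}$ is untouched by an earlier $\alpha $-collapse so that the iteration stays well-defined inside $\mathcal{D}_{n}^{\star }$.

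Conditions 2, 3 and 4 are essentially bookkeeping read off Definition 8. For condition 2, the digraph $D_{n,\alpha }$ is obtained by deleting the subtrees $\left( D\right) _{y}$ ($r\neq y\in S_{n,\alpha }$) and regrafting their upper parts onto $r$ inside $\left( D\right) _{\alpha ,r}$, so no vertex at level $<n$ is touched, the root at level $0$ is untouched (for $n>0$; for $n=0$ no merge occurs since $L_{0}=\left\{ \varrho \right\} $), and leaf-heights are preserved, giving $\text{\textsc{v}}\left( D_{n}\right) \subseteq \text{\textsc{v}}\left( D\right) $, $\varrho \!\left( D_{n}\right) =\varrho \!\left( D\right) $ and $h\!\left( D_{n}\right) =h\!\left( D\right) $. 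For condition 3, collapsing only identifies nodes inside level $n$ while the untouched upper subtrees keep all other strata intact, so $L_{i}\!\left( D_{n}\right) =L_{i}\!\left( D\right) $ for $i\neq n$ and $L_{n}\!\left( D_{n}\right) \subseteq L_{n}\!\left( D\right) $; after the full level-$n$ collapse the surviving level-$n$ nodes carry pairwise distinct formulas drawn from the foundation, whence $\left| L_{n}\!\left( D_{n}\right) \right| \leq \phi \left( \partial \right) $. Condition 4 is immediate from $\ell _{n,\alpha }^{\text{\textsc{f}}}=\ell ^{\text{\textsc{f}}}$ (Definition 8(3)) together with the fact that each collapse retains exactly one representative $r$ of the class $S_{n,\alpha }$, so the per-level formula sets, and therefore $\phi $, are unchanged.

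Condition 1, that $\partial _{n}^{\text{\textsc{c}}}\in \mathcal{D}_{n}^{\star }$, splits into the disjoint-upper-subtree requirement and the local correctness clauses of Definition 6. Disjointness survives because $\left( D\right) _{\alpha ,r}$ merely collects the pairwise disjoint subtrees $\left( D\right) _{z}$, $z\in C_{\alpha }$, under a single root, and distinct $\alpha $ combine disjoint groups. For Definition 6(1) I would use that every node of $S_{n,\alpha }$ shares the formula $\alpha =\ell ^{\text{\textsc{f}}}\left( r\right) $: the $\left( \rightarrow I\right) _{\bullet }$ and $\left( \rightarrow E\right) $ relations encoded by the partition transfer verbatim from each merged $y$ to $r$, which is exactly the content of the case split defining $\text{\textsc{s}}_{n,\alpha }$ in Definition 8(2) (the sets $X,Y$ redirect every $\text{\textsc{s}}$-reference to a $y\in S_{n,\alpha }$ onto $r$). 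The delicate part is Definition 6(2)--(3) for the redefined grandparent map $\ell _{n,\alpha }^{\text{\textsc{g}}}$; here I would verify clauses 2(a)--(c) by tracing the case analysis of Definition 8(4), noting that only cases (c) and (e)$_{ii}$ alter $\ell ^{\text{\textsc{g}}}$ and that in each the prescribed union (resp. singleton) lands inside $\text{\textsc{p}}\left( U\!\left( u\right) \right) $ and respects the coherence conditions 2(b)--(c).

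Finally, condition 5 -- and with it the whole point of the construction -- is the preservation of assumptions $\Gamma _{\partial _{n}^{\text{\textsc{c}}}}=\Gamma _{\partial }$, and this is where the main difficulty lies. The edge inclusion $\text{\textsc{e}}_{0}\!\left( D_{n}\right) \subseteq \text{\textsc{e}}_{0}\!\left( D\right) $ is read off the construction via the vertex inclusion of condition 2, since collapsing rearranges only the level-$n$ stratum and its incident downward edges while every top edge otherwise sits inside a preserved upper subtree. For the assumption-set equality I would show, by downward induction on $h\left( D\right) -j$ (matching the induction in Definition 8(4)), that the $\ell ^{\text{\textsc{g}}}$-guided thread-following of Definition 7 sets up a correspondence between the deduction threads $\Theta \in \text{\textsc{th}}\!\left( e,\varrho ,\partial \right) $ and those of $\partial _{n,\alpha }^{\text{\textsc{c}}}$ that preserves, for each top edge $e=\left\langle u,v\right\rangle $ with $\ell ^{\text{\textsc{f}}}\left( v\right) =\alpha $, whether the thread meets a discharging link in $\left( \rightarrow I\right) _{\alpha }$. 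The crux is that the grandparent reassignments at a merge point $r$ in cases (c) and (e)$_{ii}$ are engineered precisely so that following a thread down through $r$, using its ``road signs'', reaches exactly the grandparents reachable before the merge. I expect the step requiring most care to be checking that merging same-formula nodes at level $n$ neither fuses two previously separated threads into one that bypasses an $\left( \rightarrow I\right) _{\alpha }$ discharge nor severs a previously open thread, so that the open/closed status of every leaf-assumption is preserved; the equality $\Gamma _{\partial _{n}^{\text{\textsc{c}}}}=\Gamma _{\partial }$ then follows by the induction on $n$ already sketched informally in Section 3.6.
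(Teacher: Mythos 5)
Your proposal follows essentially the same route as the paper's own proof: reduce to a single $\left( n,\alpha \right)$-collapse and iterate, dispose of conditions 2--4 as bookkeeping read off Definition 8, verify condition 1 by checking the grandparent clause of Definition 6 against the case analysis defining $\ell _{n,\alpha }^{\text{\textsc{g}}}$, and establish $\Gamma _{\partial _{n}^{\text{\textsc{c}}}}=\Gamma _{\partial }$ via an assumption-preserving correspondence between open deduction threads of $\partial $ and $\partial _{n,\alpha }^{\text{\textsc{c}}}$. You correctly identify the thread correspondence at merge points as the crux, which is exactly where the paper concentrates its (equally terse) argument.
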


\begin{proof}
By iteration, it will suffice to prove analogous assertions with respect to
every $\left( n,\alpha \right) $-collapsing involved. We skip trivial
conditions 2--4 and verify 1: $\partial _{n,\alpha }^{\text{\textsc{c}}%
}=\left\langle D_{n,\alpha },\text{\textsc{s}}_{n,\alpha },\ell _{n,\alpha
}^{\text{\textsc{f}}},\ell _{n,\alpha }^{\text{\textsc{g}}}\right\rangle \in 
\mathcal{D}_{n}^{\star }$. Consider the only nontrivial clause\ 3 of
Definition 6. It will suffice to show that \textsc{p}$\left( x,\!D_{n,\alpha
}\right) \subseteq \underset{y\in \text{\textsc{c}}\left( x,D_{n,\alpha
}\right) }{\bigcup }\ell _{n,\alpha }^{\text{\textsc{g}}}\!\left(
\left\langle x,y\right\rangle \!\right) $ holds for any $x\in \,$\textsc{v}$%
\left( D_{n,\alpha }\right) \subseteq \,$\textsc{v}$\left( D\right) $ such
that $\overrightarrow{\deg }\left( x,D_{n,\alpha }\right) >0$ and $%
\overleftarrow{\deg }\left( x,D_{n,\alpha }\right) >1$. If $h\left(
x,D\right) <n$ or $h\left( x,D\right) =n$ for $x\neq r$, then \textsc{p}$%
\left( x,\!D_{n,\alpha }\right) =\,$\textsc{p}$\left( x,\!D\right) $ and we
are done by the assumption \textsc{p}$\left( x,\!D\right) \subseteq 
\underset{y\in \text{\textsc{c}}\left( x,D\right) }{\bigcup }\ell ^{\text{%
\textsc{g}}}\!\left( \left\langle x,y\right\rangle \!\right) $ together with
clauses 4 (a), (b) of Definition 8. Otherwise we have $h\left( x,D\right) =n$
for $x=r$. Then every $z\in \,$\textsc{p}$\left( x,D_{n,\alpha }\right) $
determines a $u\in \,$\textsc{c}$\left( z,D\right) \cap S_{n,\alpha }$, and
hence $z\in \,$\textsc{p}$\left( u,D\right) $. Consider two cases.

\begin{enumerate}
\item  Suppose $\overleftarrow{\deg }\left( u,D\right) >1$. By the
assumption \textsc{p}$\left( u,\!D\right) \subseteq \underset{y\in \text{%
\textsc{c}}\left( u,D\right) }{\bigcup }\ell ^{\text{\textsc{g}}}\!\left(
\left\langle u,y\right\rangle \!\right) $ together with 4 (e)$_{i}$ of
Definition 8 this yields a $y\in \text{\textsc{c}}\left( u,D\right)
\subseteq \,$\textsc{c}$\left( x,D_{n,\alpha }\right) $ with $z\in \ell ^{%
\text{\textsc{g}}}\!\left( \left\langle u,y\right\rangle \!\right) \subseteq
\!\left( \left\langle x,y\right\rangle \!\right) $. Hence \textsc{p}$\left(
x,\!D_{n,\alpha }\right) \subseteq \underset{y\in \text{\textsc{c}}\left(
x,D_{n,\alpha }\right) }{\bigcup }\ell _{n,\alpha }^{\text{\textsc{g}}%
}\!\left( \left\langle x,y\right\rangle \!\right) $.

\item  Suppose $\overleftarrow{\deg }\left( u,D\right) =1$. Then $%
z=u_{\left( 1\right) }\in \ell ^{\text{\textsc{g}}}\!\left( \left\langle
u,y\right\rangle \!\right) \subseteq \ell _{n,\alpha }^{\text{\textsc{g}}%
}\!\left( \left\langle x,y\right\rangle \!\right) $ holds for any chosen $%
y\in \text{\textsc{c}}\left( u,D\right) \subseteq \,$\textsc{c}$\left(
x,D_{n,\alpha }\right) $ according to 4 (e)$_{ii}$ of Definition 8. Hence 
\textsc{p}$\left( x,\!D_{n,\alpha }\right) \subseteq \underset{y\in \text{%
\textsc{c}}\left( x,D_{n,\alpha }\right) }{\bigcup }\ell _{n,\alpha }^{\text{%
\textsc{g}}}\!\left( \left\langle x,y\right\rangle \!\right) $.
\end{enumerate}

This completes the proof of condition 1. Now consider 5 (with respect to
every $\left( n,\alpha \right) $-collapsing involved). \textsc{e}$%
_{0}\!\left( D_{n,\alpha }\right) \subseteq $ \textsc{e}$_{0}\!\left(
D\right) $\ is obvious, so it remains to establish $\Gamma _{\partial
_{n,\alpha }^{\text{\textsc{c}}}}=\Gamma _{\partial }$. In order to prove
the (more important) inclusion $\Gamma _{\partial _{n,\alpha }^{\text{%
\textsc{c}}}}\subseteq \Gamma _{\partial }$, it will suffice to show that
there is an assumption-preserving embedding of the open threads in $\partial
_{n,\alpha }^{\text{\textsc{c}}}$ into the open threads in $\partial $. So
let $\Theta _{n,\alpha }=\left[ v=x_{0},u=x_{1},\cdots ,x_{h\left( D\right)
}=\varrho \!\left( D\right) \right] \in \,$\textsc{th}$\left( e\!,\varrho
\!\left( D\right) \!,\partial _{n,\alpha }^{\text{\textsc{c}}}\right) $, $%
e=\left\langle u,v\right\rangle \in $ \textsc{e}$_{0}\!\left( D_{n,\alpha
}\right) $, be any given open thread in $\partial _{n,\alpha }^{\text{%
\textsc{c}}}$. A desired open thread in $\partial $, $\Theta =\left[
v^{\prime }=x_{0}^{\prime },u^{\prime }=x_{1}^{\prime },\cdots ,x_{h\left(
D\right) }^{\prime }=\varrho \!\left( D\right) \right] \in \,$\textsc{th}$%
\left( e^{\prime },\varrho \!\left( D\right) \!,\partial \right) $, $%
e^{\prime }=\left\langle u^{\prime },v^{\prime }\right\rangle \in $ \textsc{e%
}$_{0}\!\left( D\right) $ for $\ell _{n,\alpha }^{\text{\textsc{f}}}\left(
v^{\prime }\right) =\ell ^{\text{\textsc{f}}}\left( v\right) $ is defined by
cases as follows.

\begin{enumerate}
\item  Suppose $r\neq x_{i}$ for all $i\leq h\left( D\right) $. Then $\Theta
:=$ $\Theta _{n,\alpha }$, i.e. $\left( \forall i\leq h\left( D\right)
\right) x_{i}^{\prime }:=x_{i}$.

\item  Otherwise, $r=x_{m}$ and $\overleftarrow{\deg }\left(
x_{m},D_{n,\alpha }\right) >1$, where $m:=h\left( D\right) -n>0$. Consider
the following two subcases.

\begin{enumerate}
\item  Suppose $m>0$, i.e. $n<h\left( D\right) $, and note that $x_{m-1}\in
\,$\textsc{v}$\left( D\right) $ and $\overleftarrow{\deg }\left(
x_{m-1},D\right) =1$. Then let $x_{m}^{\prime }:=y$ such that \textsc{p}$%
\left( x_{m-1},\!D\right) =\left\{ y\right\} $. Note that $\ell ^{\text{%
\textsc{f}}}\left( x_{m}^{\prime }\right) =\ell _{n,\alpha }^{\text{\textsc{f%
}}}\left( x_{m}\right) $. For all $i\neq m$ let $x_{i}^{\prime }:=x_{i}$.

\item  Let $m=0$, i.e. $n=h\left( D\right) $. If $\overleftarrow{\deg }%
\left( x_{i},D\right) =1$ for all $0<i<h\left( D\right) $, then let $\Theta
:=$ $\Theta _{n,\alpha }$. Otherwise, let $j:=\min \left\{ i>0:%
\overleftarrow{\deg }\left( x_{i},D\right) >1\right\} $. Then let $%
x_{0}^{\prime }$ be any $v^{\prime }\in \,$\textsc{c}$\left( u,D\right) \cap
S_{n,\alpha }$\ such that $x_{j+1}\in \ell ^{\text{\textsc{g}}}\left(
\left\langle u,v^{\prime }\right\rangle \right) $. Clearly $\ell ^{\text{%
\textsc{f}}}\left( x_{0}^{\prime }\right) =\ell _{n,\alpha }^{\text{\textsc{f%
}}}\left( x_{0}\right) $. For all $i>0$ let $x_{i}^{\prime }:=x_{i}$.
\end{enumerate}
\end{enumerate}

This completes our definition of $\Theta $. That $\Theta $ is an open thread
is easily verified using definition of $\ell _{n,\alpha }^{\text{\textsc{g}}%
} $ (see Definition 8 (4)). Thus $\Gamma _{\partial _{n,\alpha }^{\text{%
\textsc{c}}}}\subseteq \Gamma _{\partial }$. $\Gamma _{\partial }\subseteq
\Gamma _{\partial _{n,\alpha }^{\text{\textsc{c}}}}$ is proved analogously
by inversion $\Theta \hookrightarrow \Theta _{n,\alpha }$ that is defined by
substituting $r$ for (at most one) $x_{m}\in S_{n,\alpha }\setminus \left\{
r\right\} $. This completes the whole proof.
\end{proof}

\subsubsection{Horizontal compressing}

As mentioned above, horizontal compression $\partial \hookrightarrow
\partial ^{\text{\textsc{c}}}$ is obtained by bottom-up iteration of
horizontal collapsing $\partial \hookrightarrow \partial _{n}^{\text{\textsc{%
c}}}$, $n\leq h\left( \partial \right) $. For the sake of brevity we
consider tree-like inputs $\partial \in \mathcal{T}^{\ast }$.

\begin{definition}[horizontal compressing]
For any given $\partial \in \mathcal{T}^{\ast }$ denote by $\partial ^{\text{%
\textsc{c}}}\in \mathcal{D}^{\star }$ the last deduction in the following
iteration chain 
\begin{equation*}
\partial =\partial _{\left( 0\right) }^{\text{\textsc{c}}},\ \partial
_{\left( 1\right) }^{\text{\textsc{c}}},\ \cdots ,\ \partial _{\left(
h\left( \partial \right) \right) }^{\text{\textsc{c}}}=\partial ^{\text{%
\textsc{c}}}
\end{equation*}
where for every $i<h\left( \partial \right) $ we let $\partial _{\left(
i+1\right) }^{\text{\textsc{c}}}:=\left( \partial _{\left( i\right) }^{\text{%
\textsc{c}}}\right) _{i+1}^{\text{\textsc{c}}}$. It is readily seen that all 
$\partial ^{\text{\textsc{c}}}$ in question are mutually isomorphic
(actually equal up to the choice of $r\in S_{n,\alpha }$). The operation $%
\partial \hookrightarrow \partial ^{\text{\textsc{c}}}$ is called the \emph{%
horizontal dag-like compression}, in \textsc{NM}$_{\rightarrow }^{\star }$.
\end{definition}

\begin{theorem}
For any tree-like deduction $\partial \in \mathcal{T}^{\ast }$ with
root-formula $\rho $, the horizontal compression $\partial ^{\text{\textsc{c}%
}}$ is a plain dag-like \textsc{NM}$_{\rightarrow }$ deduction of $\rho $
from the same assumptions $\Gamma _{\partial ^{\text{\textsc{c}}}}=\Gamma
_{\partial }$. Moreover $\left| \partial ^{\text{\textsc{c}}}\right| \leq
h\left( \partial \right) \times \phi \left( \partial \right) $ and $\mu
\left( \partial ^{\text{\textsc{c}}}\right) =\mu \left( \partial \right) $.\
In particular, if $\Gamma _{\partial }=\emptyset $ and $h\left( \partial
\right) $, $\phi \left( \partial \right) $, $\mu \left( \partial \right) $
are polynomial in $\left| \rho \right| $, then $\partial ^{\text{\textsc{c}}%
} $ is a plain dag-like \textsc{NM}$_{\rightarrow }$ proof of $\rho $ whose
size and weight are polynomial in $\left| \rho \right| $.
\end{theorem}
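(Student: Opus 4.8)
The plan is to prove Theorem 11 by iterating the single-level collapsing result (Lemma 9) along the compression chain of Definition 10, so that all the genuine proof-theoretic work is already carried by Lemma 9 and only the bookkeeping of the iteration remains to be done here. Recall that the chain is
$\partial =\partial _{\left( 0\right) }^{\text{\textsc{c}}},\partial _{\left( 1\right) }^{\text{\textsc{c}}},\cdots ,\partial _{\left( h\left( \partial \right) \right) }^{\text{\textsc{c}}}=\partial ^{\text{\textsc{c}}}$ with $\partial _{\left( i+1\right) }^{\text{\textsc{c}}}:=\left( \partial _{\left( i\right) }^{\text{\textsc{c}}}\right) _{i+1}^{\text{\textsc{c}}}$, and that $\partial \in \mathcal{T}^{\ast }=\mathcal{D}_{1}^{\star }$.

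First I would check that the chain is well defined, i.e. that each $\left( i+1\right) $-level collapsing is legitimately applicable to its input. The $\left( n,\alpha \right) $-collapsing of Definition 8 requires its input to lie in $\mathcal{D}_{n}^{\star }$, i.e. the subtrees hanging above level $n$ must still be pairwise disjoint trees. Since any subtree rooted above level $i$ is in particular a subtree rooted above level $i+1$, one has trivially $\mathcal{D}_{i}^{\star }\subseteq \mathcal{D}_{i+1}^{\star }$; hence, starting from $\partial _{\left( 0\right) }^{\text{\textsc{c}}}=\partial \in \mathcal{D}_{1}^{\star }$ and using Lemma 9(1) to obtain $\partial _{\left( i\right) }^{\text{\textsc{c}}}\in \mathcal{D}_{i}^{\star }\subseteq \mathcal{D}_{i+1}^{\star }$, the $\left( i+1\right) $-level collapsing producing $\partial _{\left( i+1\right) }^{\text{\textsc{c}}}$ is indeed applicable. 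The bottom-up order is exactly what keeps the part above the current level a forest of disjoint trees at every step, which is precisely what membership in the relevant $\mathcal{D}_{n}^{\star }$ demands.

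Correctness, root-preservation and assumption-preservation then follow by a short induction on $i$ from Lemma 9: clauses 1--2 give $\partial _{\left( i\right) }^{\text{\textsc{c}}}\in \mathcal{D}^{\star }$ with unchanged root and height, clause 4 keeps $\ell ^{\text{\textsc{f}}}\!\left( \varrho \right) =\rho $, and clause 5 gives $\Gamma _{\partial _{\left( i+1\right) }^{\text{\textsc{c}}}}=\Gamma _{\partial _{\left( i\right) }^{\text{\textsc{c}}}}$; chaining these equalities yields $\Gamma _{\partial ^{\text{\textsc{c}}}}=\Gamma _{\partial }$, so by Definitions 6--7, $\partial ^{\text{\textsc{c}}}$ is a plain dag-like \textsc{NM}$_{\rightarrow }^{\star }$ deduction of $\rho $ from $\Gamma _{\partial }$. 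For the metrics, clause 4 shows each collapsing leaves the formulas occurring on each section unchanged, whence $\mu \left( \partial ^{\text{\textsc{c}}}\right) =\mu \left( \partial \right) $; and clause 3 shows that once its level has been collapsed a section carries pairwise distinct formulas, hence at most $\phi \left( \partial \right) $ nodes, while later (higher) collapsings never touch lower sections. Thus every section of $\partial ^{\text{\textsc{c}}}$ has at most $\phi \left( \partial \right) $ nodes and, there being at most $h\left( \partial \right) $ of them, $\left| \partial ^{\text{\textsc{c}}}\right| \leq h\left( \partial \right) \times \phi \left( \partial \right) $. The final ``in particular'' is then immediate: $\Gamma _{\partial }=\emptyset $ makes $\partial ^{\text{\textsc{c}}}$ a proof, and the weight estimate $\left\| \partial ^{\text{\textsc{c}}}\right\| \leq h\left( \partial \right) \times \phi \left( \partial \right) \times \mu \left( \partial \right) $ combined with polynomiality of $h$, $\phi $, $\mu $ gives polynomial size and weight in $\left| \rho \right| $.

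I expect the only genuinely delicate point --- and it is already discharged inside Lemma 9 rather than here --- to be the compositional preservation of the discharge structure: recomputing the $\ell ^{\text{\textsc{g}}}$-grandparents at each level (Definition 8(4)) must neither reopen a previously closed assumption nor close an open one, and these guarantees must compose through the whole chain. Concretely, the risk is that a deduction thread surviving in $\partial ^{\text{\textsc{c}}}$ is rerouted through a merge point so that a different $\left( \rightarrow I\right) _{\alpha }$-edge now discharges it, silently altering $\Gamma _{\partial ^{\text{\textsc{c}}}}$. This is exactly the content of the open-thread embedding $\Theta _{n,\alpha }\hookrightarrow \Theta $ (and its inverse) established in the proof of Lemma 9(5), so at the level of Theorem 11 it suffices to invoke that lemma section by section and note that the open-thread embeddings compose along the chain; the remaining size and $\mu $ estimates are routine counting.
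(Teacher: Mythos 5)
Your proposal follows essentially the same route as the paper's own proof: the size bound is read off from Lemma 9 (2, 3) by summing the sections, and correctness, $\Gamma$-preservation and the $\mu$-equality are obtained by iterating Lemma 9 (1, 4, 5) along the chain of Definition 10; your explicit check that $\mathcal{D}_{i}^{\star }\subseteq \mathcal{D}_{i+1}^{\star }$ makes the well-definedness of the iteration, which the paper leaves implicit, a welcome addition. The only quibble is the count of sections (there are $h(\partial)+1$ of them, not $h(\partial)$), but since the $0^{th}$ section is the single root node this does not affect the stated bound, and the paper's own computation handles it the same way.
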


\begin{proof}
Let $\partial =\left\langle T,\text{\textsc{s}},\ell ^{\text{\textsc{f}}%
},\ell ^{\text{\textsc{g}}}\text{\textsc{\thinspace }}\right\rangle \in 
\mathcal{T}^{\ast }$ and $\partial _{n}^{\text{\textsc{c}}}=\left\langle
D_{n},\text{\textsc{s}}_{n},\ell _{n}^{\text{\textsc{f}}},\ell _{n}^{\text{%
\textsc{g}}}\text{\textsc{\thinspace }}\right\rangle $ for $n\leq h\!\left(
D\right) $. By Lemma 9 (2, 3) we have 
\begin{eqnarray*}
\left| \partial ^{\text{\textsc{c}}}\right| &=&\overset{h\left( T\right) }{%
\underset{n=0}{\bigcup }}\left| L_{n}\left( D_{n}\right) \right| \leq \\
1+2+\overset{h\left( T\right) }{\underset{n=2}{\bigcup }}\left| L_{n}\left(
D_{n}\right) \right| &\leq &3+\left( h\left( T\right) -1\right) \cdot \phi
\left( \partial \right) < \\
h\left( T\right) \cdot \phi \left( \partial \right) &=&h\left( \partial
\right) \times \phi \left( \partial \right)
\end{eqnarray*}
as required. The rest immediately follows from Lemma 9 (1, 4, 5) by
induction on $n\leq h\left( T\right) $.
\end{proof}

Together with Theorem 4 and Lemma 5 this yields

\begin{corollary}
Any given minimal tautology $\rho $ has a plain dag-like \textsc{NM}$%
_{\rightarrow }$ proof $\partial ^{\text{\textsc{c}}}$ whose size and weight
are polynomial in $\left| \rho \right| $. Actually the following holds. 
\begin{equation*}
\fbox{$\left| \partial ^{\text{\textsc{c}}}\right| <18\left| \rho \right|
\left( \left| \rho \right| +1\right) ^{2}\left( \left| \rho \right|
+2\right) $ $=$ $\mathcal{O}\left( \left| \rho \right| ^{4}\right) $ and $%
\left\| \partial ^{\text{\textsc{c}}}\right\| =\mathcal{O}\left( \left| \rho
\right| ^{5}\right) $}
\end{equation*}
\end{corollary}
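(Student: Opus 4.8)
The plan is to obtain the Corollary as a purely quantitative composition of Theorem 4, Lemma 5 and Theorem 11, tracking the three parameters $h$, $\phi $ and $\mu $ through each stage and reading off the final polynomial bounds. First I would use Claim 1: since $\rho $ is a minimal tautology, the sequent $\Rightarrow \rho $ has a tree-like \textsc{LM}$_{\rightarrow }$ deduction $\partial _{0}$. Lemma 2 then supplies the $\left| \rho \right| $-polynomial input bounds $h\left( \partial _{0}\right) \leq 3\left| \rho \right| $ (part 3), $\phi \left( \partial _{0}\right) \leq \left( \left| \rho \right| +1\right) ^{2}$ (part 4) and $\mu \left( \partial _{0}\right) \leq \left| \rho \right| $ (part 1).

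Next I would feed $\partial _{0}$ through the embedding operator $\mathcal{F}$ of Theorem 4. As the antecedent is empty ($\Gamma =\emptyset $), Theorem 4 returns a tree-like \textsc{NM}$_{\rightarrow }$ proof $\partial :=\mathcal{F}\left( \partial _{0}\right) $ of $\rho $ with no open assumptions and with the explicit bounds $h\left( \partial \right) \leq 18\left| \rho \right| $, $\phi \left( \partial \right) <\left( \left| \rho \right| +1\right) ^{2}\left( \left| \rho \right| +2\right) $ and $\mu \left( \partial \right) \leq 2\left| \rho \right| $. To make $\partial $ an admissible input to the compression machinery I would regard it as an element of $\mathcal{T}^{\ast }$: by Lemma 5 together with the remark following Definition 7, \textsc{NM}$_{\rightarrow }$ embeds into \textsc{NM}$_{\rightarrow }^{\star }$ by padding threads with the repetition rule $\left( R\right) $ so that every leaf attains height $h\left( \partial \right) $, and this padding leaves $h$, $\phi $ and $\mu $ unchanged.

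Now I would apply Theorem 11 to $\partial \in \mathcal{T}^{\ast }$. Since $\Gamma _{\partial }=\emptyset $, the preservation of assumptions $\Gamma _{\partial ^{\text{\textsc{c}}}}=\Gamma _{\partial }$ guarantees that $\partial ^{\text{\textsc{c}}}$ is a genuine plain dag-like \textsc{NM}$_{\rightarrow }$ \emph{proof} of $\rho $, and the theorem gives $\left| \partial ^{\text{\textsc{c}}}\right| \leq h\left( \partial \right) \times \phi \left( \partial \right) $ with $\mu \left( \partial ^{\text{\textsc{c}}}\right) =\mu \left( \partial \right) $. Substituting the Theorem 4 bounds yields
\begin{equation*}
\left| \partial ^{\text{\textsc{c}}}\right| \leq h\left( \partial \right) \times \phi \left( \partial \right) <18\left| \rho \right| \left( \left| \rho \right| +1\right) ^{2}\left( \left| \rho \right| +2\right) =\mathcal{O}\left( \left| \rho \right| ^{4}\right) .
\end{equation*}
For the weight I would bound the total character count by size times maximal formula length (the estimate recorded in the introduction), $\left\| \partial ^{\text{\textsc{c}}}\right\| \leq \left| \partial ^{\text{\textsc{c}}}\right| \times \mu \left( \partial ^{\text{\textsc{c}}}\right) \leq h\left( \partial \right) \times \phi \left( \partial \right) \times \mu \left( \partial \right) $; since $\mu \left( \partial ^{\text{\textsc{c}}}\right) =\mu \left( \partial \right) \leq 2\left| \rho \right| $, this gives $\left\| \partial ^{\text{\textsc{c}}}\right\| <36\left| \rho \right| ^{2}\left( \left| \rho \right| +1\right) ^{2}\left( \left| \rho \right| +2\right) =\mathcal{O}\left( \left| \rho \right| ^{5}\right) $, as claimed.

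I do not expect a genuinely hard step here, since all the substantive work has already been discharged in Theorems 4 and 11; the composition is essentially arithmetic. The two points that require care are bookkeeping ones: confirming that the empty-antecedent hypothesis propagates all the way down (so that $\Gamma _{\partial ^{\text{\textsc{c}}}}=\emptyset $ and $\partial ^{\text{\textsc{c}}}$ is a proof rather than a deduction from open assumptions), and verifying that the parameter-preserving passage \textsc{NM}$_{\rightarrow }\hookrightarrow \mathcal{T}^{\ast }$ really does not inflate $h$, $\phi $ or $\mu $. Finally, because validity in minimal propositional logic is PSPACE-complete, the existence of these uniformly $\left| \rho \right| $-polynomial, polytime-verifiable dag-like proofs is exactly what is needed to conclude $\mathcal{NP}=\mathcal{PSPACE}$.
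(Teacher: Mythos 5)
Your proposal is correct and follows exactly the paper's route: the paper derives Corollary 12 by the same composition of Claim 1, Lemma 2, Theorem 4 (with the $\mathcal{T}^{\ast }$ padding remark after Definition 7), and Theorem 11, multiplying the bounds $h\leq 18\left| \rho \right| $, $\phi <\left( \left| \rho \right| +1\right) ^{2}\left( \left| \rho \right| +2\right) $, $\mu \leq 2\left| \rho \right| $ to get the stated size and weight estimates. Your bookkeeping of the empty-assumption condition and of the weight bound $\left\| \partial ^{\text{\textsc{c}}}\right\| \leq \left| \partial ^{\text{\textsc{c}}}\right| \times \mu \left( \partial ^{\text{\textsc{c}}}\right) $ matches what the paper leaves implicit.
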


\begin{example}
\footnote{{\footnotesize See Appendices B, C for more sophisticated examples.%
}} Consider a following (tree-like) \textsc{NM}$_{\rightarrow }$ deduction $%
\partial $.

$\partial =\fbox{$\dfrac{\dfrac{\dfrac{\alpha \ \ \quad \alpha \rightarrow
\rho }{\rho \quad \quad }\ \ \quad \QDATOP{{}}{\rho \rightarrow \alpha }}{%
\alpha }\ \quad \dfrac{\dfrac{\left[ \alpha \right] \ \ \quad \alpha
\rightarrow \rho }{\rho \quad \quad }}{\alpha \rightarrow \rho \quad }}{\rho
\quad \quad }$}$

$\ \ \cong \fbox{$\dfrac{\dfrac{\dfrac{\alpha \ \ \quad \alpha \rightarrow
\rho }{\rho \quad \quad }\ \ \quad \dfrac{\rho \rightarrow \alpha }{\rho
\rightarrow \alpha }}{\alpha }\ \quad \dfrac{\dfrac{\left[ \alpha \right] \
\ \quad \alpha \rightarrow \rho }{\rho \quad \quad }}{\alpha \rightarrow
\rho \quad }}{\rho \quad \quad }$}\in \mathcal{T}^{\ast }$

(As usual $\left[ \alpha \right] $ indicates that the right-hand side
assumption $\alpha $ is discharged by $\left( \rightarrow I\right) :\dfrac{%
\rho }{\alpha \rightarrow \rho }$ occurring below.) Horizontally compressed
(re)dag-like \textsc{NM}$_{\rightarrow }^{\star }$ deduction $\partial ^{%
\text{\textsc{c}}}$ arises by successively merging two nodes with label $%
\rho $ and two identical pairs of assumptions $\alpha $,\ $\alpha
\rightarrow \rho $.\smallskip

$\partial ^{\text{\textsc{c}}}=\fbox{$\dfrac{\dfrac{\dfrac{\rho \rightarrow
\alpha }{\rho \rightarrow \alpha }\quad \dfrac{\alpha \qquad \alpha
\rightarrow \rho }{\quad \rho \quad \quad \quad \quad }\quad }{\quad \quad
\alpha \quad \qquad \qquad \quad \quad \alpha \rightarrow \rho }}{\rho \quad
\quad }$}$

$\quad \ \cong \fbox{$\dfrac{\dfrac{\QDATOP{{}}{\rho \rightarrow \alpha }%
\quad \dfrac{\alpha \qquad \alpha \rightarrow \rho }{\quad \rho \quad \quad
\quad \quad }\quad }{\quad \quad \alpha \quad \qquad \qquad \quad \quad
\alpha \rightarrow \rho }}{\rho \quad \quad }$}\smallskip \in \,$\textsc{NM}$%
_{\rightarrow }^{\star }$

Clearly $\partial ^{\text{\textsc{c}}}$ is not a tree, as it contains a
``diamond'' $\alpha \QDATOP{\overset{\rho }{\nearrow \nwarrow }}{\underset{%
\rho }{\nwarrow \nearrow }}\alpha \rightarrow \rho $ .

Note that $\Gamma _{\partial }=\left\{ \alpha ,\alpha \rightarrow \rho ,\rho
\rightarrow \alpha \right\} $, as the left-hand side assumption $\alpha $ is
open in a tree-like deduction $\partial $. No consider the compressed redag $%
\partial ^{\text{\textsc{c}}}$. We have\smallskip

$\partial ^{\text{\textsc{c}}}=\fbox{$\dfrac{\dfrac{\dfrac{u_{1}:\rho
\rightarrow \alpha }{v_{1}:\rho \rightarrow \alpha }\quad \dfrac{%
u_{2}:\alpha \qquad u_{3}:\alpha \rightarrow \rho }{\quad v_{2}:\rho \quad
\quad \quad \quad }\quad }{\quad \quad w_{1}:\alpha \quad \qquad \qquad
\quad \quad w_{2}:\alpha \rightarrow \rho }}{z:\rho \quad \quad }$}%
\smallskip $

($u_{i},v_{j},w_{j}$ and $z=\varrho \left( \partial ^{\text{\textsc{c}}%
}\right) $ being the underlying nodes). Except for $v_{2}$ and $z$, all
nodes have exactly one parent, while $\ell ^{\text{\textsc{g}}}\left(
\left\langle v_{2},u_{2}\right\rangle \right) =$ $\ell ^{\text{\textsc{g}}%
}\left( \left\langle v_{2},u_{3}\right\rangle \right) =\left\{
w_{1},w_{2}\right\} $, i.e. both leaves $u_{2}$ and $u_{3}$ have two $\ell ^{%
\text{\textsc{g}}}$-grandparents $w_{1}$ and $w_{2}$ (which are inherited
from standard tree-like grandparents of the first and the last top nodes in $%
\partial $). This yields 5 deduction threads in $\partial ^{\text{\textsc{c}}%
}$: $\left\{ u_{1},v_{1},w_{1},z\right\} $, $\left\{
u_{2},v_{2},w_{1},z\right\} $, $\left\{ u_{2},v_{2},w_{2},z\right\} $, $%
\left\{ u_{3},v_{2},w_{1},z\right\} $, $\left\{ u_{3},v_{2},w_{2},z\right\} $
and $\ell ^{\text{\textsc{f}}}$-threads $\left\{ \rho \rightarrow \alpha
,\rho \rightarrow \alpha ,\alpha ,\rho \right\} $, $\left\{ \alpha ,\rho
,\alpha ,\rho \right\} $, $\left\{ \alpha ,\rho ,\alpha \rightarrow \rho
,\rho \right\} $, $\left\{ \alpha \rightarrow \rho ,\rho ,\alpha ,\rho
\right\} $, $\left\{ \alpha \rightarrow \rho ,\rho ,\alpha \rightarrow \rho
,\rho \right\} $, while $\alpha $ is open in $\Theta =\left\{
u_{2},v_{2},w_{1},z\right\} $ due to $\ell ^{\text{\textsc{f}}}\left( \Theta
\right) =\left\{ \alpha ,\rho ,\alpha ,\rho \right\} $ (that other
assumptions $\alpha \rightarrow \rho $, $\rho \rightarrow \alpha $ are open
in $\partial ^{\text{\textsc{c}}}$ is readily seen). Hence $\Gamma
_{\partial ^{\text{\textsc{c}}}}=\left\{ \alpha ,\alpha \rightarrow \rho
,\rho \rightarrow \alpha \right\} =\Gamma _{\partial }$, i.e. $\partial $
and $\partial ^{\text{\textsc{c}}}$ are deductions of $\rho $ from the
assumptions $\left\{ \alpha ,\alpha \rightarrow \rho ,\rho \rightarrow
\alpha \right\} $, although at the first glance $\alpha $ seems to be
discharged in $\partial ^{\text{\textsc{c}}}$.
\end{example}

\subsection{Dag-to-tree unfolding in \textsc{NM}$_{\rightarrow }$}

We learned that all minimal propositional tautologies are provable by plain
dag-like \textsc{NM}$_{\rightarrow }$ deductions of ``small'' size, but at
the moment we don't know whether underlying \textsc{NM}$_{\rightarrow
}^{\star }$ provability infers validity in minimal logic. The affirmative
answer follows by dag-to-tree unfolding, to be thought of as inversion of
the tree-to-dag compression under consideration. The unfolded tree-like
deduction $\partial ^{\text{\textsc{u}}}$ is defined by descending recursion
on the height of a given \textsc{NM}$_{\rightarrow }^{\star }$ deduction $%
\partial $ such that for any $n\leq h\left( \partial \right) $, the $n^{th}$%
\ horizontal section of $\partial ^{\text{\textsc{u}}}$ is obtained by
splitting previously obtained nodes $v$, $h\left( v\right) =n$, having $p$
parents, $u_{1},\cdots u_{p}$, $p>1$, into $p$ new copies\ $v_{1},\cdots
v_{p}$. Previously obtained (tree-like!) successors of\ $v$\ are separated
according to the underlying assignment $\ell ^{\text{\textsc{g}}}$ such that
for every $0<i<p$, $u_{i}$ becomes the only parent of $v_{i}$. Moreover, if $%
T$ is the old tree rooted in $v$, then every $v_{i}$ $\left( 0<i<p\right) $\
becomes the root of a maximal subtree of $T$ whose leaves are $\ell ^{\text{%
\textsc{g}}}$-grandchildren of $u_{i}$ (i.e. $u_{i}$ is a $\ell ^{\text{%
\textsc{g}}}$-grandparent of every leaf in question). Except for the $\ell ^{%
\text{\textsc{g}}}$-related separation this is just standard graph theoretic
dag-to-tree unfolding (see below a precise definition).

\begin{definition}
Consider any $\partial =\left\langle D,\text{\textsc{s}},\ell ^{\text{%
\textsc{f}}},\ell ^{\text{\textsc{g}}}\text{\textsc{\thinspace }}%
\right\rangle \in \mathcal{D}_{n}^{\star }$, $n\leq h\!\left( D\right) $ and
a fixed $r\in L_{n}\left( D\right) $ with $p:=\left| \text{\textsc{p}}\left(
r,D\right) \right| >1$.\ We define $\left( n,r\right) $-unfolded deduction $%
\partial _{n,r}^{\text{\textsc{u}}}=\left\langle D_{n,r},\text{\textsc{s}}%
_{n,r},\ell _{n,r}^{\text{\textsc{f}}},\ell _{n,r}^{\text{\textsc{g}}%
}\right\rangle \in \mathcal{D}_{n}^{\star }$ that arises by tree-like
unfolding of $r$, as follows. Let $r_{1},\cdots ,r_{p}\notin \,$\textsc{v}$%
\left( D\right) $ be a fixed collection of new vertices and $\left( D\right)
_{r_{1}},\cdots ,\left( D\right) _{r_{p}}$ the corresponding collection of
disjoint copies of $\left( D\right) _{r}$. Let $\varepsilon :\left[ p\right]
\rightarrow \,$\textsc{p}$\left( r,D\right) $ be a fixed 1--1 enumeration of 
\textsc{p}$\left( r,D\right) $. Then for any $i\in \left[ p\right] $ we
denote by $\left( D\right) _{i}^{-}$ a subtree of $\left( D\right) _{r_{i}}$
that is obtained by deleting the (copies of) subtrees $\left( D\right) _{y}$%
, for all $\left\langle x,y\right\rangle \in \,$\textsc{e}$\left( \left(
D\right) _{r_{i}}\right) $ such that $\varepsilon \left( i\right) \notin
\ell ^{\text{\textsc{g}}}\left( \left\langle x,y\right\rangle \right) $.
Furthermore, we denote by $\left[ \left( D\right) _{i}^{-}\right] $ a tree
that extends $\left( D\right) _{i}^{-}$ by a new root $\varepsilon \left(
i\right) $; thus $\varrho \left( \left( D\right) _{i}^{-}\right) =r_{i}$ and 
$\varrho \left( \left[ \left( D\right) _{i}^{-}\right] \right) =\varepsilon
\left( i\right) $ with $\left\{ \varepsilon \left( i\right) \right\} =\,$%
\textsc{p}$\left( r_{i},\left[ \left( D\right) _{i}^{-}\right] \right) $.
Having this we stipulate:

\begin{enumerate}
\item  $D_{n,r}$\ arises from $D$ by deleting $\left( D\right) _{r}$ and
replacing every remaining node $\varepsilon \left( i\right) \in \,$\textsc{p}%
$\left( r,D\right) $ by the whole subtree $\left[ \left( D\right) _{i}^{-}%
\right] $.

That is, \textsc{v}$\left( D_{n,r}\right) :=\ \left( \text{\textsc{v}}\left(
D\right) \setminus \text{\textsc{v}}\left( \left( D\right) _{r}\right)
\right) \cup \underset{i=1}{\overset{p}{\bigcup }}\text{\textsc{v}}\left(
\left( D\right) _{i}^{-}\right) $. The edges are given by \textsc{%
e\thinspace }$\left( D_{n,r}\right) \!:=\left( \text{\textsc{e\negthinspace
\thinspace }}\left( D\right) \setminus \text{\textsc{e}}\left( \left(
D\right) _{r}\right) \right) \cup \underset{i=1}{\overset{p}{\bigcup }}%
\left( \text{\textsc{e}}\left( \left( D\right) _{i}^{-}\right) \cup
\left\langle \varepsilon \left( i\right) ,r_{i}\right\rangle \right) $.$\!$

\item  For any $u\in \,$\textsc{v}$\left( D_{n,r}\right) $ we define \textsc{%
s}$_{n,r}\!\left( u,D_{n,r}\right) $ by cases as follows.

\begin{enumerate}
\item  If $u\notin \underset{i=1}{\overset{p}{\bigcup }}$\textsc{v}$\left(
\left( D\right) _{i}^{-}\right) \cup \,\!\,$\textsc{p}$\left( r,D\right) $,
then \textsc{s}$_{n,r}\!\left( u,D_{n,r}\right) :=\,$\textsc{s}$\left(
u,D\right) $.

\item  If $u\in \underset{i=1}{\overset{p}{\bigcup }}$\textsc{v}$\left(
\left( D\right) _{i}^{-}\right) $, then \textsc{s}$_{n,r}\left(
u,D_{n,r}\right) :=\,$\textsc{s}$\left( u,D\right) $ (modulo isomorphism).

\item  For any $i\in \left[ 1,p\right] $ we let \textsc{s}$_{n,r}\!\left(
\varepsilon \left( i\right) ,D_{n,r}\right) :=X_{i}\cup Y_{i}$, where

$X_{i}=\left\{ y\in L_{n}\!\left( D_{n,r}\right) :\!\!\left. 
\begin{array}{c}
\left( \exists x\in \text{\textsc{s}}\left( \varepsilon \left( i\right)
,D\right) \right) \\ 
\left( x=y\vee \left( x=r\wedge y=r_{i}\right) \right)
\end{array}
\!\right. \!\!\right\} \ $and

$Y_{i}=\left\{ \left\langle y_{0},y_{1}\right\rangle \in L_{n}\!\left(
D_{n,r}\right) ^{2}:\!\!\left. 
\begin{array}{c}
\left( \exists \left\langle x_{0},x_{1}\right\rangle \in \text{\textsc{s}}%
\left( \varepsilon \left( i\right) ,D\right) \right) \left( \forall j\leq
1\right) \\ 
\left( x_{j}=y_{j}\vee \left( x_{j}=r\wedge y_{j}=r_{i}\right) \right)
\end{array}
\!\right. \!\!\right\} $.
\end{enumerate}

\item  For any $u\in \,$\textsc{v}$\left( D_{n,r}\right) $ we let $\ell
_{n,r}^{\text{\textsc{f}}}\left( u\right) :=\ell ^{\text{\textsc{f}}}\left( 
\widehat{u}\right) $, where $\widehat{u}\in \,$\textsc{v}$\left( D\right) $
is a (uniquely determined) preimage of $u$ in $D$.

\item  For any $e=\left\langle u,v\right\rangle \in \,$\textsc{e}$%
_{\triangle }\!\left( D_{n,r}\right) $ we define $\ell _{n,r}^{\text{\textsc{%
g}}}\left( e\right) $ by cases as follows, while without loss of generality
assuming that $v\in \,$\textsc{L}$\left( D_{n,r}\right) $ or $\overleftarrow{%
\deg }\left( v,D_{n,r}\right) >1$ (cf. analogous passage in Definition 8).

\begin{enumerate}
\item  If $h\!\left( u,D_{n,r}\right) \in \left[ 0,n-2\right] $, or $%
h\!\left( u,D_{n,r}\right) =n-1$ and $v\notin \left\{ r_{1},\cdots
,r_{p}\right\} $, or else $e\in \,$\textsc{e}$_{0}\left( D_{n,r}\right) $
with $h\!\left( u,D_{n,r}\right) \geq n$ and $\left( \forall i\in \left[ p%
\right] \right) r_{i}\npreceq _{D_{n,r}}\!u$, then $\ell _{n,r}^{\text{%
\textsc{g}}}\left( e\right) :=\ell ^{\text{\textsc{g}}}\left( e\right) $.

\item  Otherwise, if $v=r_{i}$ (hence $u=\varepsilon \left( i\right) $), or
else $e\in \,$\textsc{e}$_{0}\left( D_{n,r}\right) $ with $h\!\left(
u,D_{n,r}\right) \geq n$ and $r_{i}\preceq _{D_{n,r}}\!u$, then $\ell
_{n,r}^{\text{\textsc{g}}}\!\left( e\right) :=\ell ^{\text{\textsc{g}}%
}\left( \left\langle \varepsilon \left( i\right) ,r\right\rangle \right) $.
\end{enumerate}
\end{enumerate}

To complete the $\left( n,r\right) $\emph{-unfolding operation} $\partial
\hookrightarrow \partial _{n,r}^{\text{\textsc{u}}}$, we let $\partial
_{n,r}^{\text{\textsc{u}}}:=\partial $ in the case $\left| \text{\textsc{p}}%
\left( r,D\right) \right| =1$. Now let $\partial _{n}^{\text{\textsc{u}}}$
arise from $\partial $ by applying $\left( n,r\right) $-unfolding
successively to all $r\in L_{n}\left( D\right) $. That is, $\partial _{n}^{%
\text{\textsc{u}}}$ is the iteration of $\partial _{n,r}^{\text{\textsc{u}}}$%
\ with respect to all nodes $r$ occurring in the $n^{th}$\ horizontal
section of $D$. The operation $\partial \hookrightarrow \partial _{n}^{\text{%
\textsc{u}}}$ is called the \emph{horizontal unfolding on level} $n$, in 
\textsc{NM}$_{\rightarrow }^{\star }$.
\end{definition}

\begin{lemma}
For any $\partial =\left\langle D,\text{\textsc{s}},\ell ^{\text{\textsc{f}}%
},\ell ^{\text{\textsc{g}}}\text{\textsc{\thinspace }}\right\rangle \in 
\mathcal{D}_{n}^{\star }$ and $\partial _{n}^{\text{\textsc{u}}%
}=\left\langle D_{n},\text{\textsc{s}}_{n},\ell _{n}^{\text{\textsc{f}}%
},\ell _{n}^{\text{\textsc{g}}}\text{\textsc{\thinspace }}\right\rangle $, $%
n\leq h\!\left( D\right) $, the following conditions 1--5 hold.

\begin{enumerate}
\item  $\partial _{n}^{\text{\textsc{u}}}\in \mathcal{D}_{n-1}^{\star }$.

\item  $\varrho \!\left( D_{n}\right) =\varrho \!\left( D\right) $ and $%
h\!\left( D_{n}\right) =h\!\left( D\right) $.

\item  For any $i<n$, $L_{i}\!\left( D_{n}\right) =L_{i}\!\left( D\right) $,
while $L_{n}\!\left( D_{n}\right) \supseteq L_{n}\!\left( D\right) $.

\item  For any $i<n<j$, $\ell ^{\text{\textsc{f}}}\left( L_{i}\!\left(
D_{n}\right) \right) =\ell ^{\text{\textsc{f}}}\left( L_{i}\!\left( D\right)
\right) $ and $\ell ^{\text{\textsc{f}}}\left( L_{j}\!\left( D_{n}\right)
\right) \subseteq \ell ^{\text{\textsc{f}}}\left( L_{j}\!\left( D\right)
\right) $, while $\ell ^{\text{\textsc{f}}}\left( L_{n}\!\left( D_{n}\right)
\right) =\ell ^{\text{\textsc{f}}}\left( L_{n}\!\left( D\right) \right) $.
Hence $\phi \left( \partial _{n}^{\text{\textsc{u}}}\right) \subseteq \phi
\left( \partial \right) $.

\item  $\Gamma _{\partial _{n}^{\text{\textsc{u}}}}\subseteq \Gamma
_{\partial }$.
\end{enumerate}
\end{lemma}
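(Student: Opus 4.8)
The plan is to mirror, in dual form, the proof of Lemma 9, since horizontal unfolding is set up as the inverse of the $(n,\alpha)$-collapsing treated there. As in that proof I would first reduce the statement to a single $(n,r)$-unfolding operation $\partial \hookrightarrow \partial_{n,r}^{\text{\textsc{u}}}$: by Definition 14 the level-$n$ unfolding $\partial_n^{\text{\textsc{u}}}$ is the iteration of these over all $r \in L_n(D)$, and each of the five conditions is preserved under such iteration, so it suffices to establish the analogues of 1--5 for one $(n,r)$-unfolding and then compose. The bookkeeping conditions 2--4 then fall out of the construction: the root is untouched and every copy $(D)_{r_i}$ is an isomorphic copy of $(D)_r$ re-attached at the same height under $\varepsilon(i)$, giving condition 2; nodes of height $<n$ are left intact while the split only introduces the new level-$n$ copies $r_1, \dots, r_p$, giving $L_i(D_{n,r}) = L_i(D)$ for $i<n$ and $L_n(D_{n,r}) \supseteq L_n(D)$ (condition 3); and since $\ell_{n,r}^{\text{\textsc{f}}}(u) = \ell^{\text{\textsc{f}}}(\widehat{u})$ (clause 3 of Definition 14), labels at and below level $n$ are reproduced verbatim, whereas above level $n$ the pruning of the subtrees $(D)_y$ with $\varepsilon(i) \notin \ell^{\text{\textsc{g}}}(\langle x, y \rangle)$ can only delete formulas, yielding $\ell^{\text{\textsc{f}}}(L_j(D_{n,r})) \subseteq \ell^{\text{\textsc{f}}}(L_j(D))$ for $j>n$ and hence $\phi(\partial_n^{\text{\textsc{u}}}) \subseteq \phi(\partial)$ (condition 4).

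The structural heart of condition 1 is that unfolding pushes the tree-boundary down by one level. In a deduction of $\mathcal{D}_n^{\star}$ the disjoint subtrees are rooted at level $n$, and the only inverse branching at their roots occurs at nodes $r \in L_n(D)$ with $\overleftarrow{\deg}(r) > 1$; the $(n,r)$-unfolding replaces each such $r$ by single-parent copies $r_i$, so after iterating over all $r \in L_n(D)$ every level-$n$ node has a unique parent and the subgraphs $(D_{n,r})_x$ for $x \in L_{n-1}(D_{n,r})$ are pairwise disjoint trees, i.e. $\partial_n^{\text{\textsc{u}}} \in \mathcal{D}_{n-1}^{\star}$. It then remains to check the local correctness conditions of Definition 6; exactly as in Lemma 9 the only nontrivial one is clause 3, that every node with $\overleftarrow{\deg} > 1$ has all its parents among the $\ell^{\text{\textsc{g}}}$-grandparents collected over its children. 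I would verify this by the same case split on $\overleftarrow{\deg}$, reading off the required grandparent membership from clause 4 of Definition 14; the key case is 4(b), where $\ell_{n,r}^{\text{\textsc{g}}}(e)$ is set equal to $\ell^{\text{\textsc{g}}}(\langle \varepsilon(i), r \rangle)$, which is precisely what guarantees that the inherited grandparents survive the split.

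For condition 5 I would exhibit an assumption-preserving map from the open threads of $\partial_{n,r}^{\text{\textsc{u}}}$ to those of $\partial$, namely the projection that collapses each copy $r_i$ back to $r$ and acts as the identity on all other nodes. Because $\ell_{n,r}^{\text{\textsc{f}}}(u) = \ell^{\text{\textsc{f}}}(\widehat{u})$, this projection preserves the formula carried at the top of a thread, hence preserves the associated assumption; and clause 4 of Definition 14 is arranged so that the grandparent road-signs along a thread of $\partial_{n,r}^{\text{\textsc{u}}}$ project to admissible road-signs in $\partial$, so that an open thread maps to an open thread and no spurious $(\rightarrow I)_\alpha$-discharge is introduced. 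This gives $\Gamma_{\partial_{n,r}^{\text{\textsc{u}}}} \subseteq \Gamma_{\partial}$, and iteration yields condition 5. Unlike Lemma 9 we obtain only the inclusion rather than equality, which is exactly right: the pruning step of the unfolding can genuinely remove whole subtrees, and with them some open assumptions.

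The hardest part will be the grandparent-tracking that underlies both conditions 1 and 5: one must verify, through the combination of subtree-pruning and copy-splitting, that the reassigned values of $\ell_{n,r}^{\text{\textsc{g}}}$ in Definition 14 continue to describe the same descending deduction threads as in $\partial$. This is the precise dual of the thread-embedding bookkeeping that dominated Lemma 9, and as there the delicate interaction is between the top-edge case of clause 4(b) (where $h(u, D_{n,r}) \geq n$ and $r_i \preceq u$) and the thread-admissibility condition $x_{i+1} \in \ell^{\text{\textsc{g}}}(\langle x_{j+1}, x_j \rangle)$ from Definition 7.
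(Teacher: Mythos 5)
Your proposal follows essentially the same route as the paper's own proof: reduce to a single $(n,r)$-unfolding and iterate, treat conditions 2--4 as bookkeeping consequences of the construction, verify the only nontrivial local-correctness clause (clause 3 of Definition 6) by a case split that invokes clause 4(b) of Definition 14, and establish condition 5 via the assumption-preserving projection of open threads that sends each copy $r_{i}$ back to $r$. The key steps you identify, including the one-sided inclusion $\Gamma _{\partial _{n}^{\text{\textsc{u}}}}\subseteq \Gamma _{\partial }$ and the grandparent-tracking through $\ell _{n,r}^{\text{\textsc{g}}}$, match the paper's argument.
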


\begin{proof}
By iteration, it will suffice to prove analogous assertions with respect to
every $\left( n,r\right) $-unfolding involved. We skip trivial conditions
2--4 and verify 1: $\partial _{n,r}^{\text{\textsc{u}}}=\left\langle D_{n,r},%
\text{\textsc{s}}_{n,r},\ell _{n,r}^{\text{\textsc{f}}},\ell _{n,r}^{\text{%
\textsc{g}}}\text{\textsc{\thinspace }}\right\rangle \in \mathcal{D}%
_{n-1}^{\star }$. First of all we observe that every subtree $\left[ \left(
D\right) _{i}^{-}\right] $ that replaced $\varepsilon \left( i\right) \in \,$%
\textsc{p}$\left( r,D\right) $ according to clause 1 of Definition 14
represents a (tree-like) \textsc{NM}$_{\rightarrow }^{\ast }$ deduction of $%
\ell ^{\text{\textsc{f}}}\left( \varepsilon \left( i\right) \right) $ such
that $h\left( \left[ \left( D\right) _{i}^{-}\right] \right) =1+h\left(
\left( D\right) _{i}^{-}\right) =1+h\left( \left( D\right) _{r}\right) $.
This easily follows by induction on $h\left( \left( D\right) _{r}\right) $
using clause 2 of Definition 6 with respect to $\partial $. So $\partial
_{n,r}^{\text{\textsc{u}}}$ is structurally well-defined. To complete the
proof of local correctness consider the only nontrivial clause\ 3 of
Definition 6 with respect to $\partial _{n,r}^{\text{\textsc{u}}}$. It will
suffice to show that \textsc{p}$\left( x,\!D_{n,r}\right) \subseteq 
\underset{y\in \text{\textsc{c}}\left( x,D_{n,r}\right) }{\bigcup }\ell
_{n,r}^{\text{\textsc{g}}}\!\left( \left\langle x,y\right\rangle \!\right) $
holds for any $x\in \,$\textsc{v}$\left( D_{n,r}\right) $ such that $%
\overrightarrow{\deg }\left( x,D_{n,r}\right) >0$ and $\overleftarrow{\deg }%
\left( x,D_{n,r}\right) >1$. If $h\left( x,D\right) <n$ for $r\notin \,$%
\textsc{c}$\left( x,D\right) $, or $h\left( x,D\right) =n$ for $x\neq r_{i}$
($1\leq i\leq p$), then \textsc{p}$\left( x,\!D_{n,r}\right) =\,$\textsc{p}$%
\left( x,\!D\right) $ and we are done by the assumption \textsc{p}$\left(
x,\!D\right) \subseteq \underset{y\in \text{\textsc{c}}\left( x,D\right) }{%
\bigcup }\ell ^{\text{\textsc{g}}}\!\left( \left\langle x,y\right\rangle
\!\right) $. Consider the remaining cases.

\begin{enumerate}
\item  Suppose $x=r_{i}$ ($1\leq i\leq p$). We have \textsc{p}$\left(
x,\!D_{n,r}\right) =\left\{ \varepsilon \left( i\right) \right\} \subseteq
\, $\textsc{p}$\left( r,\!D\right) $. Moreover, by the assumption \textsc{p}$%
\left( r,\!D\right) \subseteq \underset{y\in \text{\textsc{c}}\left(
r,D\right) }{\bigcup }\ell ^{\text{\textsc{g}}}\!\left( \left\langle
r,y\right\rangle \!\right) $, there exists a $y\in \,$\textsc{c}$\left(
r,D\right) $ with $\varepsilon \left( i\right) \in \ell ^{\text{\textsc{g}}%
}\left( \left\langle r,y\right\rangle \right) $. From this, by the
definition of $\left( D\right) _{i}^{\prime }$ and clauses\ 2 (b), (c) of
Definition 6 with respect to $\partial _{n,r}^{\text{\textsc{u}}}$, we
arrive at $y\in \,$\textsc{c}$\left( x,\left( D\right) _{i}^{\prime }\right)
\subseteq \,$\textsc{c}$\left( x,\!D_{n,r}\right) $ and $\varepsilon \left(
i\right) \in \ell _{n,r}^{\text{\textsc{g}}}\left( \left\langle
x,y\right\rangle \right) $. Thus \textsc{p}$\left( x,\!D_{n,r}\right)
\subseteq \underset{y\in \text{\textsc{c}}\left( x,D_{n,r}\right) }{\bigcup }%
\ell _{n,r}^{\text{\textsc{g}}}\!\left( \left\langle x,y\right\rangle
\!\right) $.

\item  Suppose $r\in \,$\textsc{c}$\left( x,D\right) $, and hence $%
x=u_{i}=\varepsilon \left( i\right) $ for some $1\leq i\leq p$. Consider any 
$z\in \,$\textsc{p}$\left( x,\!D_{n,r}\right) =\,$\textsc{p}$\left(
x,\!D\right) \subseteq \underset{y\in \text{\textsc{c}}\left( x,D\right) }{%
\bigcup }\ell ^{\text{\textsc{g}}}\!\left( \left\langle x,y\right\rangle
\!\right) $ and let $z\in \ell ^{\text{\textsc{g}}}\!\left( \left\langle
x,y\right\rangle \!\right) $ for some $y\in \,$\textsc{c}$\left( x,D\right) $%
. If $y\neq r$\ then $y\in \,$\textsc{c}$\left( x,D_{n,r}\right) $ and we
are done. Otherwise $y=r$, and then by clause 4 (b) of Definition 14 we
arrive at $x=\varepsilon \left( i\right) \in \ell ^{\text{\textsc{g}}}\left(
\left\langle \varepsilon \left( i\right) ,r\right\rangle \right) =\ell
_{n,r}^{\text{\textsc{g}}}\!\left( \left\langle \varepsilon \left( i\right)
,r_{i}\right\rangle \right) $ with $r_{i}\in \,$\textsc{c}$\left(
x,D_{n,r}\right) $. Hence \textsc{p}$\left( x,\!D_{n,r}\right) \subseteq 
\underset{y\in \text{\textsc{c}}\left( x,D_{n,r}\right) }{\bigcup }\ell
_{n,r}^{\text{\textsc{g}}}\!\left( \left\langle x,y\right\rangle \!\right) $.
\end{enumerate}

This completes the proof of condition 1. Now consider 5 (with respect to
every $\left( n,r\right) $-unfolding involved). In order to prove $\Gamma
_{\partial _{n,r}^{\text{\textsc{u}}}}\subseteq \Gamma _{\partial }$, it
will suffice to show that there is an assumption-preserving embedding of the
open threads in $\partial _{n,r}^{\text{\textsc{u}}}$ into the open threads
in $\partial $. So let $\Theta _{n,r}=\left[ v=x_{0},u=x_{1},\cdots
,x_{h\left( D\right) }=\varrho \!\left( D\right) \right] $ $\in \,$\textsc{th%
}$\left( e\!,\varrho \!\left( D\right) \!,\partial _{n,r}^{\text{\textsc{u}}%
}\right) $, $e=\left\langle u,v\right\rangle \in $ \textsc{e}$_{0}\!\left(
D_{n,r}\right) $, be any given open thread in $\partial _{n,r}^{\text{%
\textsc{u}}}$. (We consider only the proper case $\overleftarrow{\deg }%
\left( r,D\right) >1$.) A desired open thread in $\partial $, $\Theta =\left[
v^{\prime }=x_{0}^{\prime },u^{\prime }=x_{1}^{\prime },\cdots ,x_{h\left(
D\right) }^{\prime }=\varrho \!\left( D\right) \right] \in \,$\textsc{th}$%
\left( e^{\prime },\varrho \!\left( D\right) \!,\partial \right) $, $%
e^{\prime }=\left\langle u^{\prime },v^{\prime }\right\rangle \in $ \textsc{e%
}$_{0}\!\left( D\right) $ for $\ell _{n,\alpha }^{\text{\textsc{f}}}\left(
v^{\prime }\right) =\ell ^{\text{\textsc{f}}}\left( v\right) $ is obtained
by substituting $r$ for any $r_{i}$ occurring in $\Theta $. That is, for any 
$j\leq h\left( D\right) $ we let $x_{j}^{\prime }:=r$, if $j=n$ and $%
x_{j}\neq r$, else $x_{j}^{\prime }:=x_{j}$. That $\Theta \in \,$\textsc{th}$%
\left( e^{\prime },\varrho \!\left( D\right) \!,\partial \right) $ and $%
e^{\prime }=\left\langle x_{1}^{\prime },x_{0}^{\prime }\right\rangle \in $ 
\textsc{e}$_{0}\!\left( D\right) $ easily follows by the definition of $\ell
_{n,r}^{\text{\textsc{g}}}$. Hence $\Gamma _{\partial _{n,r}^{\text{\textsc{u%
}}}}\subseteq \Gamma _{\partial }$ .This completes the whole proof by
iteration with respect to all $r\in L_{n}\left( D\right) $, $\overleftarrow{%
\deg }\left( r,D\right) >1$\ involved.
\end{proof}

\begin{definition}[horizontal unfolding]
For any given $\partial \in \mathcal{D}^{\star }$ denote by $\partial ^{%
\text{\textsc{u}}}\in \mathcal{T}^{\ast }$ the last deduction in the
following iteration chain 
\begin{equation*}
\partial =\partial _{\left( h\left( \partial \right) \right) }^{\text{%
\textsc{u}}},\ \partial _{\left( h\left( \partial \right) -1\right) }^{\text{%
\textsc{u}}},\ \cdots ,\ \partial _{\left( 0\right) }^{\text{\textsc{u}}%
}=\partial ^{\text{\textsc{u}}}
\end{equation*}
where for every $i<h\left( \partial \right) $ we let $\partial _{\left(
i-1\right) }^{\text{\textsc{u}}}:=\left( \partial _{\left( i\right) }^{\text{%
\textsc{u}}}\right) _{i-1}^{\text{\textsc{u}}}$. It is readily seen that all 
$\partial ^{\text{\textsc{u}}}$ in question are mutually isomorphic
(actually equal up to the enumerations $\varepsilon $). The operation $%
\partial \hookrightarrow \partial ^{\text{\textsc{u}}}$ is called the \emph{%
horizontal unfolding}, in \textsc{NM}$_{\rightarrow }^{\star }$.
\end{definition}

\begin{theorem}
For any dag-like \textsc{NM}$_{\rightarrow }^{\star }$ deduction $\partial $
with root-formula $\rho $, the horizontal unfolding $\partial ^{\text{%
\textsc{u}}}$ is a tree-like \textsc{NM}$_{\rightarrow }^{\ast }$ deduction
of $\rho $ such that $\Gamma _{\partial ^{\text{\textsc{u}}}}\subseteq
\Gamma _{\partial }$. In particular, if $\partial $ is a plain dag-like 
\textsc{NM}$_{\rightarrow }$ proof of $\rho $, then $\partial ^{\text{%
\textsc{u}}}$ is a tree-like \textsc{NM}$_{\rightarrow }^{\ast }$\ proof of $%
\rho $.
\end{theorem}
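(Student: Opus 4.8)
The plan is to derive all the assertions by a single downward induction along the iteration chain of Definition 16, reducing everything to the per-level facts already established in Lemma 15. First I would note that $\partial $ itself lies in $\mathcal{D}_{h\left( \partial \right) }^{\star }$, since for a leaf $x$ the subtree $\left( D\right) _{x}$ is the singleton $\left\{ x\right\} $ and singletons are trivially pairwise disjoint; this supplies the base $\partial =\partial _{\left( h\left( \partial \right) \right) }^{\text{\textsc{u}}}$ of the chain.

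Next I would run the chain through Lemma 15(1): each horizontal unfolding on level $n$ carries a member of $\mathcal{D}_{n}^{\star }$ into $\mathcal{D}_{n-1}^{\star }$, so by induction every intermediate deduction $\partial _{\left( n\right) }^{\text{\textsc{u}}}$ lies in $\mathcal{D}_{n}^{\star }$ and the terminal $\partial ^{\text{\textsc{u}}}$ lands in the tree-like stratum $\mathcal{D}_{1}^{\star }=\mathcal{T}^{\ast }$. That is exactly the claim that $\partial ^{\text{\textsc{u}}}$ is a genuine tree-like \textsc{NM}$_{\rightarrow }^{\ast }$ deduction. Running in parallel, Lemma 15(2) propagates root- and height-invariance at every step, so $\varrho \left( \partial ^{\text{\textsc{u}}}\right) =\varrho \left( \partial \right) $; and since unfolding relabels each node by its $\ell ^{\text{\textsc{f}}}$-preimage (clause 3 of Definition 14), which fixes the root label, the root-formula of $\partial ^{\text{\textsc{u}}}$ remains $\rho $.

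For the assumption bound I would compose the inclusions of Lemma 15(5). Applied at the $i$-th link it yields $\Gamma _{\partial _{\left( i-1\right) }^{\text{\textsc{u}}}}\subseteq \Gamma _{\partial _{\left( i\right) }^{\text{\textsc{u}}}}$, and by transitivity of $\subseteq $ along the whole chain $\partial =\partial _{\left( h\left( \partial \right) \right) }^{\text{\textsc{u}}},\cdots ,\partial _{\left( 0\right) }^{\text{\textsc{u}}}=\partial ^{\text{\textsc{u}}}$ I obtain $\Gamma _{\partial ^{\text{\textsc{u}}}}\subseteq \Gamma _{\partial }$. The ``in particular'' clause is then immediate: if $\partial $ is a plain dag-like \textsc{NM}$_{\rightarrow }$ proof of $\rho $ then $\Gamma _{\partial }=\emptyset $, whence $\Gamma _{\partial ^{\text{\textsc{u}}}}=\emptyset $ and $\partial ^{\text{\textsc{u}}}$ is a tree-like \textsc{NM}$_{\rightarrow }^{\ast }$ proof of $\rho $.

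The substantive content sits entirely in Lemma 15, so Theorem 17 is essentially its packaging and I expect no deep obstacle here; the care is all in the bookkeeping. The main thing to verify is that the iteration is well-founded, i.e. that the input to each step genuinely occupies the $\mathcal{D}_{n}^{\star }$ demanded by Lemma 15 (which is exactly what clause (1) guarantees), and that the $\Gamma $-relations compose in the correct direction. These are only inclusions, not equalities, because unfolding may delete entire $\ell ^{\text{\textsc{g}}}$-separated subtrees and hence drop some open assumptions; but $\subseteq $ is all the theorem asserts and all that soundness via Lemma 5 needs. The one genuinely fiddly point is confirming the chain's level indexing, so that the iteration terminates exactly in the tree-like stratum $\mathcal{T}^{\ast }$ and not one link short.
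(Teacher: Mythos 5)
Your proposal is correct and follows essentially the same route as the paper, which likewise obtains Theorem 17 by iterating Lemma 15 along the chain of Definition 16 and then observes that $\Gamma _{\partial ^{\text{\textsc{u}}}}\subseteq \Gamma _{\partial }=\emptyset $ forces $\Gamma _{\partial ^{\text{\textsc{u}}}}=\emptyset $. Your version merely spells out the bookkeeping (base case, stratum descent via Lemma 15(1), and transitive composition of the inclusions in Lemma 15(5)) that the paper leaves implicit.
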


\begin{proof}
The assertions follow by iteration from Lemma 15, as $\Gamma _{\partial ^{%
\text{\textsc{u}}}}\subseteq \Gamma _{\partial }=\emptyset $ obviously
implies $\Gamma _{\partial ^{\text{\textsc{u}}}}=\emptyset $.
\end{proof}

Together with Lemma 5 the latter assertion yields

\begin{corollary}
Plain dag-like \textsc{NM}$_{\rightarrow }$ provability is sound and
complete with respect to minimal propositional logic.
\end{corollary}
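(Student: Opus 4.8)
The plan is to treat soundness and completeness as two independent transfers between the tree-like and dag-like worlds, each dispatched by one of the two structural operations already available. Recall that \emph{plain dag-like} \textsc{NM}$_{\rightarrow}$ \emph{provability} of $\rho$ means the existence of a plain dag-like \textsc{NM}$_{\rightarrow}$ proof of $\rho$, i.e.\ a $\partial \in \mathcal{D}^{\star}$ with $\ell^{\text{\textsc{f}}}\!\left( \varrho\left( \partial\right) \right) = \rho$ and $\Gamma_{\partial} = \emptyset$. I will establish the two implications separately: (completeness) if $\rho$ is a minimal tautology then such a $\partial$ exists, and (soundness) if such a $\partial$ exists then $\rho$ is valid in minimal logic.

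For completeness I would invoke Corollary 12 directly: it asserts that any minimal tautology $\rho$ has a plain dag-like \textsc{NM}$_{\rightarrow}$ proof $\partial^{\text{\textsc{c}}}$. If one prefers to re-derive this in situ, start from Claim 3, which furnishes a tree-like \textsc{NM}$_{\rightarrow}$ proof $\partial_{0}$ of $\rho$. Since \textsc{NM}$_{\rightarrow}$ is contained in \textsc{NM}$_{\rightarrow}^{\ast}$, and after the trivial height-padding by $\left( R\right)$ noted before Definition 6, $\partial_{0}$ is an element of $\mathcal{T}^{\ast}$ with $\Gamma_{\partial_{0}} = \emptyset$. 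Applying the horizontal compression of Theorem 11 yields $\partial_{0}^{\text{\textsc{c}}}$, a plain dag-like \textsc{NM}$_{\rightarrow}$ deduction of $\rho$ with $\Gamma_{\partial_{0}^{\text{\textsc{c}}}} = \Gamma_{\partial_{0}} = \emptyset$, i.e.\ a plain dag-like \textsc{NM}$_{\rightarrow}$ proof of $\rho$. This gives the completeness half.

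For soundness I would run the opposite direction using Theorem 17. Suppose $\partial \in \mathcal{D}^{\star}$ is a plain dag-like \textsc{NM}$_{\rightarrow}$ proof of $\rho$, so $\Gamma_{\partial} = \emptyset$. By Theorem 17 the horizontal unfolding $\partial^{\text{\textsc{u}}}$ is a tree-like \textsc{NM}$_{\rightarrow}^{\ast}$ deduction of $\rho$ satisfying $\Gamma_{\partial^{\text{\textsc{u}}}} \subseteq \Gamma_{\partial} = \emptyset$, whence $\Gamma_{\partial^{\text{\textsc{u}}}} = \emptyset$ and $\partial^{\text{\textsc{u}}}$ is in fact a tree-like \textsc{NM}$_{\rightarrow}^{\ast}$ proof of $\rho$. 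Lemma 5 states that tree-like \textsc{NM}$_{\rightarrow}^{\ast}$ provability is sound with respect to minimal propositional logic, so $\rho$ is valid. Combining the two halves yields the corollary.

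The two theorems invoked carry all the structural weight, so no genuinely new argument is required here; the only point demanding care is the bookkeeping of open assumptions. The subtlety — illustrated by Example 13, where $\alpha$ appears discharged in the diamond of $\partial^{\text{\textsc{c}}}$ yet remains open — is precisely what the $\ell^{\text{\textsc{g}}}$-threads are designed to detect, and it is the identities $\Gamma_{\partial^{\text{\textsc{c}}}} = \Gamma_{\partial}$ (Theorem 11) and $\Gamma_{\partial^{\text{\textsc{u}}}} \subseteq \Gamma_{\partial}$ (Theorem 17) that must be read off correctly. Once those are granted, the implication $\Gamma_{\partial} = \emptyset \Rightarrow \Gamma_{\partial^{\text{\textsc{u}}}} = \emptyset$ used in the soundness half is immediate, and I expect no remaining obstacle.
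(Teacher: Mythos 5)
Your proposal is correct, and its soundness half is exactly the paper's argument: unfold the plain dag-like proof via Theorem 17 to obtain a tree-like \textsc{NM}$_{\rightarrow }^{\ast }$ proof and conclude by the soundness part of Lemma 5. Where you diverge is the completeness half. The paper's stated justification (``together with Lemma 5 the latter assertion yields'') points to a lighter route than yours: Lemma 5 already gives a tree-like \textsc{NM}$_{\rightarrow }^{\ast }$ proof of any minimal tautology, and such a tree --- after the $\left( R\right) $-padding to meet the redag height condition --- \emph{is} a plain dag-like \textsc{NM}$_{\rightarrow }$ proof, since in the tree-like domain every leaf has a unique deduction thread, $\ell ^{\text{\textsc{g}}}$ is vacuous, and redag-like provability coincides with tree-like provability (as the paper remarks after Definition 7). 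So the paper obtains completeness without invoking the compression machinery at all, whereas you route it through Corollary 12 (equivalently Claim 3 plus padding plus Theorem 11). Your detour is perfectly valid and buys, as a by-product, the polynomial size bound --- but that bound is not part of the corollary being proved and is recorded separately in Corollary 12 and Conclusion 19. In both versions the only genuinely delicate point is the one you correctly flag: the transfer of $\Gamma _{\partial }=\emptyset $ in each direction rests on the open-assumption bookkeeping of Theorems 11 and 17 (i.e.\ $\Gamma _{\partial ^{\text{\textsc{c}}}}=\Gamma _{\partial }$ and $\Gamma _{\partial ^{\text{\textsc{u}}}}\subseteq \Gamma _{\partial }$), not on any new argument at the level of the corollary itself.
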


Together with Corollary 12 this yields

\begin{conclusion}
A given formula $\rho $ is a tautology in minimal propositional logic iff
there exists a plain dag-like \textsc{NM}$_{\rightarrow }$ proof of $\rho $
whose size and weight are $\mathcal{O}\left( \left| \rho \right| ^{4}\right) 
$ and $\mathcal{O}\left( \left| \rho \right| ^{5}\right) $, respectively.
\end{conclusion}

\subsubsection{Local correctness and complexity of verification}

Our definition of plain dag-like provability via `global' discharging
function (Definition 5) is inappropriate for polytime verification. This is
because `$\alpha $ \emph{is an open (resp. closed) assumption}' refers to
potentially exponential set of threads $\text{\textsc{th\negthinspace }}%
\left( e\!,\varrho ,\partial \right) $ \negthinspace for $e=\left\langle
u,v\right\rangle $ with $\!\ell ^{\text{\textsc{f}}}\left( v\right)
\!=\!\alpha $ in a given plain redag $\partial =\left\langle D,\text{\textsc{%
s}},\ell ^{\text{\textsc{f}}},\ell ^{\text{\textsc{g}}}\text{\textsc{%
\thinspace }}\right\rangle $, thus being merely a NP (resp. coNP) problem,
unless $\partial $\ is a tree. To overcome this obstacle we upgrade basic
(standard) conditions of local correctness of a given (re)dag-like deduction 
$\partial $ (cf. Definition 6) by adding a new labeling function $\ell ^{%
\text{\textsc{d}}}$ that assigns boolean values $0$ or $1$ to all pairs $%
\left( e,\alpha \right) $, where $e=\left\langle u,v\right\rangle $ is an
edge and $\alpha $ an assumption, in $\partial $. Informally, $\ell ^{\text{%
\textsc{d}}}\left( e,\alpha \right) =1$ says that in every $\Theta \in \ $%
\textsc{th}$\left( e\!,\varrho ,\partial \right) $, $\alpha $ is discharged
at $e$, or below, by (sub)occurrences of $\left( \rightarrow I\right) $ with
premise $\alpha $. The corresponding new condition $\bullet $ of the local
correctness in question is shown below, where $K\left( u\right) =\left[
x_{0},\cdots ,x_{k}\right] $ for $x_{0}=u$ and $x_{k}=U\left( u\right) $.

\begin{itemize}
\item  $\ell ^{\text{\textsc{d}}}\left( e,\alpha \right) =1$ iff one of the
following holds.
\end{itemize}

\begin{enumerate}
\item  $u=\varrho $ and $\ell ^{\text{\textsc{f}}}\left( u\right) =\alpha
\rightarrow \ell ^{\text{\textsc{f}}}\left( v\right) $.

\item  $u\neq \varrho $ and

\begin{enumerate}
\item  either $\ell ^{\text{\textsc{f}}}\left( x_{i+1}\right) =\alpha
\rightarrow \ell ^{\text{\textsc{f}}}\left( x_{i}\right) $ holds for some $%
0\leq i<k$,

\item  or $U\left( u\right) \neq \varrho $ and $\underset{w\in \ell ^{\text{%
\textsc{g}}}\left( e\right) }{\prod }\ell ^{\text{\textsc{d}}}\left(
\left\langle w,U\left( u\right) \right\rangle ,\alpha \right) =1$.
\end{enumerate}
\end{enumerate}

Keeping this in mind we can present ``plain'' assertion $\Gamma _{\partial
}=\emptyset $ in a simplified ``encoded'' form $\left( \forall \left\langle
u,v\right\rangle \in \!\text{\textsc{e}}_{0}\left( D\right) \right) \ell ^{%
\text{\textsc{d}}}\left( \left\langle u,v\right\rangle ,\ell ^{\text{\textsc{%
f}}}\left( v\right) \right) =1$. \footnote{{\footnotesize Recall that }$%
K\left( u\right) ${\footnotesize \ are uniquely determined by }$u$%
{\footnotesize \ and }$U\left( u\right) =\varrho ${\footnotesize \ . In
particular this shows that in standard tree-like case the entire
verification is trivial. }}

\begin{definition}
\textsc{NM}$_{\rightarrow }^{\star }$ deductions $\partial $ enriched by $%
\ell ^{\text{\textsc{d}}}$ and satisfying all conditions of the upgraded
local correctness (including $\bullet $), are called \emph{encoded} dag-like 
\textsc{NM}$_{\rightarrow }$ deductions. A given assumption $\alpha $ in an
encoded dag-like \textsc{NM}$_{\rightarrow }$ deduction $\partial $\ of $%
\rho $ is called \emph{closed}\ (or \emph{discharged}) if for every leaf $v$
with $\ell ^{\text{\textsc{f}}}\left( v\right) =\alpha $\ and every edge $%
e=\left\langle u,v\right\rangle $ we have $\ell ^{\text{\textsc{d}}}\left(
e,\alpha \right) =1$. Otherwise $\alpha $ is called \emph{open (or
undischarged)}. Furthermore, as in the case of plain dag-like \textsc{NM}$%
_{\rightarrow }$ deductions, we denote by $\Gamma _{\partial }$ the set of
open assumptions and call $\partial $ an \emph{encoded dag-like} \textsc{NM}$%
_{\rightarrow }$\emph{\ deduction} of $\rho $ from\emph{\ }the assumptions%
\emph{\ }$\Gamma _{\partial }$. If $\Gamma _{\partial }=\emptyset $, then $%
\partial $ is called an \emph{encoded dag-like }\textsc{NM}$_{\rightarrow }$%
\emph{\ proof} of $\rho $.
\end{definition}

\begin{lemma}[plain = encoded]
The notions of plain and encoded deducibility and/or provability are
equivalent, while $\ell ^{\text{\textsc{d}}}$ is uniquely determined by $%
\ell ^{\text{\textsc{f}}}$ and $\ell ^{\text{\textsc{g}}}$. In particular,
any plain dag-like \textsc{NM}$_{\rightarrow }$ proof of $\rho $ can be both
upgraded to and degraded from an encoded dag-like \textsc{NM}$_{\rightarrow
} $ proof of $\rho $ of the same size. Moreover, a statement `$\partial $%
\emph{\ is an encoded dag-like} \textsc{NM}$_{\rightarrow }$ \emph{proof of }%
$\rho $' is verifiable by a TM in $\left\| \partial \right\| $-polynomial
time.
\end{lemma}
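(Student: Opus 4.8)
The plan is to treat the four assertions in turn, reducing everything to one inductive correspondence between the recursively defined label $\ell^{\text{\textsc{d}}}$ and the global thread-discharge property of Definition 7.

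First I would settle that $\ell^{\text{\textsc{d}}}$ is uniquely determined by $\ell^{\text{\textsc{f}}}$ and $\ell^{\text{\textsc{g}}}$, which simultaneously yields the ``upgrade/degrade'' clause. The point is that condition $\bullet$ is a bona fide recursion on height: for an edge $e=\langle u,v\rangle$ the value $\ell^{\text{\textsc{d}}}(e,\alpha)$ is decided either outright from $\ell^{\text{\textsc{f}}}$ along the chain $K(u)=[u=x_0,\dots,x_k=U(u)]$ (cases 1 and 2(a)), or, in case 2(b), from the values $\ell^{\text{\textsc{d}}}(\langle w,U(u)\rangle,\alpha)$ with $w\in\ell^{\text{\textsc{g}}}(e)\subseteq\text{\textsc{p}}(U(u))$. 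Since each such $w$ is a parent of $U(u)$ and $U(u)$ lies (weakly) below $u$, one has $h(w)=h(U(u))-1<h(u)$, so the recursion strictly lowers the height of the source node and bottoms out at $u=\varrho$. Hence, by recursion on $h(u)$, $\ell^{\text{\textsc{d}}}$ exists and is unique given $\langle D,\text{\textsc{s}},\ell^{\text{\textsc{f}}},\ell^{\text{\textsc{g}}}\rangle$. Upgrading a plain deduction then just means computing this unique $\ell^{\text{\textsc{d}}}$, and degrading means forgetting it; since neither operation touches the underlying redag, $\left|\partial\right|$ and $\left\|\partial\right\|$ are unchanged.

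The core is the equivalence of plain and encoded (un)dischargedness, which I would obtain from the single claim: for every $e=\langle u,v\rangle$ and every $\alpha$, $\ell^{\text{\textsc{d}}}(e,\alpha)=1$ iff every thread $\Theta\in\text{\textsc{th}}(e,\varrho,\partial)$ contains an edge in $(\rightarrow I)_\alpha$ (i.e.\ discharges $\alpha$ at $e$ or below). I would prove this by induction on $h(u)$, following the same chain decomposition used in the recursion. In the base case $u=\varrho$ the only thread is $e$ itself, matching case 1. For the step, every $\Theta$ through $e$ is forced to run down the single-parent chain $K(u)$ to $U(u)$ before it can branch; along that forced segment discharge of $\alpha$ is unambiguous, and is exactly what case 2(a) records. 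If $\alpha$ is not discharged inside $K(u)$, then ``every thread discharges $\alpha$'' becomes a condition on the continuations below $U(u)$, and by Definition 7 these continuations are precisely the threads passing through the grandparents $w\in\ell^{\text{\textsc{g}}}(e)$; hence ``all of them discharge $\alpha$'' is, by the induction hypothesis applied to the lower edges $\langle w,U(u)\rangle$, the conjunction $\prod_{w\in\ell^{\text{\textsc{g}}}(e)}\ell^{\text{\textsc{d}}}(\langle w,U(u)\rangle,\alpha)=1$ of case 2(b). Specialising the claim to top edges $e=\langle u,v\rangle\in\text{\textsc{e}}_0(D)$ with $\alpha=\ell^{\text{\textsc{f}}}(v)$ turns ``$\alpha$ open'' (Definition 7) into ``$\ell^{\text{\textsc{d}}}(e,\ell^{\text{\textsc{f}}}(v))=0$'' (Definition 19), so the plain and encoded $\Gamma_\partial$ coincide; in particular $\Gamma_\partial=\emptyset$ in one sense iff in the other, which is the claimed equivalence of proofs. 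For polytime verifiability I would observe that, $\ell^{\text{\textsc{d}}}$ being fixed by the above recursion, a verifier can rebuild the whole table bottom-up: the local conditions of Definition 6 are each a bounded check on a node with its children, parents, chain $K(u)$ and grandparent set, hence polynomial; there are only polynomially many edges and at most $\phi(\partial)$ relevant assumptions $\alpha$ (the distinct leaf formulas); and each value in $\bullet$ costs at most one pass along $K(u)$ (length $\le h(\partial)$) and one product over $\ell^{\text{\textsc{g}}}(e)$. Thus the table and the final test $(\forall\langle u,v\rangle\in\text{\textsc{e}}_0(D))\,\ell^{\text{\textsc{d}}}(\langle u,v\rangle,\ell^{\text{\textsc{f}}}(v))=1$ are computable in time polynomial in $\left\|\partial\right\|$.

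The main obstacle is the inductive claim above, and specifically the branch step: one must verify that the conjunction over $\ell^{\text{\textsc{g}}}(e)$ in case 2(b) faithfully expresses ``$\alpha$ is discharged in every thread'' across the split at $U(u)$. This requires pinning down, from the road-sign thread definition (Definition 7) together with the coherence conditions on $\ell^{\text{\textsc{g}}}$ (Definition 6, clause 2), that the thread-suffixes through $e$ lying below $U(u)$ are exactly the union, over $w\in\ell^{\text{\textsc{g}}}(e)$, of the threads through $\langle w,U(u)\rangle$, so that the universal (``every thread'') quantifier factors as a conjunction over $w$. Getting this correspondence exactly right, including the degenerate segments with $\overleftarrow{\deg}(x_i)=1$ inside $K(u)$ and the interaction with the discharging occurrences of $(\rightarrow I)$, is the delicate part; the remaining clauses are routine bookkeeping.
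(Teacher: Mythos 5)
Your proposal is correct and follows essentially the same route as the paper: the same height induction establishing that $\ell ^{\text{\textsc{d}}}\left( e,\alpha \right) =1$ iff every thread through $e$ discharges $\alpha $ (the paper states the contrapositive, namely $\ell ^{\text{\textsc{d}}}\left( e,\alpha \right) =0$ iff some $\left( \rightarrow I\right) _{\alpha }$-free thread exists, proved by induction on $h\left( v\right) $ using clause 2(b) of $\bullet $), with uniqueness of $\ell ^{\text{\textsc{d}}}$ coming from the same well-founded recursion. The only divergence is cosmetic: for polytime verification the paper writes out an explicit relational encoding of $\partial $ and a list of boolean local-correctness conditions, whereas you argue directly that the $\ell ^{\text{\textsc{d}}}$-table can be rebuilt bottom-up in polynomial time; both are routine and equivalent.
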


\begin{proof}
To prove first two assertions it will suffice to show that for any $%
e=\left\langle u,v\right\rangle \in $ \textsc{e}$_{0}\!\left( D\right) $, $%
\ell ^{\text{\textsc{d}}}\left( e,\ell ^{\text{\textsc{f}}}\left( v\right)
\right) =0$\ holds iff there exists an open thread $\Theta \in \ $\textsc{th}%
$\left( e,\varrho ,\partial \right) $. Actually we observe that a stronger
equivalence stating that for any $e=\!\left\langle u,v\right\rangle \in \!\ $%
\textsc{e}$\left( D\right) $ and $\alpha \in \ell ^{\text{\textsc{f}}}\left( 
\text{\textsc{v}}\left( D\right) \right) $, $\ell ^{\text{\textsc{d}}}\left(
e,\alpha \right) =0$ iff there exists a $\left( \rightarrow \!I\right)
_{\alpha }$-free thread $\Theta \in \ $\textsc{th}$\left( e,\varrho
,\partial \right) $, is provable by induction on $h\left( v\right) $. The
corresponding induction step easily follows from clause 2 (b) of the local
correctness condition $\bullet $. To establish the last assertion we'll
specify standard encoding of an encoded $\partial =\left\langle D,\text{%
\textsc{s}},\ell ^{\text{\textsc{f}}},\ell ^{\text{\textsc{g}}},\ell ^{\text{%
\textsc{d}}}\right\rangle $ in the alphabet of $\mathcal{L}_{\rightarrow }$
extended by $0,1$ and $v_{0},\cdots ,v_{\left| \text{\textsc{v}}\left(
D\right) \right| -1}$ (encoded vertices). Let $N:=\left\{ 0,\cdots ,h\left(
D\right) \!-\!1\right\} $, $V:=\left\{ v_{0},\cdots ,v_{\left| \text{\textsc{%
v}}\left( D\right) \right| -1}\right\} $, $F:=\ell ^{\text{\textsc{f}}%
}\left( \text{\textsc{v}}\left( D\right) \right) $ and consider the
following sets/relations

\begin{equation*}
\begin{array}{c}
H\subseteq \!V\times N,\ E\subseteq \!V^{2},\ D_{1}\subseteq V,\ S\subseteq
V^{2}\cup V^{3},\ K\subseteq \!V^{2}, \\ 
L^{\text{\textsc{f}}}\subseteq \!\!V\times F,\ L^{\text{\textsc{g}}%
}\subseteq E\times V,\ L^{\text{\textsc{d}}}\subseteq E\times F
\end{array}
\end{equation*}
representing respectively 
\begin{equation*}
\begin{array}{c}
H\cong \left\{ \left\langle u,h\left( u\right) \right\rangle \right\} _{u\in 
\text{\textsc{v}}\left( D\right) },\ E\cong \text{\textsc{e}}\left( D\right)
, \\ 
D_{1}\cong \left\{ u\in \text{\textsc{v}}\left( D\right) :\overleftarrow{%
\deg }\left( u\right) =1\right\} ,\ S\cong \left\{ \left\langle
u,z\right\rangle :z\in \text{\textsc{s}}\left( u\right) \right\} _{u\in 
\text{\textsc{v}}\left( D\right) }, \\ 
U\cong \left\{ \left\langle u,U\!\left( u\right) \right\rangle \!\right\}
_{u\in \text{\textsc{v}}\left( D\right) },\ K\cong \left\{ \left\langle
u,x_{i}\right\rangle :K\left( u\right) =\left[ x_{0},\cdots ,x_{k}\right]
,i\leq k\right\} _{u\in \text{\textsc{v}}\left( D\right) }, \\ 
L^{\text{\textsc{f}}}\cong \left\{ \left\langle u,\ell ^{\text{\textsc{f}}%
}\left( u\right) \right\rangle \!\right\} _{u\in \text{\textsc{v}}\left(
D\right) },\ L^{\text{\textsc{g}}}\cong \left\{ \left\langle
e,x\right\rangle \!:x\in \ell ^{\text{\textsc{g}}}\left( e\right) \right\}
_{e\in \text{\textsc{e}}\left( D\right) \!}, \\ 
L^{\text{\textsc{d}}}\cong \left\{ \left\langle e,\alpha \right\rangle :\ell
^{\text{\textsc{d}}}\left( e,\alpha \right) =1\right\} _{e\in \text{\textsc{e%
}}\left( D\right) \!}
\end{array}
\end{equation*}
(cf. Definition 6).\footnote{{\footnotesize For brevity we assume that }$%
v_{0}${\footnotesize \ corresponds to }$\varrho ${\footnotesize . Note that }%
\textsc{c}$\left( u\right) ${\footnotesize \ and }\textsc{p}$\left( u\right) 
${\footnotesize \ are easily parametrizable in }$E${\footnotesize .}} Note
that for any nontrivial $\partial $ we have:

\begin{itemize}
\item  $\left\| V\right\| =$ $\left| V\right| =\left| \partial \right| \leq
\left\| \partial \right\| $,

\item  $\left\| D_{1}\right\| =\left| D_{1}\right| \leq \left| \partial
\right| \leq \left\| \partial \right\| $,

\item  $\left\| H\right\| \leq $ $\left| V\right| \log h\left( D\right) \leq
\left| V\right| \log \left| \text{\textsc{v}}\left( D\right) \right| =\left|
\partial \right| \log \left| \partial \right| <\left\| \partial \right\|
^{2} $,

\item  $\max \left\{ \left\| E\right\| ,\left\| K\right\| \right\} =2\max
\left\{ \left| E\right| ,\left| K\right| \right\} \leq 2\left| \partial
\right| ^{2}<\left\| \partial \right\| ^{3}$,

\item  $\left\| S\right\| =\left| S\right| <2\left| \partial \right|
^{3}\leq \left\| \partial \right\| ^{3}$,

\item  $\left\| L^{\text{\textsc{f}}}\right\| \leq \left| V\right| \times
\mu \left( \partial \right) =\left| \partial \right| \times \mu \left(
\partial \right) \leq \left\| \partial \right\| $,

\item  $\left\| L^{\text{\textsc{g}}}\right\| =\left| L^{\text{\textsc{g}}%
}\right| \leq \left| E\right| \times \left| V\right| =$ $\left| \partial
\right| ^{3}\leq \left\| \partial \right\| ^{3}$,

\item  $\left\| L^{\text{\textsc{d}}}\right\| \leq \left| E\right| \times
\left| F\right| $ $\times \mu \left( \partial \right) \leq \left| \partial
\right| ^{2}\times \phi \left( \partial \right) \times \mu \left( \partial
\right) \leq \left\| \partial \right\| ^{3}$.
\end{itemize}

Hence a tuple $t=\left\langle H,E,L,D_{1},S,K,L^{\text{\textsc{f}}},L^{\text{%
\textsc{g}}},L^{\text{\textsc{d}}}\right\rangle $ can be represented in the
extended language by a string $s$ of the length $\leq $ $\mathcal{O\!}\left(
\left\| \partial \right\| ^{3}\right) $. Having this we observe that
upgraded local correctness of any given encoded redag $\partial
=\left\langle D,\text{\textsc{s}},\ell ^{\text{\textsc{f}}},\ell ^{\text{%
\textsc{g}}},\ell ^{\text{\textsc{d}}}\text{\textsc{\thinspace }}%
\right\rangle $ is a boolean combination of at most $\mathcal{O\!}\left(
\left| \partial \right| ^{9}\right) $\ many elementary equations and queries
over components of $s$. To put it more exactly, the upgraded local
correctness of $\partial $ is the conjunction of the following boolean
assertions $1-24$ for $x$, $y$, $z$,\ $u$, $v$, $w$ and $i$, $j$ and $\alpha 
$, $\beta $ ranging over $V$ and $N$ and $F$, respectively, where we use
abbreviations:

\begin{itemize}
\item  $x\in L:=\fbox{$\underset{y\in V}{\bigwedge }\left\langle
x,y\right\rangle \notin E$},$

\item  $\left\langle x,y\right\rangle \in U:=\fbox{$\left\langle
x,y\right\rangle \in K\wedge y\notin D_{1}$},$

\item  $\left\langle y,x\right\rangle \in \left( \rightarrow I\right)
_{\alpha }:=\fbox{$\underset{\beta \in F}{\bigvee }\left( \left\langle
y,\beta \right\rangle \in L^{\text{\textsc{f}}}\wedge \left\langle x,\alpha
\rightarrow \beta \right\rangle \in L^{\text{\textsc{f}}}\right) $},$

\item  $R\left( u,v,z,\alpha \right) \smallskip :=$

$\fbox{$\left. 
\begin{array}{c}
\underset{x,y\in V}{\bigvee }\left( \left\langle u,x\right\rangle \in
K\wedge \left\langle u,y\right\rangle \in K\wedge \left\langle
y,x\right\rangle \in E\wedge \left\langle y,x\right\rangle \in \left(
\rightarrow I\right) _{\alpha }\right) \\ 
\wedge \underset{w\in V}{\bigwedge }\left( \left\langle \left\langle
u,v\right\rangle ,w\right\rangle \notin L^{\text{\textsc{g}}}\vee
\left\langle \left\langle w,z\right\rangle ,\alpha \right\rangle \notin L^{%
\text{\textsc{d}}}\right)
\end{array}
\right. $}.$
\end{itemize}

\begin{enumerate}
\item  $\fbox{$\underset{i\in N}{\bigvee }\left\langle u,i\right\rangle \in
H $}$

\item  $\fbox{$\left\langle u,0\right\rangle \in H\Leftrightarrow u=\varrho $%
}$

\item  $\fbox{$u\notin L\vee \left\langle u,h\left( D\right)
\!-\!1\right\rangle \in H$}$

\item  $\fbox{$\left\langle u,i\right\rangle \notin E\vee \left\langle
u,j\right\rangle \notin E\vee i=j$}$

\item  $\fbox{$\left\langle u,x\right\rangle \notin E\vee \left\langle
u,y\right\rangle \notin E\vee \left\langle u,i\right\rangle \notin H\vee
\left\langle x,i+1\right\rangle \in H\wedge \left\langle y,i+1\right\rangle
\in H$}$

\item  $\fbox{$\left\langle u,\left\langle x,y\right\rangle \right\rangle
\notin S\vee \left( \left\langle u,x\right\rangle \in E\wedge \left\langle
u,y\right\rangle \in E\right) $}$

\item  $\fbox{$\left\langle u,x\right\rangle \in E\Leftrightarrow
\left\langle u,x\right\rangle \in S\vee \underset{y\in V}{\bigvee }\left(
\left\langle u,\left\langle x,y\right\rangle \right\rangle \in S\vee
\left\langle u,\left\langle y,x\right\rangle \right\rangle \in S\right) $}$

\item  $\fbox{$x\notin D_{1}\vee \underset{v\in V}{\bigvee }\left\langle
v,x\right\rangle \in E$}$

\item  $\fbox{$x\notin D_{1}\vee \left\langle y,x\right\rangle \notin E\vee
\left\langle z,x\right\rangle \notin E\vee y=z$}$

\item  $\fbox{$\left\langle u,u\right\rangle \in K$}$

\item  $\fbox{$u=x\vee \left( \left\langle u,x\right\rangle \in
K\Leftrightarrow \underset{y\in V}{\bigvee }\left( \left\langle
x,y\right\rangle \in E\wedge y\in D_{1}\wedge \left\langle u,y\right\rangle
\in K\right) \right) $}$

\item  $\fbox{$\left\langle \varrho ,\rho \right\rangle \in L^{\text{\textsc{%
f}}}$}$

\item  $\fbox{$\left\langle u,\alpha \right\rangle \notin L^{\text{\textsc{f}%
}}\vee \left\langle u,\beta \right\rangle \notin L^{\text{\textsc{f}}}\vee
\alpha =\beta $}$

\item  $\fbox{$\left\langle u,x\right\rangle \notin S\vee \left\langle
u,\gamma \right\rangle \notin L^{\text{\textsc{f}}}\vee \left\langle
x,\gamma \right\rangle \in L^{\text{\textsc{f}}}\vee \left( \gamma =\alpha
\rightarrow \beta \wedge \left\langle x,\beta \right\rangle \in L^{\text{%
\textsc{f}}}\right) $}$

\item  $\fbox{$\left\langle u,\left\langle x,y\right\rangle \right\rangle
\notin S\vee \left\langle u,\beta \right\rangle \notin L^{\text{\textsc{f}}%
}\vee \underset{a\in F}{\bigvee }\left\langle x,\alpha \right\rangle \in L^{%
\text{\textsc{f}}}\wedge \left\langle y,\alpha \rightarrow \beta
\right\rangle \in L^{\text{\textsc{f}}}$}$

\item  $\fbox{$\left\langle \left\langle u,y\right\rangle ,x\right\rangle
\notin L^{\text{\textsc{g}}}\vee \left\langle u,y\right\rangle \in E$}$

\item  $\fbox{$\left\langle u,y\right\rangle \notin U\vee \left\langle
\left\langle u,v\right\rangle ,x\right\rangle \notin L^{\text{\textsc{g}}%
}\vee \left\langle x,y\right\rangle \in E$}$

\item  $\fbox{$\left\langle u,v\right\rangle \notin E\vee \left\langle
u,y\right\rangle \notin U\vee y=\varrho \vee \underset{x\in V}{\bigvee }%
\left\langle \left\langle u,v\right\rangle ,x\right\rangle \in L^{\text{%
\textsc{g}}}$}$

\item  $\fbox{$\left\langle u,\left\langle v,w\right\rangle \right\rangle
\notin S\vee \left( \left\langle \left\langle u,v\right\rangle
,x\right\rangle \in L^{\text{\textsc{g}}}\Leftrightarrow \left\langle
\left\langle u,w\right\rangle ,x\right\rangle \in L^{\text{\textsc{g}}%
}\right) $}$

\item  $\fbox{$v\notin D_{1}\vee \left( \left\langle \left\langle
u,v\right\rangle ,x\right\rangle \in L^{\text{\textsc{g}}}\Leftrightarrow 
\underset{z\in V}{\bigvee }\left\langle \left\langle v,z\right\rangle
,x\right\rangle \in L^{\text{\textsc{g}}}\right) $}$

\item  $\fbox{$u\in D_{1}\vee \left\langle x,u\right\rangle \notin E\vee 
\underset{v\in V}{\bigvee }\left\langle \left\langle u,v\right\rangle
,x\right\rangle \in L^{\text{\textsc{g}}}$}$

\item  $\fbox{$\left\langle \left\langle u,v\right\rangle ,\alpha
\right\rangle \notin L^{\text{\textsc{d}}}\vee \left\langle u,v\right\rangle
\in E$}$

\item  $\fbox{$\left\langle \left\langle \varrho ,v\right\rangle ,\alpha
\right\rangle \in L^{\text{\textsc{d}}}\Leftrightarrow \left\langle
v,\varrho \right\rangle \in \left( \rightarrow I\right) _{\alpha }$}$

\item  $\fbox{$u=\varrho \vee \left\langle u,z\right\rangle \notin U\vee
\left( \left\langle \left\langle u,v\right\rangle ,\alpha \right\rangle
\notin L^{\text{\textsc{d}}}\Leftrightarrow R\left( u,v,z,\alpha \right)
\right) $}$

It is easily provable by induction on $h\left( D\right) $ that the required
statement `$\partial $\emph{\ is an encoded dag-like} \textsc{NM}$%
_{\rightarrow }$ \emph{proof of }$\rho $' is equivalent to\ universal
conjunction $\left( \overrightarrow{\bigwedge }\right) 1\wedge \cdots \wedge
24\wedge 25$, where by the definition the last condition

\item  \fbox{$\underset{\left\langle u,v\right\rangle \in E}{\bigwedge }%
\underset{\alpha \in F}{\bigwedge }\left\langle v,h\left( D\right)
\!-\!1\right\rangle \notin H\vee \left\langle v,\alpha \right\rangle \notin
L^{\text{\textsc{f}}}\vee \left\langle \left\langle u,v\right\rangle ,\alpha
\right\rangle \in L^{\text{\textsc{d}}}$}

corresponds to $\Gamma _{\partial }=\emptyset $.
\end{enumerate}

The longest conjunct $24$ includes $\leq \mathcal{O\!}\left( \left| \partial
\right| ^{9}\right) $ many equations $\chi =\xi $ for $\chi ,\xi \in X$ and\
queries\ $\chi \in X$, $\chi \notin X$ \ for $X\in \left\{
H,E,L,D_{1},S,K,L^{\text{\textsc{f}}},L^{\text{\textsc{g}}},L^{\text{\textsc{%
d}}}\right\} $, while every query in question is verifiable (say, by binary
search algorithm) by a deterministic TM in $\leq \mathcal{O}\left( \log
\left| s\right| \right) $ time. Hence by any chosen polytime search and
verification algorithm the whole conjunction $\left( \overrightarrow{%
\bigwedge }\right) 1\wedge \cdots \wedge 25$ corresponding to `$\partial $%
\emph{\ is an encoded dag-like} \textsc{NM}$_{\rightarrow }$ \emph{proof of }%
$\rho $' is verifiable by a deterministic TM in $\left\| \partial \right\| $%
-polynomial time.
\end{proof}

\begin{corollary}
$\mathcal{NP=PSPACE}$, and hence $\mathcal{NP=\mathit{co}NP=PSPACE}$.
\end{corollary}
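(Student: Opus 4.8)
The plan is to deduce the Corollary from the $\mathcal{PSPACE}$-completeness of minimal propositional validity, quoted in the Introduction, by showing that this validity problem itself lies in $\mathcal{NP}$. Since $\mathcal{NP}\subseteq\mathcal{PSPACE}$ is already known, it suffices to establish the reverse inclusion $\mathcal{PSPACE}\subseteq\mathcal{NP}$; and because $\mathcal{NP}$ is closed under polynomial-time many-one reductions, it is enough to exhibit a single $\mathcal{PSPACE}$-complete language in $\mathcal{NP}$. I take $L:=\{\rho:\rho\text{ is a minimal tautology}\}$, which is $\mathcal{PSPACE}$-complete.

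To place $L$ in $\mathcal{NP}$ I must produce, for each $\rho\in L$, a short certificate that is checkable in polynomial time. By Conclusion 19 (equivalently, Corollary 12 together with Corollary 18), $\rho$ is a minimal tautology iff it has a plain dag-like \textsc{NM}$_{\rightarrow}$ proof whose size is $\mathcal{O}(|\rho|^{4})$ and whose weight is $\mathcal{O}(|\rho|^{5})$. By Lemma 21 this proof can be upgraded, at no cost in size, to an \emph{encoded} dag-like \textsc{NM}$_{\rightarrow}$ proof $\partial$ carrying the discharging labels $\ell^{\text{\textsc{d}}}$; and the explicit encoding of $\partial$ given in the proof of Lemma 21 yields a certificate string of length $\mathcal{O}(\|\partial\|^{3})$, hence polynomial in $|\rho|$. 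This string is the existential witness $u$ required by the definition of $\mathcal{NP}$.

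For the verifier $M(\rho,u)$ I would check that $u$ encodes an object meeting the upgraded local-correctness conditions $1$--$25$ of Lemma 21 and that its root is labeled $\rho$. By Lemma 21 this amounts to a boolean combination of $\mathcal{O}(|\partial|^{9})$ elementary equality and membership queries over the components of $u$, each answerable in $\mathcal{O}(\log|u|)$ time, so $M$ runs in time polynomial in $\|\partial\|$ and therefore in $|\rho|$. The correctness equivalence $\rho\in L\Leftrightarrow(\exists u)\,M(\rho,u)=1$ is precisely the soundness-and-completeness content of Corollary 18 combined with Lemma 21. Hence $L\in\mathcal{NP}$, so $\mathcal{PSPACE}\subseteq\mathcal{NP}$ and thus $\mathcal{NP}=\mathcal{PSPACE}$. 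Finally, since $\mathcal{PSPACE}$ is closed under complementation, $\mathcal{\mathit{co}NP}=\mathit{co}\mathcal{PSPACE}=\mathcal{PSPACE}=\mathcal{NP}$, which gives $\mathcal{NP}=\mathcal{\mathit{co}NP}=\mathcal{PSPACE}$.

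The decisive point — and the only place where real work is hidden — is the passage from a short proof to a short \emph{efficiently verifiable} one. The plain notion of dischargedness in Definition 5 is phrased through the thread set $\text{\textsc{th}}(e,\varrho,\partial)$, which can be exponentially large in a genuine dag, so a priori plain provability sits only in $\mathcal{NP}\cap\mathcal{\mathit{co}NP}$ rather than in $\mathcal{P}$. Everything therefore depends on Lemma 21: that the auxiliary labeling $\ell^{\text{\textsc{d}}}$ is uniquely determined by $\ell^{\text{\textsc{f}}}$ and $\ell^{\text{\textsc{g}}}$ and faithfully records whether an assumption is discharged, so that the global reachability condition $\Gamma_{\partial}=\emptyset$ is replaced by the local, inductively checkable rule $\bullet$. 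The crux is the induction on $h(v)$ showing that $\ell^{\text{\textsc{d}}}(e,\alpha)=0$ holds iff some $(\rightarrow I)_{\alpha}$-free thread reaches $e$ from $\varrho$; granting that equivalence, the size bookkeeping and the final collapse $\mathcal{NP}=\mathcal{\mathit{co}NP}=\mathcal{PSPACE}$ are routine.
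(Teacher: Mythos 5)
Your proposal is correct and follows essentially the same route as the paper: both reduce the claim to showing that the $\mathcal{PSPACE}$-complete minimal-validity problem lies in $\mathcal{NP}$, using Conclusion 19 for the existence of polynomial-size certificates and Lemma 21 for their polynomial-time verifiability, and both then obtain $\mathcal{NP}=\mathit{co}\mathcal{NP}$ from standard closure properties of $\mathcal{PSPACE}$. Your closing remark correctly isolates the same crux the paper relies on, namely that the encoded labeling $\ell^{\text{\textsc{d}}}$ replaces the potentially exponential thread-based discharging condition by a locally checkable one.
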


\begin{proof}
Recall that\ the validity problem for both intuitionistic and minimal
propositional logics is PSPACE-complete, cf. \cite{Statman}, \cite{Svejdar}, 
\cite{Savitch}. It will suffice to show that it is a NP problem. So consider
any given $\mathcal{L}_{\rightarrow }$ formula $\rho $. By Conclusion 19, $%
\rho $ is valid in the minimal logic iff there exists an encoded dag-like 
\textsc{NM}$_{\rightarrow }$ proof $\partial $ of $\rho $ of the size $%
\left| \partial \right| =\mathcal{O\!}\left( \left| \rho \right| ^{4}\right) 
$ and weight $\left\| \partial \right\| =\mathcal{O\!}\left( \left| \rho
\right| ^{5}\right) $. Moreover, by Lemma 21, the assertion `$\partial $%
\emph{\ is an encoded dag-like }\textsc{NM}$_{\rightarrow }$\emph{\ proof of 
}$\rho $' is verifiable by a deterministic TM $M$\ in polynomial time with
respect to $\left\| \partial \right\| $, and hence also $\left| \rho \right| 
$. Hence there exists a polytime TM $M$\ such that $\rho $\ is valid in the
minimal logic iff we can ``guess'' an encoded dag-like \textsc{NM}$%
_{\rightarrow }$ proof $\partial $ of the weight $\mathcal{O\!}\left( \left|
\rho \right| ^{5}\right) $\ and confirm its local correctness by $M$ in $%
\left| \rho \right| $-polynomial time. This shows that the underlying
problem of minimal validity belongs to $\mathcal{NP}$, as desired. The rest
follows from Sawitch's theorem \cite{Savitch}.
\end{proof}

\section{Appendix A: proof of Lemma 2 (4)}

A required loose upper bound $\mathrm{ssf}\left( \xi \right) \leq \left(
\left| \xi \right| +1\right) ^{2}$ is proved by induction on $\left| \xi
\right| $, as follows. Recall the recursive clauses 1--3:

\begin{enumerate}
\item  $\mathrm{ssf}\left( p\right) :=1.$

\item  $\mathrm{ssf}\left( p\rightarrow \alpha \right) :=2+\mathrm{ssf}%
\left( \alpha \right) .$

\item  $\mathrm{ssf}\left( \left( \alpha \rightarrow \beta \right)
\rightarrow \gamma \right) :=1+\mathrm{ssf}\left( \alpha \rightarrow \beta
\right) +\mathrm{ssf}\left( \beta \rightarrow \gamma \right) -\mathrm{ssf}%
\left( \beta \right) .$
\end{enumerate}

\begin{itemize}
\item  Basis of induction. Suppose $\left| \xi \right| =0$. Hence $\xi =p$
and $\mathrm{ssf}\left( \xi \right) =1=\left( \left| \xi \right| +1\right)
^{2}$, since $\left| p\right| =0$.

\item  Induction step. Suppose $\left| \xi \right| >0$. Hence $\xi =\alpha
\rightarrow \beta $.

\begin{itemize}
\item  If $\left| \alpha \right| =0$, then $\alpha =p$ and $\mathrm{ssf}%
\left( \xi \right) =2+\mathrm{ssf}\left( \beta \right) \underset{I.H.}{\leq }%
2+\left( \left| \beta \right| +1\right) ^{2}$

$<\left( \left| \beta \right| +2\right) ^{2}=\left( \left| \xi \right|
+1\right) ^{2}.$

\item  Otherwise $\alpha =\gamma \rightarrow \delta $ and $\xi =$ $\left(
\gamma \rightarrow \delta \right) \rightarrow \beta $. If $\left| \delta
\right| =0$, then $\delta =p$ and $\mathrm{ssf}\left( \xi \right) =1+\mathrm{%
ssf}\left( \alpha \right) +\mathrm{ssf}\left( p\rightarrow \beta \right) -%
\mathrm{ssf}\left( p\right) =2+\mathrm{ssf}\left( \alpha \right) +\mathrm{ssf%
}\left( \beta \right) $

$\underset{I.H.}{\leq }2+\left( \left| \alpha \right| +1\right) ^{2}+\left(
\left| \beta \right| +1\right) ^{2}<\left( \left| \alpha \right| +\left|
\beta \right| +1\right) ^{2}=\left( \left| \xi \right| +1\right) ^{2}.$

\item  Otherwise $\delta =\zeta \rightarrow \eta $ and $\xi =$ $\left(
\gamma \rightarrow \left( \zeta \rightarrow \eta \right) \right) \rightarrow
\beta $. If $\left| \eta \right| =0$, then $\eta =p$ and $\mathrm{ssf}\left(
\xi \right) =1+\mathrm{ssf}\left( \alpha \right) +\mathrm{ssf}\left( \left(
\zeta \rightarrow p\right) \rightarrow \beta \right) -\mathrm{ssf}\left(
\zeta \rightarrow p\right) $

$=2+\mathrm{ssf}\left( \alpha \right) +\mathrm{ssf}\left( p\rightarrow \beta
\right) -\mathrm{ssf}\left( p\right) =3+\mathrm{ssf}\left( \alpha \right) +%
\mathrm{ssf}\left( \beta \right) $

$\underset{I.H.}{\leq }3+\left( \left| \alpha \right| +1\right) ^{2}+\left(
\left| \beta \right| +1\right) ^{2}<\left( \left| \alpha \right| +\left|
\beta \right| +1\right) ^{2}=\left( \left| \xi \right| +1\right) ^{2}.$

\item  $\cdots \quad \cdots \quad \cdots \quad \cdots \quad \cdots \quad
\cdots \quad \cdots \quad \cdots \quad \cdots \quad \cdots \quad \cdots
\quad \cdots $

\item  Eventually we arrive at $\alpha =\gamma _{1}\rightarrow \cdots
\rightarrow \gamma _{n}\rightarrow p$ (right-associative) and $\mathrm{ssf}%
\left( \xi \right) =\mathrm{ssf}\left( \alpha \rightarrow \beta \right) =n+1+%
\mathrm{ssf}\left( \alpha \right) +\mathrm{ssf}\left( \beta \right) $

$\underset{I.H.}{\leq }n+1+\left( \left| \alpha \right| +1\right)
^{2}+\left( \left| \beta \right| +1\right) ^{2}<\left( \left| \alpha \right|
+\left| \beta \right| +1\right) ^{2}=\left( \left| \xi \right| +1\right)
^{2}.$
\end{itemize}
\end{itemize}

This completes the proof of Lemma 2 (4).

\section{Appendix B: Gilbert's example}

Let $\partial $ be a following tree-like \textsc{NM}$_{\rightarrow }$ proof
of $\xi \rightarrow s$, where 
\begin{equation*}
\xi =(p_{2}\rightarrow (p_{2}\rightarrow r)\rightarrow q\rightarrow
r)\rightarrow (p_{1}\rightarrow p_{1}\rightarrow (p_{1}\rightarrow
r)\rightarrow q\rightarrow r)\rightarrow s
\end{equation*}
for arbitrary formulas $p_{i}$, $q$, $r$, $s$ of basic minimal language $%
\mathcal{L}_{\rightarrow }$ and $v_{i}$, $u_{i}$, $x_{i}$ are crucial nodes
with formula-labels $\ell ^{\text{\textsc{f}}}\left( v_{1}\right) =\ell ^{%
\text{\textsc{f}}}\left( v_{2}\right) =r$, $\ell ^{\text{\textsc{f}}}\left(
u_{1}\right) =\ell ^{\text{\textsc{f}}}\left( u_{2}\right) =q\rightarrow r$
and $\ell ^{\text{\textsc{f}}}\left( x_{1}\right) =(p_{1}\rightarrow
r)\rightarrow q\rightarrow r\neq \ell ^{\text{\textsc{f}}}\left(
x_{2}\right) =(p_{2}\rightarrow r)\rightarrow q\rightarrow r$.\medskip

\hspace{-20pt} 
\begin{tikzpicture}
  \draw (0.6,2.5) node {
    \AxiomC{$y_1 :[p_1]$}
    \AxiomC{$z_1 :[p_1 \rightarrow r]$}
    \LeftLabel{$(\rightarrow E)$}
    \BinaryInfC{$v_1 : r$}
    \LeftLabel{$(\rightarrow I)$}
    \UnaryInfC{$u_1 : q \rightarrow r$}
    \LeftLabel{$(\rightarrow I)$}
    \UnaryInfC{$x_1 : (p_1 \rightarrow r) \rightarrow q \rightarrow r$}
    \LeftLabel{$(\rightarrow I)$}
    \UnaryInfC{$p_1 \rightarrow (p_1 \rightarrow r) \rightarrow q \rightarrow r$}
    \LeftLabel{$(\rightarrow I)$}
    \UnaryInfC{$p_1 \rightarrow p_1 \rightarrow (p_1 \rightarrow r) \rightarrow q \rightarrow r$}
    \AxiomC{$y_2 :[p_2]$}
    \AxiomC{$z_2 :[p_2 \rightarrow r]$}
    \RightLabel{$(\rightarrow E)$}
    \BinaryInfC{$v_2 : r$}
    \RightLabel{$(\rightarrow I)$}
    \UnaryInfC{$u_2 : q \rightarrow r$}
    \RightLabel{$(\rightarrow I)$}
    \UnaryInfC{$x_2 : (p_2 \rightarrow r) \rightarrow q \rightarrow r$}
    \RightLabel{$(\rightarrow I)$}
    \UnaryInfC{$p_2 \rightarrow (p_2 \rightarrow r) \rightarrow q \rightarrow r$}
    \AxiomC{$[\xi]$}
    \RightLabel{$(\rightarrow E)$}
    \BinaryInfC{$(p_1 \rightarrow p_1 \rightarrow (p_1 \rightarrow r) \rightarrow q \rightarrow r) \rightarrow s$}
    \RightLabel{$(\rightarrow E)$}
    \BinaryInfC{$s$}
    \RightLabel{$(\rightarrow I)$}
    \UnaryInfC{$\xi \rightarrow s$}
    \DisplayProof
  } ;
\end{tikzpicture}

Obviously all five assumptions $p_{1}$, $p_{2}$, $p_{1}\rightarrow r$, $%
p_{2}\rightarrow r$, $\alpha $\ are closed, while $\ell ^{\text{\textsc{g}}%
}\left( e\right) =\emptyset $ for every edge $e$, since $\partial $ is a
tree. Moreover, for any $i\in \left\{ 1,2\right\} $ and assumption $\alpha
\in \{\xi ,p_{i},(p_{i}\rightarrow r)\}$ we have (note that $q$ is not an
assumption in $\partial $):

\begin{enumerate}
\item  $\ell ^{d}(\langle x_{i},u_{i}\rangle ,\alpha )=1\Leftrightarrow $ $%
\alpha \in \{\xi ,p_{i},(p_{i}\rightarrow r)\}$,

\item  $\ell ^{d}(\langle u_{i},v_{i}\rangle ,\alpha )=1\Leftrightarrow
\alpha \in \{\xi ,p_{i},(p_{i}\rightarrow r)\}$.
\end{enumerate}

\vspace{10pt} Hence all assumptions are discharged in $\partial $. Now
consider the compressed dag $\partial ^{\text{\textsc{c}}}$ (for the sake of
brevity we drop redag-like repetitions of $\xi $):\newline

\hspace{-20pt} 
\begin{tikzpicture}
  \draw (0.6,2.5) node {
    \AxiomC{$y_1 :[p_1]$}
    \AxiomC{$z_1 :[p_1 \rightarrow r]$}
    \AxiomC{$y_2 :[p_2]$}
    \AxiomC{$z_2 :[p_2 \rightarrow r]$}
    \RightLabel{$(\rightarrow E)$}
    \QuaternaryInfC{$v : r$}
    \LeftLabel{$(\rightarrow I)$}
    \UnaryInfC{$u : q \rightarrow r$}
    \DisplayProof
  } ;
  \draw [->] (0,1.9) -- (2.5,1.1) ;
  \draw [->] (0,1.9) -- (-2.5,1.1) ;
  \draw (1,0) node {
    \AxiomC{}
    \LeftLabel{$(\rightarrow I)$}
    \UnaryInfC{$x_1 : (p_1 \rightarrow r) \rightarrow q \rightarrow r$}
    \LeftLabel{$(\rightarrow I)$}
    \UnaryInfC{$p_1 \rightarrow (p_1 \rightarrow r) \rightarrow q \rightarrow r$}
    \LeftLabel{$(\rightarrow I)$}
    \UnaryInfC{$p_1 \rightarrow p_1 \rightarrow (p_1 \rightarrow r) \rightarrow q \rightarrow r$}
    \AxiomC{}
    \RightLabel{$(\rightarrow I)$}
    \UnaryInfC{$x_2 : (p_2 \rightarrow r) \rightarrow q \rightarrow r$}
    \RightLabel{$(\rightarrow I)$}
    \UnaryInfC{$p_2 \rightarrow (p_2 \rightarrow r) \rightarrow q \rightarrow r$}
    \AxiomC{$[\xi]$}
    \RightLabel{$(\rightarrow E)$}
    \BinaryInfC{$(p_1 \rightarrow p_1 \rightarrow (p_1 \rightarrow r) \rightarrow q \rightarrow r) \rightarrow s$}
    \RightLabel{$(\rightarrow E)$}
    \BinaryInfC{$s$}
    \RightLabel{$(\rightarrow I)$}
    \UnaryInfC{$\xi \rightarrow s$}
    \DisplayProof
  } ;
\end{tikzpicture}

This time we have:

\begin{enumerate}
\item  $\ell ^{\text{\textsc{g}}}\left( \left\langle v,y_{i}\right\rangle
\right) =\ell ^{\text{\textsc{g}}}\left( \left\langle v,z_{i}\right\rangle
\right) =\left\{ x_{i}\right\} $,

\item  $\ell ^{\text{\textsc{g}}}\left( \left\langle u,v\right\rangle
\right) =\left\{ x_{1},x_{2}\right\} $,

\item  $\ell ^{d}(\langle x_{i},u\rangle ,\alpha )=1\Leftrightarrow $ $%
\alpha \in \{\xi ,p_{i},(p_{i}\rightarrow r)\}$,

\item  $\ell ^{d}(\langle u,v\rangle ,\alpha )=1\Leftrightarrow \alpha =\xi $%
,

\item  $\ell ^{d}(\langle v,y_{i}\rangle ,\alpha )=\ell ^{d}(\langle
x_{i},u\rangle ,\alpha )$,

\item  $\ell ^{d}(\langle v,z_{i}\rangle ,\alpha )=\ell ^{d}(\langle
x_{i},u\rangle ,\alpha )$.
\end{enumerate}

Thus $\ell ^{d}(\langle v,y_{i}\rangle ,\alpha )=\ell ^{d}(\langle
v,z_{i}\rangle ,\alpha )=1$ holds for every $i\in \left\{ 1,2\right\} $ and $%
\alpha \in \{\xi ,p_{i},(p_{i}\rightarrow r)\}$. Hence $\partial ^{\text{%
\textsc{c}}}$ is a dag-like \textsc{NM}$_{\rightarrow }$ proof of $\xi
\rightarrow s$, as expected. As compared to analogous compression from
Example 8, this current $\partial ^{\text{\textsc{c}}}$ allows only two
pairs of maximal deduction threads, which are separated at the lowest mutual
merge point $u$.

\section{Appendix C: Haeusler's example}

Consider the formulas: 1) $\eta =\alpha_{1}\rightarrow\alpha_{2}$, and 2) $%
\sigma_{k}=\alpha_{k-2}\rightarrow(\alpha_{k-1}\rightarrow\alpha_{k})$ for $%
k>2$. Note that $\alpha_{1}\rightarrow\alpha_{n}$ follows from $\eta ,\sigma
_{3},\ldots ,\sigma_{n}$ and\ the size of standard tree-like normal proof of
this statement exceeds $Fibonnacci(n)$. For $n=5$ we have the derivation
that is shown in Fig. ~\ref{huge}

Generally, for each $5\leq n$ we arrive at 
\begin{prooftree}
\def\defaultHypSeparation{\hskip .05in}
\def\ScoreOverang{1pt}
\AxiomC{$[\alpha_1]$}
\noLine
\UnaryInfC{$\eta$}
\noLine
\UnaryInfC{$\sigma_3,\ldots,\sigma_{n-1}$}
\noLine
\UnaryInfC{$\Pi_{n-1}$}
\noLine
\UnaryInfC{$\alpha_{n-1}$}
\AxiomC{$[\alpha_1]$}
\noLine
\UnaryInfC{$\eta$}
\noLine
\UnaryInfC{$\sigma_3,\ldots,\sigma_{n-2}$}
\noLine
\UnaryInfC{$\Pi_{n-2}$}
\noLine
\UnaryInfC{$\alpha_{k-2}$}
\AxiomC{$\alpha_{n-2}\imply (\alpha_{n-1}\imply \alpha_n)$}
\BinaryInfC{$\alpha_{n-1}\imply \alpha_n$}
\BinaryInfC{$\alpha_n$}
\UnaryInfC{$\alpha_1\imply \alpha_n$}
\end{prooftree}
\begin{eqnarray*}
l(\Pi _{2}) &=&1 \\
l(\Pi _{3}) &=&l(\Pi _{2})+1 \\
l(\Pi _{k}) &=&l(\Pi _{k-2})+l(\Pi _{k-1})+2
\end{eqnarray*}
\begin{equation*}
Fibonacci(n)\leq l(\Pi _{n})
\end{equation*}

\textbf{Towards polynomial representation}. \newline
Using (re)dags we compress our tree-like proofs by merging distinct
occurrences of identical formulas $\alpha _{3}$, $\alpha _{2}$, $\alpha _{1}$
as shown in Fig.~\ref{one},~\ref{two} and~\ref{three}.

\begin{figure*}[tbp]
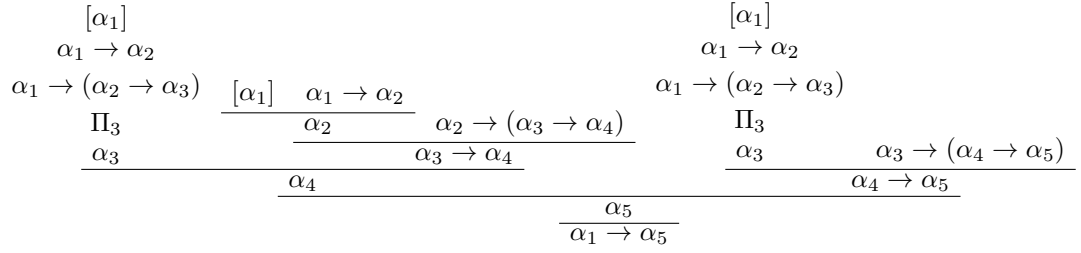

\begin{prooftree}
\def\defaultHypSeparation{\hskip .05in}
\def\ScoreOverang{1pt}
\AxiomC{$[\alpha_1]$}
\noLine
\UnaryInfC{$\alpha_1\imply \alpha_2$}
\noLine
\UnaryInfC{$\alpha_1\imply (\alpha_2\imply \alpha_3)$}
\noLine
\UnaryInfC{$\Pi_3$}
\noLine
\UnaryInfC{$\alpha_3$}
\AxiomC{$[\alpha_1]$}
\AxiomC{$\alpha_1\imply \alpha_2$}
\BinaryInfC{$\alpha_2$}
\AxiomC{$\alpha_2\imply(\alpha_3\imply \alpha_4)$}
\BinaryInfC{$\alpha_3\imply \alpha_4$}
\BinaryInfC{$\alpha_4$}
\AxiomC{$[\alpha_1]$}
\noLine
\UnaryInfC{$\alpha_1\imply \alpha_2$}
\noLine
\UnaryInfC{$\alpha_1\imply(\alpha_2\imply \alpha_3)$}
\noLine
\UnaryInfC{$\Pi_3$}
\noLine
\UnaryInfC{$\alpha_3$}
\AxiomC{$\alpha_3\imply(\alpha_4\imply \alpha_5)$}
\BinaryInfC{$\alpha_4\imply \alpha_5$}
\BinaryInfC{$\alpha_5$}
\UnaryInfC{$\alpha_1\imply \alpha_5$}
\end{prooftree}
\caption{A huge ND proof}
\label{huge}
\end{figure*}

{\tiny 
\begin{figure*}[tbp]
{\tiny 
\begin{prooftree}
\def\defaultHypSeparation{\hskip .05in}
\def\ScoreOverang{1pt}
\AxiomC{$[\alpha_1]$}
\AxiomC{$\alpha_1\imply \alpha_2$}
\BinaryInfC{$\alpha_2$}
\AxiomC{$\alpha_2\imply(\alpha_3\imply \alpha_4)$}
\BinaryInfC{$\alpha_3\imply \alpha_4$}
\AxiomC{$[\alpha_1]$}
\AxiomC{$\alpha_1\imply \alpha_2$}
\BinaryInfC{$\alpha_2$}
\AxiomC{$[\alpha_1]$}
\AxiomC{$\alpha_1\imply (\alpha_2\imply \alpha_3)$}
\BinaryInfC{$\alpha_2\imply \alpha_3$}
\BinaryInfC{\tikzmark{x}$\alpha_3$}
\noLine
\UnaryInfC{\;}
\noLine
\UnaryInfC{\;}
\noLine
\UnaryInfC{\tikzmark{y1}.}
\BinaryInfC{$\alpha_4$}
\AxiomC{\tikzmark{y2}.}
\AxiomC{$\alpha_3\imply(\alpha_4\imply \alpha_5)$}
\BinaryInfC{$\alpha_4\imply \alpha_5$}
\BinaryInfC{$\alpha_5$}
\UnaryInfC{$\alpha_1\imply \alpha_5$}
\end{prooftree}
\begin{tikzpicture}[remember picture, overlay]
  \draw [->] ({pic cs:x}) +(0,-.2\baselineskip) to ({pic cs:y1});
  \draw [->] ({pic cs:x}) +(0,-.2\baselineskip) to ({pic cs:y2});
\end{tikzpicture}
}
\caption{Horizontal compression (1)}
\label{one}
\end{figure*}
} {\tiny 
\begin{figure*}[tbp]
{\tiny 
\begin{prooftree}
\def\defaultHypSeparation{\hskip .05in}
\def\ScoreOverang{1pt}
\AxiomC{\tikzmark{we1}.}
\AxiomC{$\alpha_2\imply(\alpha_3\imply \alpha_4)$}
\BinaryInfC{$\alpha_3\imply \alpha_4$}
\AxiomC{$[\alpha_1]$}
\AxiomC{$\alpha_1\imply \alpha_2$}
\BinaryInfC{\tikzmark{ze}$\alpha_2$}
\noLine
\UnaryInfC{\;}
\noLine
\UnaryInfC{\;}
\noLine
\UnaryInfC{\tikzmark{we2}.}
\AxiomC{$[\alpha_1]$}
\AxiomC{$\alpha_1\imply (\alpha_2\imply \alpha_3)$}
\BinaryInfC{$\alpha_2\imply \alpha_3$}
\BinaryInfC{\tikzmark{xee}$\alpha_3$}
\noLine
\UnaryInfC{\;}
\noLine
\UnaryInfC{\;}
\noLine
\UnaryInfC{\tikzmark{yee1}.}
\BinaryInfC{$\alpha_4$}
\AxiomC{\tikzmark{yee2}.}
\AxiomC{$\alpha_3\imply(\alpha_4\imply \alpha_5)$}
\BinaryInfC{$\alpha_4\imply \alpha_5$}
\BinaryInfC{$\alpha_5$}
\UnaryInfC{$\alpha_1\imply \alpha_5$}
\end{prooftree}
\begin{tikzpicture}[remember picture, overlay]
  \draw [->] ({pic cs:xee}) +(0,-.2\baselineskip) to ({pic cs:yee1});
  \draw [->] ({pic cs:xee}) +(0,-.2\baselineskip) to ({pic cs:yee2});
  \draw [->] ({pic cs:ze}) +(0,-.2\baselineskip) to ({pic cs:we1});
  \draw [->] ({pic cs:ze}) +(0,-.2\baselineskip) to ({pic cs:we2});
\end{tikzpicture}
}
\caption{Horizontal compression (2)}
\label{two}
\end{figure*}
} {\tiny 
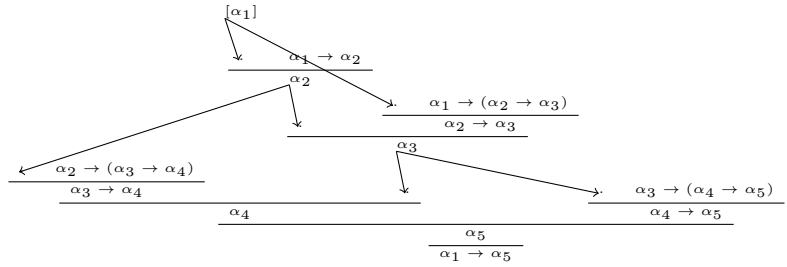
\begin{figure*}[tbp]
{\tiny 
\begin{prooftree}
\def\defaultHypSeparation{\hskip .05in}
\def\ScoreOverang{1pt}
\AxiomC{\tikzmark{w1}.}
\AxiomC{$\alpha_2\imply(\alpha_3\imply \alpha_4)$}
\BinaryInfC{$\alpha_3\imply \alpha_4$}
\AxiomC{\tikzmark{n}$[\alpha_1]$}
\noLine
\UnaryInfC{\;}
\noLine
\UnaryInfC{\;}
\noLine
\UnaryInfC{\tikzmark{m2}.}
\AxiomC{$\alpha_1\imply \alpha_2$}
\BinaryInfC{\tikzmark{z}$\alpha_2$}
\noLine
\UnaryInfC{\;}
\noLine
\UnaryInfC{\;}
\noLine
\UnaryInfC{\tikzmark{w2}.}
\AxiomC{\tikzmark{m1}.}
\AxiomC{$\alpha_1\imply (\alpha_2\imply \alpha_3)$}
\BinaryInfC{$\alpha_2\imply \alpha_3$}
\BinaryInfC{\tikzmark{xe}$\alpha_3$}
\noLine
\UnaryInfC{\;}
\noLine
\UnaryInfC{\;}
\noLine
\UnaryInfC{\tikzmark{ye1}.}
\BinaryInfC{$\alpha_4$}
\AxiomC{\tikzmark{ye2}.}
\AxiomC{$\alpha_3\imply(\alpha_4\imply \alpha_5)$}
\BinaryInfC{$\alpha_4\imply \alpha_5$}
\BinaryInfC{$\alpha_5$}
\UnaryInfC{$\alpha_1\imply \alpha_5$}
\end{prooftree}
\begin{tikzpicture}[remember picture, overlay]
  \draw [->] ({pic cs:xe}) +(0,-.2\baselineskip) to ({pic cs:ye1});
  \draw [->] ({pic cs:xe}) +(0,-.2\baselineskip) to ({pic cs:ye2});
  \draw [->] ({pic cs:z}) +(0,-.2\baselineskip) to ({pic cs:w1});
  \draw [->] ({pic cs:z}) +(0,-.2\baselineskip) to ({pic cs:w2});
  \draw [->] ({pic cs:n}) +(0,-.2\baselineskip) to ({pic cs:m1});
  \draw [->] ({pic cs:n}) +(0,-.2\baselineskip) to ({pic cs:m2});
\end{tikzpicture}
}
\caption{Horizontal compression (3)}
\label{three}
\end{figure*}
}

This procedure results in the plain dag-like proof shown in figure~\ref
{encode1} (afterwards encoded in Fig.~\ref{encode2}, see below), where we
assume that for every non-leaf node $x$, \textsc{s}$\left( x\right) $
contains all $x$'s\ children available, while for every downward-branching
node $u$, every $\ell ^{\text{\textsc{g}}}\left( \left\langle
u,v\right\rangle \right) $ contains all parents of $u$. Obviously this
dag-like deduction is smaller than its tree-like original. Generally, we
obtain dag-like deductions (not encoded yet) of $\alpha _{1}\rightarrow
\alpha _{n}$ , whose size is smaller than $\sum_{i=1,n}i$, i.e. $O(n^{2})$.
The corresponding encoded dag-like deduction of $\alpha _{1}\rightarrow
\alpha _{5}$ is shown in Fig.~\ref{encode2}, where a string of bits $%
b_{1}b_{2}\ldots b_{5}$ represents the discharging function $\ell ^{\text{%
\textrm{d}}}$. Namely, for any $e$ we let $\ell ^{d}(e,\xi _{i})=b_{i}$ iff $%
\xi _{i}$ is the $i^{th}$ assumption with respect to lexicographical order $%
\alpha _{1}\prec \alpha _{1}\rightarrow \alpha _{2}\prec \alpha
_{1}\rightarrow \left( \alpha _{2}\rightarrow \alpha _{3}\right) \prec
\alpha _{2}\rightarrow \left( \alpha _{3}\rightarrow \alpha _{4}\right)
\prec \alpha _{3}\rightarrow \left( \alpha _{4}\rightarrow \alpha
_{5}\right) $. Actually we always arrive at $b_{1}b_{2}\ldots b_{5}=10000$,
as $\xi _{i}=\alpha _{1}$ is the only closed assumption and it is discharged
by the root inference $\left( \rightarrow \text{\textsc{I}}\right) $. (For
brevity we don't expose labeling functions $\ell ^{\text{\textsc{f}}}$ and $%
\ell ^{\text{\textsc{G}}}$.)

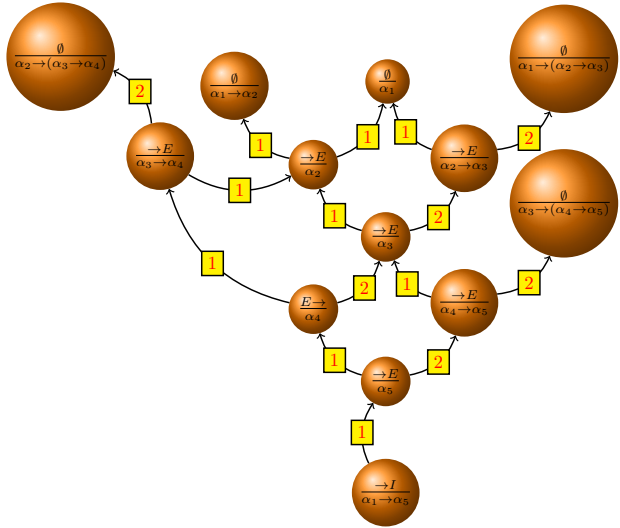
\begin{figure*}[tbp]
\begin{center}
\scalebox{.7}{
\begin{tikzpicture}[node distance = 2cm]
  \useasboundingbox (-4,-1) rectangle (11,11); 
  \tikzset{VertexStyle/.style = {shape          = circle,
                                 ball color     = orange,
                                 text           = black,
                                 inner sep      = 2pt,
                                 outer sep      = 0pt,
                                 minimum size   = 24 pt}}
  \tikzset{EdgeStyle/.style   = {thick,
                                 ->, bend left}}
  \tikzset{LabelStyle/.style =   {draw,
                                  fill           = yellow,
                                  text           = red}}
  \SetGraphUnit{5}
  \node[VertexStyle](A1ImpA5){$\frac{\imply I}{\alpha_1\imply \alpha_5}$};
  \node[VertexStyle,above= 1 cm of A1ImpA5](A5){$\frac{\imply E}{\alpha_5}$};
  \node[VertexStyle,above left= 1 cm of A5](A4){$\frac{E\imply}{\alpha_4}$};
  \node[VertexStyle,above right= 1 cm of A5](A4ImpA5){$\frac{\imply E}{\alpha_4\imply \alpha_5}$};
  \node[VertexStyle,above right= 1 cm of A4ImpA5](A3ImpA5){$\frac{\emptyset}{\alpha_3\imply(\alpha_4\imply \alpha_5)}$};
  \node[VertexStyle,above right= 1 cm of A4](A3){$\frac{\imply E}{\alpha_3}$};
  \node[VertexStyle,above right= 1 cm of A3](A2ImpA3){$\frac{\imply E}{\alpha_2\imply \alpha_3}$};
  \node[VertexStyle,above right= 1 cm of A2ImpA3](A1ImpA3){$\frac{\emptyset}{\alpha_1\imply(\alpha_2\imply \alpha_3)}$};
  \node[VertexStyle,above left= 1 cm of A2ImpA3](A1){$\frac{\emptyset}{\alpha_1}$};
  \node[VertexStyle,above left= 1 cm of A3](A2){$\frac{\imply E}{\alpha_2}$};
  \node[VertexStyle,above left= 1 cm of A2](A1ImpA2){$\frac{\emptyset}{\alpha_1\imply \alpha_2}$};
  \node[VertexStyle,above left= 3 cm of A4](A3ImpA4){$\frac{\imply E}{\alpha_3\imply \alpha_4}$};
  \node[VertexStyle,above left= 1 cm of A3ImpA4](A2ImpA4){$\frac{\emptyset}{\alpha_2\imply(\alpha_3\imply \alpha_4)}$};
  \draw[EdgeStyle](A1ImpA5) to node[LabelStyle]{1} (A5);
  \draw[EdgeStyle](A5) to node[LabelStyle]{1} (A4);
  \tikzset{EdgeStyle/.append style = {bend right}}
  \draw[EdgeStyle](A5) to node[LabelStyle]{2} (A4ImpA5);
  \draw[EdgeStyle](A4ImpA5) to node[LabelStyle]{2} (A3ImpA5);
  \draw[EdgeStyle](A4) to node[LabelStyle]{2} (A3);
  \tikzset{EdgeStyle/.append style = {bend left}};
  \draw[EdgeStyle](A4ImpA5) to node[LabelStyle]{1} (A3);
  \draw[EdgeStyle](A4) to node[LabelStyle]{1} (A3ImpA4);
  \tikzset{EdgeStyle/.append style = {bend right}}
  \draw[EdgeStyle](A3ImpA4) to node[LabelStyle]{2} (A2ImpA4);
  \draw[EdgeStyle](A3) to node[LabelStyle]{2} (A2ImpA3);
  \draw[EdgeStyle](A2ImpA3) to node[LabelStyle]{2} (A1ImpA3);
  \draw[EdgeStyle](A2) to node[LabelStyle]{1} (A1);
  \draw[EdgeStyle](A3ImpA4) to node[LabelStyle]{1} (A2);
  \tikzset{EdgeStyle/.append style = {bend left}};
  \draw[EdgeStyle](A3) to node[LabelStyle]{1} (A2);
  \draw[EdgeStyle](A2) to node[LabelStyle]{1} (A1ImpA2);
  \draw[EdgeStyle](A2ImpA3) to node[LabelStyle]{1} (A1);
\end{tikzpicture}
}
\end{center}
\caption{Encoding the dag-like proof (1)}
\label{encode1}
\end{figure*}

\begin{figure*}[tbp]
\begin{center}
\scalebox{.7}{
\begin{tikzpicture}[node distance = 2cm]
  \useasboundingbox (-4,-1) rectangle (11,11); 
  \tikzset{VertexStyle/.style = {shape          = circle,
                                 ball color     = orange,
                                 text           = black,
                                 inner sep      = 2pt,
                                 outer sep      = 0pt,
                                 minimum size   = 24 pt}}
  \tikzset{EdgeStyle/.style   = {thick,
                                 ->, bend left}}
  \tikzset{LabelStyle/.style =   {draw,
                                  fill           = yellow,
                                  text           = red}}
  \SetGraphUnit{5}
  \node[VertexStyle](A1ImpA5){$\frac{\imply I}{\alpha_1\imply \alpha_5}$};
  \node[VertexStyle,above= 1 cm of A1ImpA5](A5){$\frac{\imply E}{\alpha_5}$};
  \node[VertexStyle,above left= 1 cm of A5](A4){$\frac{E\imply}{\alpha_4}$};
  \node[VertexStyle,above right= 1 cm of A5](A4ImpA5){$\frac{\imply E}{\alpha_4\imply \alpha_5}$};
  \node[VertexStyle,above right= 1 cm of A4ImpA5](A3ImpA5){$\frac{\emptyset}{\alpha_3\imply(\alpha_4\imply \alpha_5)}$};
  \node[VertexStyle,above right= 1 cm of A4](A3){$\frac{\imply E}{\alpha_3}$};
  \node[VertexStyle,above right= 1 cm of A3](A2ImpA3){$\frac{\imply E}{\alpha_2\imply \alpha_3}$};
  \node[VertexStyle,above right= 1 cm of A2ImpA3](A1ImpA3){$\frac{\emptyset}{\alpha_1\imply(\alpha_2\imply \alpha_3)}$};
  \node[VertexStyle,above left= 1 cm of A2ImpA3](A1){$\frac{\emptyset}{\alpha_1}$};
  \node[VertexStyle,above left= 1 cm of A3](A2){$\frac{\imply E}{\alpha_2}$};
  \node[VertexStyle,above left= 1 cm of A2](A1ImpA2){$\frac{\emptyset}{\alpha_1\imply \alpha_2}$};
  \node[VertexStyle,above left= 3 cm of A4](A3ImpA4){$\frac{\imply E}{\alpha_3\imply \alpha_4}$};
  \node[VertexStyle,above left= 1 cm of A3ImpA4](A2ImpA4){$\frac{\emptyset}{\alpha_2\imply(\alpha_3\imply \alpha_4)}$};
  \draw[EdgeStyle](A1ImpA5) to node[LabelStyle]{$\frac{1}{10000}$} (A5);
  \draw[EdgeStyle](A5) to node[LabelStyle]{$\frac{1}{10000}$} (A4);
  \tikzset{EdgeStyle/.append style = {bend right}}
  \draw[EdgeStyle](A5) to node[LabelStyle]{$\frac{2}{10000}$} (A4ImpA5);
  \draw[EdgeStyle](A4ImpA5) to node[LabelStyle]{$\frac{2}{10000}$} (A3ImpA5);
  \draw[EdgeStyle](A4) to node[LabelStyle]{$\frac{2}{10000}$} (A3);
  \tikzset{EdgeStyle/.append style = {bend left}};
  \draw[EdgeStyle](A4ImpA5) to node[LabelStyle]{$\frac{1}{10000}$} (A3);
  \draw[EdgeStyle](A4) to node[LabelStyle]{$\frac{1}{10000}$} (A3ImpA4);
  \tikzset{EdgeStyle/.append style = {bend right}}
  \draw[EdgeStyle](A3ImpA4) to node[LabelStyle]{$\frac{2}{10000}$} (A2ImpA4);
  \draw[EdgeStyle](A3) to node[LabelStyle]{$\frac{2}{10000}$} (A2ImpA3);
  \draw[EdgeStyle](A2ImpA3) to node[LabelStyle]{$\frac{2}{10000}$} (A1ImpA3);
  \draw[EdgeStyle](A2) to node[LabelStyle]{$\frac{1}{10000}$} (A1);
  \draw[EdgeStyle](A3ImpA4) to node[LabelStyle]{$\frac{1}{10000}$} (A2);
  \tikzset{EdgeStyle/.append style = {bend left}};
  \draw[EdgeStyle](A3) to node[LabelStyle]{$\frac{1}{10000}$} (A2);
  \draw[EdgeStyle](A2) to node[LabelStyle]{$\frac{1}{10000}$} (A1ImpA2);
  \draw[EdgeStyle](A2ImpA3) to node[LabelStyle]{$\frac{1}{10000}$} (A1);
\end{tikzpicture}
}
\end{center}
\caption{Horizontal compression (2)}
\label{encode2}
\end{figure*}

\end{document}